\date{}
\begin{document}

\title{A Parameterized Approximation Scheme for {\sc Min $k$-Cut}}
\author{
Daniel Lokshtanov\thanks{University of California, Santa Barbara, USA. Emails: \texttt{daniello@ucsb.edu}, \texttt{vaishali@ucsb.edu}}
 \and Saket Saurabh\thanks{The Institute of Mathematical Sciences, HBNI, Chennai, India, and University of Bergen, Norway. \texttt{saket@imsc.res.in}}
 \and Vaishali Surianarayanan\footnotemark[1]
}
\maketitle

\begin{abstract}
In the {\sc Min $k$-Cut} problem, input is an edge weighted graph $G$ and an integer $k$, and 
the task is to partition the vertex set into $k$ non-empty sets, such that the total weight of the edges with 
endpoints in different parts is minimized.  When $k$ is part of the input, the problem is NP-complete and hard to approximate within any factor less than $2$. Recently, the problem has received significant attention from the perspective of parameterized approximation.  Gupta {\em et al.} [SODA 2018] initiated the study of FPT-approximation for  the {\sc Min $k$-Cut} problem and  gave an $1.9997$-approximation algorithm running in time $2^{\cO(k^6)}n^{\cO(1)}$. Later, the same set of authors~[FOCS 2018]  designed an $(1 +\epsilon)$-approximation algorithm that runs in time $(k/\epsilon)^{\cO(k)}n^{k+\cO(1)}$, and a $1.81$-approximation algorithm  running in time $2^{\cO(k^2)}n^{\cO(1)}$. More, recently, Kawarabayashi and Lin~[SODA 2020] gave a $(5/3 + \epsilon)$-approximation for {\sc Min $k$-Cut} running in time  $2^{\cO(k^2 \log k)}n^{\cO(1)}$. 

In this paper we give a parameterized approximation algorithm with best possible approximation guarantee, and best possible running time dependence on said guarantee (up to Exponential Time Hypothesis (\ETH) and constants in the exponent). In particular, for every 
$\epsilon > 0$, the algorithm obtains a $(1 +\epsilon)$-approximate solution in time $(k/\epsilon)^{\cO(k)}n^{\cO(1)}$. The main ingredients of our algorithm are: a simple sparsification procedure, a new polynomial time algorithm for decomposing a graph into highly connected parts, and a new exact algorithm with running time $s^{\cO(k)}n^{\cO(1)}$ on 
unweighted (multi-) graphs. Here, $s$ denotes the number of edges in a minimum $k$-cut. The latter two are of independent interest.

\end{abstract}

\newpage
\pagestyle{plain}
\setcounter{page}{2}

\section{Introduction}
\label{section:Introduction}

In this article we focus on the \mkk problem from the perspective of parameterized approximation. Here input is an edge weighted graph $G$ and an integer $k$, and  the task is to partition the vertex set into $k$ non-empty sets, say $\tilde{P}$, such that the total weight of the edges with 
endpoints in different parts is minimized.  We call such a partition as {\em optimal $k$-cut},  or {\em minimum $k$-cut} or simply a {\em $k$-cut}. 
 For $k=2$, this is just the classic {\sc Global Min-Cut} problem, which can be solved in polynomial time. In fact, for every fixed $k$, the problem is known to be polynomial time solvable~\cite{GoldschmidtH94}. However, when $k$ is part of the input, the problem is \NPC~\cite{GoldschmidtH94}.

Traditionally, the problem has been extensively studied from three perspectives: (a) exact algorithms for small values of $k$;  (b) 
combinatorial upper bound on the number of minimum $k$-cuts; and (c) approximation algorithms.  
In particular, already in early 90's, Goldschmidt and Hochbaum~\cite{GoldschmidtH94} gave the first polynomial-time algorithm for fixed $k$, with running time $\cO(n^{(0.5-o(1))k^2})$. In 1996, Karger and Stein~\cite{KargerS96} obtained their ingenious {\em contraction algorithm}, a randomized algorithm with running time $\tilde{\cO}(n^{2(k-1)})$\footnote{$\tilde{\cO}$ hides the poly-logarithimic factor in the running time.}. The same algorithm immediately yields a proof that the number of $k$-cuts of minimum weight in any undirected graph on $n$ vertices is upper bounded by  $\tilde{\cO}(n^{2(k-1)})$. In 2008, Thorup~\cite{Thorup08} obtained a deterministic algorithm that essentially matched the running time of Karger and Stein~\cite{KargerS96} and in 2019, Chekuri et al.~\cite{DBLP:journals/siamdm/ChekuriQX20} used a LP based approach to improve the running time by a factor of $n$. 
%
Thorup's algorithm is based on a tree-packing theorem which allows to efficiently find a tree that crosses the optimal $k$-cut at most $2k-2$ times. Then the algorithm simply enumerates over all possible sets of $2k-2$ edges of this tree and all ways of merging the components of the tree minus these $2k-2$ edges into $k$ non-empty groups. 
From the perspective of approximation algorithms there are several $2(1-\frac{1}{k})$-approximation algorithms, that run in time polynomial in $n$ and $k$~\cite{NaorR01,ravi2008approximating,SaranV95}. This approximation ratio cannot be improved assuming the Small Set Expansion Hypothesis (\SSE)~\cite{Manurangsi18}. 

Thus, as of early 2018, the status was as follows. The direction of polynomial time approximation algorithms was essentially settled, with factor $2(1-\frac{1}{k})$ approximation algorithms and matching lower bounds. The fastest algorithm for small $k$ had running time $\tilde{\cO}(n^{2(k-1)})$, and a $f(k)n^{o(k)}$ lower bound on the running time~\cite{DBLP:journals/entcs/DowneyEFPR03,DBLP:books/sp/CyganFKLMPPS15} was known under the \ETH. It remained an interesting research direction to find an algorithm with an $n^{ck}$ running time and prove that the constant $c$ is optimal under reasonable complexity assumptions. The best upper bound on the number of minimum $k$-cuts was  $\tilde{\cO}(n^{2(k-1)})$, while the best lower bound was essentially $\Omega((n/(k-1)^{k-1})$, what you obtain from the complete $(k-1)$-partite graph. Since early 2018 the problem has seen a remarkable level of activity~\cite{DBLP:journals/corr/abs-1810-06864,DBLP:conf/focs/GuptaLL18,DBLP:conf/soda/GuptaLL18,DBLP:conf/stoc/GuptaLL19,DBLP:conf/focs/Li19,DBLP:journals/corr/abs-1906-00417,DBLP:conf/soda/KawarabayashiL20}, with a new research direction of parameterized approximation being initiated, and the two remaining established research directions (exact algorithms and combinatorial bounds) essentially being completely resolved.

For exact algorithms for small values of $k$, Gupta et al.~\cite{DBLP:conf/focs/GuptaLL18} made the first improvement in two decades, by designing an  algorithm with running time $\cO(k^{\cO(k)}n^{(\frac{2\omega}{3}+o(1))k})$ for graphs with polynomial integer weights.  Subsequently, for unweighted graphs, Li~\cite{DBLP:conf/focs/Li19} obtained an algorithm with running time $\cO(k^{\cO(k)}n^{(1+o(1)k})$. 
For the original formulation of the problem (with general integer weights) Gupta et al.~\cite{DBLP:conf/stoc/GuptaLL19} showed an $\cO(n^{(1.98+o(1))k})$-time algorithm. Their algorithm also implied a new and improved $\cO(n^{(1.98+o(1))k})$ upper bound on the number of minimum $k$-cuts. Finally Gupta et al.~\cite{DBLP:journals/corr/abs-1906-00417} fully resolved the problem by showing that for every fixed $k\geq 2$, the Karger-Stein algorithm~\cite{KargerS96}  outputs any fixed minimum $k$-cut with probability at least $\widehat{\cO}(n^{-k})$, where $\widehat{\cO}(\cdot)$ hides a $2^{\cO(\ln \ln n)^2}$ factor. This immediately gives an extremal bound of $ \widehat{\cO}(n^{k})$, on the number of minimum $k$-cuts in an $n$-vertex graph and an algorithm for \mkk in similar running time. Both the extremal bound and the running time of the algorithm are essentially tight (under reasonable assumptions). Indeed the extremal bound matches known lower bounds up to $\widehat{\cO}(1)$ factors, while any further improvement to the exact algorithm would imply an improved algorithm for {\sc Max-Weight $k$-Clique}~\cite{AbboudWW14,DBLP:conf/icml/BackursT17}, which has been conjectured not to exist.

The lower bound of $(2 - \epsilon)$ on polynomial time approximation algorithms~\cite{Manurangsi18}, coupled with the $f(k)n^{o(k)}$ running time lower bound for exact algorithms~\cite{DBLP:journals/entcs/DowneyEFPR03,DBLP:books/sp/CyganFKLMPPS15} naturally leads to the question of parameterized approximation: how good approximation ratio can we achieve if we allow the algorithm to have running time $f(k)n^{\cO(1)}$? In 2018, Gupta et al.~\cite{DBLP:conf/soda/GuptaLL18} were the first to ask this question, and showed that relaxing the running time requirement from polynomial to $f(k)n^{\cO(1)}$ does lead to an improved approximation guarantee, by giving an $1.9997$-approximation algorithm for \mkk running in time $2^{\cO(k^6)}n^{\cO(1)}$. 
%
%
%
%
%
%
%
%
%
%
Subsequently the same set of authors~\cite{DBLP:conf/focs/GuptaLL18} designed an $(1 +\epsilon)$-approximation algorithm that runs in time $(k/\epsilon)^{\cO(k)}n^{k+\cO(1)}$, and a $1.81$-approximation algorithm  running in time $2^{\cO(k^2)}n^{\cO(1)}$. More recently, Kawarabayashi and Lin~\cite{DBLP:conf/soda/KawarabayashiL20} gave a $(5/3 + \epsilon)$-approximation for {\sc Min $k$-Cut} running in time  $2^{\cO(k^2 \log k)}n^{\cO(1)}$. In  this paper we bring the direction of parameterized approximation to its natural conclusion by giving an algorithm with best possible approximation guarantee, and best possible running time dependence on said guarantee (up to \ETH and constants in the exponent).  In particular, for every 
$\epsilon > 0$, the algorithm obtains a $(1 +\epsilon)$-approximate solution in time $(k/\epsilon)^{\cO(k)}n^{\cO(1)}$.

\begin{theorem}
\label{thm:mainFPTAS}
There exists a randomized algorithm that given a graph $G$, weight function $w : V(G) \rightarrow \mathbb{Q}_{\geq 0}$, integer $k$, and an $\epsilon > 0$, runs in time $(k/\epsilon)^{\cO(k)}n^{\cO(1)}$ and outputs a partition $\tilde{P}$ of $V(G)$ into $k$ non-empty parts. With probability at least $\prup$ the weight of $\tilde{P}$ is at most $(1+\epsilon)$ times the weight of an optimum $k$-cut of $G$. By weight of $\tilde{P}$, we mean the total weight of the edges with endpoints in different parts. 
\end{theorem}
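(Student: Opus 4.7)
The plan is to combine the three ingredients flagged in the abstract (sparsification, high-connectivity decomposition, and the $s^{\cO(k)}n^{\cO(1)}$ exact algorithm on unweighted multigraphs) into a single pipeline. After guessing the scale of $\opt$ (by trying $O(\log n)$ powers of $(1+\epsilon)$), I would transform $(G,w)$ into an unweighted multigraph $H'$ whose minimum $k$-cut uses only $s=(k/\epsilon)^{O(1)}$ edges, and then invoke the exact algorithm on $H'$. Finally, pulling the partition of $V(H')$ back through the contractions gives a $k$-cut of $G$ whose weight is at most $(1+\epsilon)\opt$.

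Step one, sparsification: given a guess $W$ within a factor $2$ of $\opt$, sample each edge independently with probability proportional to $w_e\cdot (k/\epsilon)^{2}/W$ (up to poly-logarithmic factors), keeping appropriately many unit-weight parallel copies in a multigraph $H$. Using the fact that an $n$-vertex graph contains only $n^{O(k)}$ near-minimum $k$-cuts (via the Karger--Stein / Chekuri et al.\ bounds mentioned earlier), a Chernoff plus union bound argument shows that with probability at least $\prup$, every $k$-cut of $G$ is preserved in $H$ within a $(1\pm\epsilon/3)$ factor after rescaling; in particular the minimum $k$-cut of $H$ uses $O((k/\epsilon)^{2}\log n)$ unit edges.

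Step two, decomposition and contraction: apply the promised polynomial-time routine to partition $V(H)$ into parts $V_1,\ldots,V_t$ such that each induced subgraph $H[V_i]$ has edge-connectivity at least $\lambda$, where $\lambda$ is set to slightly exceed the value of the minimum $k$-cut in $H$. The crucial structural observation is then that \emph{every} near-optimum $k$-cut of $H$ respects this partition, because splitting any $V_i$ would cost at least $\lambda$, which already exceeds the target cut value; hence contracting each $V_i$ to a single vertex yields a multigraph $H'$ whose minimum $k$-cut has exactly the same value, but which sits on the highly-connected skeleton where the number of crossing edges is $s=(k/\epsilon)^{O(1)}$.

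Step three, exact solve: feed $H'$ into the new $s^{\cO(k)}n^{\cO(1)}$ unweighted multigraph algorithm; substituting $s=(k/\epsilon)^{O(1)}$ gives total running time $(k/\epsilon)^{\cO(k)}n^{\cO(1)}$ as required, and the returned partition decontracts to a partition of $V(G)$ whose weight, by the approximate cut-preservation of step one, is at most $(1+\epsilon)\opt$. The main obstacle I anticipate is calibrating the sparsification rate and the connectivity threshold $\lambda$ simultaneously: the sampled multigraph $H$ must retain the cut structure of $G$ while also admitting a decomposition whose contraction provably shrinks the crossing-edge count from $\Theta(m)$ down to $\mathrm{poly}(k/\epsilon)$. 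Making these two steps compatible—and showing that the concentration in step one is strong enough to survive a union bound over all partitions that might be respected by some near-optimum $k$-cut—is where the actual technical work concentrates, and it explains why the decomposition algorithm (ingredient two) had to be developed in the first place rather than simply invoking classical Gomory--Hu style tools.
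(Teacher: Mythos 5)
Your step two contains the decisive gap. You treat the ``decomposition into highly connected parts'' as a preprocessing/contraction step: partition $V(H)$ into parts of internal edge-connectivity $\lambda>\opt(H,k)$, contract them, and claim the resulting multigraph has only $s=(k/\epsilon)^{\cO(1)}$ crossing edges. While it is true that no (near-)optimal $k$-cut splits a part of such a decomposition, the quantitative claim is false: the number of edges between the parts is not bounded by any function of $k/\epsilon$ or of $\opt(H,k)$. After sparsification $H$ is sparse, so typically \emph{every} part is a single vertex (e.g.\ if $H$ is close to a long cycle or a tree, no two vertices are $\lambda$-edge-connected for $\lambda>2$), the contraction does nothing, and the ``skeleton'' still has $\Theta(|E(H)|)$ edges. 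The confusion is between the number of crossing edges of a decomposition and the parameter $s$ of Theorem~\ref{thm:paraAlg}, which is the \emph{weight of the minimum $k$-cut}: the exact algorithm runs in time $s^{\cO(k)}n^{\cO(1)}$ with $n$ the full graph size, so no contraction is needed at all --- one simply runs it on the sparsified graph with $s=\cO(k\log n/\epsilon^{3})$. In the paper the highly-connected (unbreakable) decomposition is an ingredient \emph{inside} the proof of Theorem~\ref{thm:paraAlg}, not an instance-shrinking step. Relatedly, you silently drop the $\log n$: even if your sparsification works, the cut value is $\cO((k/\epsilon)^{2}\log n)$, not $(k/\epsilon)^{\cO(1)}$, so you must still argue $(\log n)^{\cO(k)}\leq k^{\cO(k)}+n^{\cO(1)}$ (as the paper does) to reach the stated running time.

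Your step one also deviates from the paper in a way that leaves a hole. You sample at a rate tied to $\opt(G,k)$ and want a union bound over all $k$-cuts via ``near-minimum $k$-cut'' counting; for this you need a weight-sensitive bound of the form ``the number of $k$-cuts of weight at most $\alpha\cdot\opt(G,k)$ is $n^{\cO(\alpha k)}$ for every $\alpha\geq 1$,'' which you do not state or prove, and without which the union bound over cuts of all weights does not go through. The paper takes a different route precisely to avoid this: it first greedily deletes all non-trivial $2$-cuts of weight below $\frac{\epsilon\cdot\opt(G,k)}{k-1}$ (Lemma~\ref{RandLemma1}), charging at most $2\epsilon\cdot\opt(G,k)$ to the solution, so that afterwards the minimum $2$-cut is comparable to $\epsilon\cdot\opt(G,k)/k$; then sampling at rate $\Theta(\log n/(\epsilon^{2}\opt(G,2)))$ lets Karger's classical bound of $n^{2\alpha}$ on $\alpha$-minimum $2$-cuts carry the union bound (Lemma~\ref{RandLemma2}), and derives preservation of all $k$-cuts from preservation of the $2$-cuts $(P_i,V\setminus P_i)$. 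Your proposal omits this $2$-cut cleaning step entirely, so either you must supply the $k$-cut counting lemma your union bound needs, or restore the greedy step; as written, neither the sparsification guarantee nor the final running time is established.
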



\paragraph{Overview of the Algorithm.}
A preliminary step of our algorithm is to run the $2$-approximation algorithm~\cite{NaorR01,ravi2008approximating,SaranV95} to get an estimate of the value of $\opt(G,k)$ (the weight of an optimal $k$-cut in $G$). A standard rounding procedure (similar to the well-known Knapsack PTAS~\cite{DBLP:books/daglib/0015106}) reduces the problem in a $(1+\epsilon)$-approximation preserving manner to unweighted multi-graphs with at most $m^2/\epsilon$ edges. From now on, {\em throughout this paper}, we will assume that our input graph is an {\em unweighted multigraph}. As long as the graph has a non-trivial $2$-cut (non-trivial means that there is at least one edge crossing the cut) of weight at most 
$\frac{\epsilon \cdot  \opt(G,k)}{k-1}$, we remove all edges of this $2$-cut. This step will be repeated less than $k-1$ times, since if it is repeated $k-1$ times we have cut the graph into at least $k$ connected components using at most  $\epsilon \cdot \opt(G,k)$ edges. Thus the procedure stops before this, and at that point we know that (i) we have included at most  $\epsilon \cdot \opt(G,k)$ edges in our solution, and (ii) the remaining graph has no non-trivial $2$-cut with at most $\epsilon  \cdot \opt(G,k)$ edges. We now solve the problem independently on each connected component, after guessing how many parts each component is split into (this introduces a $4^k$ overhead in the running time). Thus from now on we will assume that our input graph is connected and has no non-trivial $2$-cut of weight at most $\frac{\epsilon \cdot  \opt(G,k)}{k-1}$.

Having ensured that every non-trivial $2$-cut has weight at least $\epsilon \cdot \frac{ \opt(G,k)}{k-1}$ we proceed as follows. We set a probability $p = \frac{100 \log n}{\opt(G,2) \cdot \epsilon^2}$. Then for every edge of $G$ we flip a biased coin and keep the edge with probability $p$ and delete it with probability $1-p$. Call the resulting subgraph $G'$. A relatively simple probability computation (very similar to that of randomized cut sparsifiers~\cite{BenczurK15,Karger94}) combining Chernoff bounds with Karger and Stein's~\cite{KargerS96} bound on the number of minimum $2$-cuts shows that with high probability it holds that for {\em every} partition $\tilde{P}$ of $V(G) = V(G')$ into $k$ parts, the number of edges of $G'$ crossing the partition $\tilde{P}$ is within a $1 \pm \epsilon$ factor of $p$ times the number of edges crossing it in $G$. Thus, for purposes of $(1+\epsilon)$-approximation  one may just as well find a $(1+\epsilon)$-approximate solution in $G'$. However, in $G'$ we know that 
$$\opt(G',k) \simeq p \cdot \opt(G,k) =  \frac{100 \log n}{\opt(G,2) \cdot \epsilon^2} \cdot \opt(G,k) \leq \frac{k \cdot 100 \log n}{\epsilon^3}.$$
In the last transition we used the assumption that $\opt(G,2) \geq \epsilon \cdot \opt(G,k)/k-1$. This is a pretty small number of cut edges, which naturally leads to the idea of trying to apply parameterized algorithms with parameter $s$, the number of edges in the optimal $k$-cut. 

The \mkk problem is also quite well studied with $s = \opt(G,k)$ as a parameter. A brute force algorithm trying all edge sets of size $s$ solves the problem in time $\cO(n^{2s+\cO(1)})$. Marx~\cite{DBLP:journals/tcs/Marx06} posed as an open problem in 2004 whether \mkk is {\em fixed parameter tractable} when parameterized by $s$, that is whether the problem admits an algorithm with running time $f(s)n^{\cO(1)}$ for any function $f$. Kawarabayashi and Thorup~\cite{KT11} resolved this problem in the affirmative, and obtained an algorithm with running time $s^{s^{\cO(s)}}n^2$. Chitnis et al.~\cite{randcontr} gave an algorithm with running time $s^{\cO(s^2)}n^{\cO(1)}$, improving the dependence on $s$ from double exponential to single exponential.  Finally, Cygan et al.~\cite{DBLP:journals/corr/abs-1810-06864} showed that the problem is solvable in time $s^{\cO(s)} n^{\cO(1)}$. 
    Unfortunately, applying the algorithm of Cygan et al.~\cite{DBLP:journals/corr/abs-1810-06864} directly on $G'$ yields an algorithm with running time $(\frac{k\log n}{\epsilon^3})^{\cO(\frac{k\log n}{\epsilon^3})}$ which grows super-polynomially with $n$, even for constant $\epsilon$ and $k$. Even if one obtains a parameterized algorithm for \mkk with running time $2^{\cO(s)} n^{\cO(1)}$ (which is an interesting open problem in  itself), this would only lead to a $2^{\cO(\frac{k\log n}{\epsilon^3})} = n^{\cO(\frac{k}{\epsilon^3})}$ time algorithm, which is slower than the classic exact $n^{2(k-1)}$ time algorithm of Karger and Stein~\cite{KargerS96}. Thus, at a glance, the parameterized approach seems to come close, but hit a dead end. The key insight of our algorithm is that even though $s$ is ``small'', that $k$ is much smaller, and that the hard instances for parameter $s$ seem to have a value of $k$ close to $s$. This leads to the natural problem of whether it is possible to design a parameterized algorithm with parameters $k$ and $s$ that substantially outperforms the algorithm of Cygan et al.~\cite{DBLP:journals/corr/abs-1810-06864} when $k \ll s$. Our main technical contribution is precisely such an algorithm. We state this algorithm as a separate theorem.

\begin{theorem}
\label{thm:paraAlg}
There exists an algorithm that given an unweighted multigraph $G$ and integers $k$ and $s$, determines whether $G$ has a $k$-cut of weight at most $s$ in times $s^{\cO(k)}n^{\cO(1)}$.
\end{theorem}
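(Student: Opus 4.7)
The plan is to design a recursive branching algorithm whose search tree has depth $\cO(k)$ and branching factor $s^{\cO(1)}$, so that the total number of leaves is $s^{\cO(k)}$ and each leaf does only polynomial work. Conceptually, an optimum $k$-cut is a partition of $V(G)$ into $k$ parts; if I can make $\cO(k)$ structurally independent decisions, each choosing among $s^{\cO(1)}$ options, then at each leaf one part of the partition is completely determined, I recurse into the induced subgraph with parameters $(k-1, s-s')$, and the exponential dependence is exactly $s^{\cO(k)}$. This is in contrast to the $s^{\cO(s)} n^{\cO(1)}$ algorithm of Cygan et al.~\cite{DBLP:journals/corr/abs-1810-06864}, which effectively branches once per cut edge.

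Concretely, at each recursion node I would pick a pivot vertex $v \in V(G)$ (say a vertex of maximum or minimum degree, whichever makes the subsequent analysis cleanest) and enumerate a bounded family $\mathcal{F}_v$ of candidate ``$v$-side'' cuts of weight at most $s'$; each $(S,\overline{S}) \in \mathcal{F}_v$ with $v \in S$ represents a guess that $S$ is the part of the optimal $k$-cut containing $v$. For each guess I contract $S$ into a single vertex and recurse on the resulting graph with parameters $(k'-1,\, s' - |E(S,\overline{S})|)$. Since the recursion strictly decreases $k'$ by one at every level its depth is at most $k$, so if I can guarantee $|\mathcal{F}_v| \le s^{\cO(1)}$ and enumerate it in polynomial time, the total running time is $s^{\cO(k)} n^{\cO(1)}$.

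The main obstacle is exactly the bound $|\mathcal{F}_v| \le s^{\cO(1)}$. In general the number of cuts of weight at most $s$ separating a fixed vertex from the rest can be as large as $4^s = s^{\cO(s)}$ by the classical important-cuts bound of parameterized complexity, so a naive enumeration is hopeless. To overcome this I would invoke the highly-connected decomposition advertised as an independent contribution of the paper: decompose $G$ into a small collection of pieces that are each highly edge-connected, so that any $k$-cut of weight at most $s$ is forced to respect (most of) this decomposition. Inside the resulting contracted graph, a vertex can only be separated in $s^{\cO(1)}$ structurally distinct ways, because such a separation must cross one of only $\cO(s)$ low-connectivity ``gluing'' edges of the decomposition. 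A second obstacle is that some branches may fail to make genuine progress towards isolating a part (for instance if the guessed $S$ is not the part of any optimal $k$-cut containing $v$); here a careful choice of pivot, together with a charging or potential argument showing that each successful branch commits an $\Omega(1)$-fraction of a part of the optimum, will be needed to keep the recursion depth at $\cO(k)$ rather than $\cO(s)$. Making both the enumeration bound and the depth argument quantitative will be the technical heart of the proof, and I expect it to involve several auxiliary lemmas relating minimum $k$-cuts to contractions of highly edge-connected subgraphs.
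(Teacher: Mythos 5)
Your plan is a genuinely different route from the paper's (which does not branch at all: it computes, in polynomial time, a compact tree decomposition with adhesions of size at most $s$ and $((s+1)^5,s)$-edge-unbreakable bags as in Theorem~\ref{thm:decomposition}, and then runs a dynamic program over it), but it has a concrete gap exactly where you locate it: the bound $|\mathcal{F}_v|\le s^{\cO(1)}$ is not delivered by the highly-connected decomposition, and nothing else in your outline supplies it. Edge-unbreakability of a bag $Q$ only says that every edge cut of order at most $s$ leaves at most $(s+1)^5$ vertices of $Q$ on one side; it says nothing about \emph{how many} such cuts exist, and the part of an optimum $k$-cut containing your pivot $v$ can still be any of $2^{\Theta(s)}$-many (important-cut style) or even $n^{\Theta(1)}$-many candidate sets whose boundary has at most $s$ edges. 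Likewise the picture of ``only $\cO(s)$ low-connectivity gluing edges'' is not what the decomposition provides: it is a tree decomposition with possibly $\Omega(n)$ adhesions, each a vertex set of size up to $s$, so a single part of the optimum can interact with the decomposition in far more than $s^{\cO(1)}$ structurally distinct ways. (Your second worry, about recursion depth, is actually a non-issue --- each call decrements $k'$, so depth is at most $k$ regardless of whether a guess is ``useful''; the entire difficulty is the per-level branching factor.)

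The missing ingredient is Thorup's tree packing. The paper computes a polynomial-size family of spanning trees, one of which (a $T$-tree) crosses some optimum $k$-cut at most $2k-2$ times; consequently the projection of that optimum onto any vertex set $X$ with $|X|\le s$ --- in particular onto any adhesion or any guessed interface --- is one of only $\cO((4k|X|)^{2k}) = s^{\cO(k)}$ ``$T$-feasible'' partitions (Lemma~\ref{sizeoffeasibleset}). It is this tree, not the unbreakable decomposition by itself, that caps the relevant search space at $s^{\cO(k)}$; the unbreakability is then used only \emph{inside} the computation of a single DP state $f_t(\tilde{P}_{A_t},i)$ (Lemma~\ref{dpstate_lemma}), where it guarantees that all but one part of the optimum restricted to a bag is small, so that splitters and the ``nice decomposition'' machinery can recover the interaction with $s^{\cO(k)}\log n$ guesses. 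If you want to salvage the contraction/branching framing, you would need an analogous certificate that the part containing your pivot is determined, up to $s^{\cO(1)}$ choices per level (or $s^{\cO(k)}$ overall), by few crossings of a known spanning tree --- and once you introduce that, you have essentially reconstructed the paper's $T$-tree-guided dynamic programming, so the branching reformulation does not simplify anything. As it stands, the claimed enumeration bound is asserted rather than proved, and it is the technical heart of the theorem.
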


The algorithm of Theorem~\ref{thm:paraAlg} can easily be extended using self reduction to also produce a partition 
$\tilde{P}$ of  $G$ with weight at most $s$, if it exists. The algorithm of Theorem~\ref{thm:paraAlg} is based on the $s^{\cO(s)} n^{\cO(1)}$ time algorithm of Cygan et al.~\cite{DBLP:journals/corr/abs-1810-06864}.

Just as the algorithm of Cygan et al.~\cite{DBLP:journals/corr/abs-1810-06864}, we compute a tree decomposition of the input graph $G$, such that the adhesions (cuts) of the decomposition are small, and the bags (the building blocks of the decomposition) are highly edge-connected. Unfortunately we are not able to use their decomposition theorem as a black box, because their running time to compute the decomposition is already  $s^{\cO(s)} n^{\cO(1)}$. We make our own decomposition theorem with weaker properties than the one in~\cite{DBLP:journals/corr/abs-1810-06864}, but computable in polynomial time.  The construction of the decomposition theorem follows the template of Cygan et al.~\cite{DBLP:journals/corr/abs-1810-06864},  replacing their most computationally intensive step with a polynomial time approximation algorithm. 

The most technically challenging part of our algorithm is how to compute an optimal $k$-cut in time 
$s^{\cO(k)}n^{\cO(1)}$, when the decomposition into highly connected pieces is provided as  input. The only known way how to exploit such a decomposition is using dynamic programming. However, the dynamic programming seems to require at least $2^{\Omega(s)}$ states, even for the stronger decomposition of Cygan et al.~\cite{DBLP:journals/corr/abs-1810-06864}, let alone for our weaker structural decomposition theorem.  Here, the fact that $k$ is much smaller than $s$ comes to rescue. We prove that the dynamic programming algorithm can be ``trimmed'' to only populate $s^{\cO(k)}$ states of the dynamic programming table. This trimming procedure is done by inspecting how an optimal $k$-cut can interact with a tree obtained from Throup's~\cite{Thorup08} tree-packing theorem.  A substantial amount of technical weight lifting is required to show that the trimmed dynamic programming table for a bag of a tree decomposition can be computed efficiently from the trimmed dynamic programming tables of its children. 

The algorithm of Theorem~\ref{thm:paraAlg} immediately completes the proof of our $(1+\epsilon)$-approximation  algorithm (Theorem~\ref{thm:mainFPTAS}). Specifically, applying this algorithm to the graph $G'$ obtained by our sparsification procedure yields a 
$(1+\epsilon)$-approximation algorithm with running time   
\begin{eqnarray*}
\bigg(\frac{k \cdot 100 \log n}{\epsilon^3}\bigg)^{\cO(k)}n^{\cO(1)}& = & (k/\epsilon)^{\cO(k)} (\log n)^{\cO(k)}n^{\cO(1)}
 \leq  (k/\epsilon)^{\cO(k)} (k^{\cO(k)}+n) n^{\cO(1)}\\
& \leq &2^{\cO(k \log (\frac{k}{\epsilon}) ) }}n^{\cO(1). 
\end{eqnarray*}
 Setting $\epsilon=\frac{1}{n^{\cO(1)}}$, shows that a $2^{o(k \log(\frac{k}{\epsilon}) )} n^{\cO(1)}$ time algorithm would yield an exact algorithm with running time $n^{o(k)}$, violating \ETH. 

\paragraph{Guide to the paper.} We start by giving the notations and preliminaries that we use throughout the paper in Section~\ref{section:notation}. This section is best used as a reference, rather than being read linearly.
In Section~\ref{section:sparsification} we prove the $(1+\epsilon)$-approximation preserving reduction to instances with optimum $k$-cuts of size $\cO(\frac{k \log n}{\epsilon^3})$. Since the proofs here are so similar to those for graph sparsifiers, a reader eager to get to the key ideas of the paper should look at the statement and proof of Lemma~\ref{RandLemma1}, as well as the statement of Lemma~\ref{RandLemma2}, and skip further. 
In Section~\ref{section:decomposition}, we state and prove Theorem~\ref{thm:decomposition}, the polynomial time edge-unbreakable decomposition theorem. In the proof of Theorem~\ref{thm:decomposition}, Subsection~\ref{subsec:flw} contains a new approximation algorithm for finding small balanced edge-separators, while Subsections~\ref{sec:refine} and~\ref{sec:decoAlgo} are 
almost word-by-word identical to proofs of 
Cygan et al~\cite{DBLP:journals/corr/abs-1810-06864} and may be skipped. 
Section~\ref{section:DP} contains our main technical thrust: a new exact algorithm for unweighted multi graphs with running time $s^{\cO(k)}n^{\cO(1)}$, the algorithm computes and uses a decomposition such as the one provided by Theorem~\ref{thm:decomposition}. 
Finally, in Section~\ref{section:combining}, we combine all the results obtained in the previous sections and complete the proof of our main result (Theorem~\ref{thm:mainFPTAS}). We conclude the paper with some interesting open problems in Section~\ref{section:conclusion}.

\section{Notation and Preliminaries}
\label{section:notation}
In this section we give notations, and definitions that we use throughout the paper. Unless specified we will be using all general graph terminologies from the book of Diestel~\cite{DBLP:books/daglib/0030488}.  


\subsection{Graph and Set Theoretic Definitions and Notations}
Given a graph $G$, we use $V(G)$ and $E(G)$ to denote the set of vertices and edges, respectively. We use $\cc(G)$ to denote the number of connected components of $G$.  
In this paper our graph could be a multigraph, that is, we allow parallel edges between vertices, even though we might start with a simple graph. 
Given a subset $X$ of vertices in $V(G)$, we denote by $\delta(X)$ the number of edges in $G$ that have exactly one end point in $X$. By the term {\em smoothing} a vertex $v$ of degree $2$ in a graph $G$, we mean the operation of removing $v$ from the graph and adding an edge between two of its neighbors. 

For a set $U$, an $\ell$-{\em partition} of $U$ is a family $\tilde{P} = \{P_1, P_2, \ldots, P_\ell\}$ of non-empty, pairwise disjoint sets whose union $\bigcup_{i=1}^{\ell} P_i$ is equal to $U$. A {\em partition} of $U$ is an $\ell$-partition for some positive integer $\ell$. We refer to $P_1, \ldots, P_\ell$ as the {\em parts} of $\tilde{P}$ and refer to $\ell$ as the size of the partition and denote it by $|\tilde{P}|$. 
%
An edge $uv$ with endpoints in $U$ {\em crosses} $\tilde{P}$ if $u$ and $v$ are in different parts of $\tilde{P}$. The number of times an edge set $X$ crosses $P$ is defined as the number of edges in $X$ that cross $P$ and denoted by $\delta_X(\tilde{P})$. That is, $X'\subseteq X$ is the set containing all edges in $X$ that crosses $\tilde{P}$ and $\delta_X(\tilde{P})=|X'|$.
The {\em projection} of a partition $\tilde{P}$ onto a subset $S$ of $U$ is the partition $\tilde{P}' = \{P_1 \cap S, P_2 \cap S, \ldots, P_\ell \cap S\}$ with the empty sets removed. A partition $\tilde{P}'$ is a {\em refinement} of $\tilde{P}$ if every part of $\tilde{P}'$ is a subset of some part of $\tilde{P}$, see Figure~\ref{fig:refinementTtree}(a). We say that a partition $\tilde{P}$ is {\em refined} by $\tilde{P}'$ if $\tilde{P}'$ is a refinement of $\tilde{P}$. The $\emph{combining}$ of a set of parts $\mathcal{P}$ in $\tilde{P}$ is the operation of removing all the parts in $\mathcal{P}$ from $\tilde{P}$ and adding their union $\underset{P\in \mathcal{P}}{\cup}P$ as a part in $\tilde{P}$.
\begin{figure}[t]
    \centering
    \includegraphics[width=0.9\textwidth]{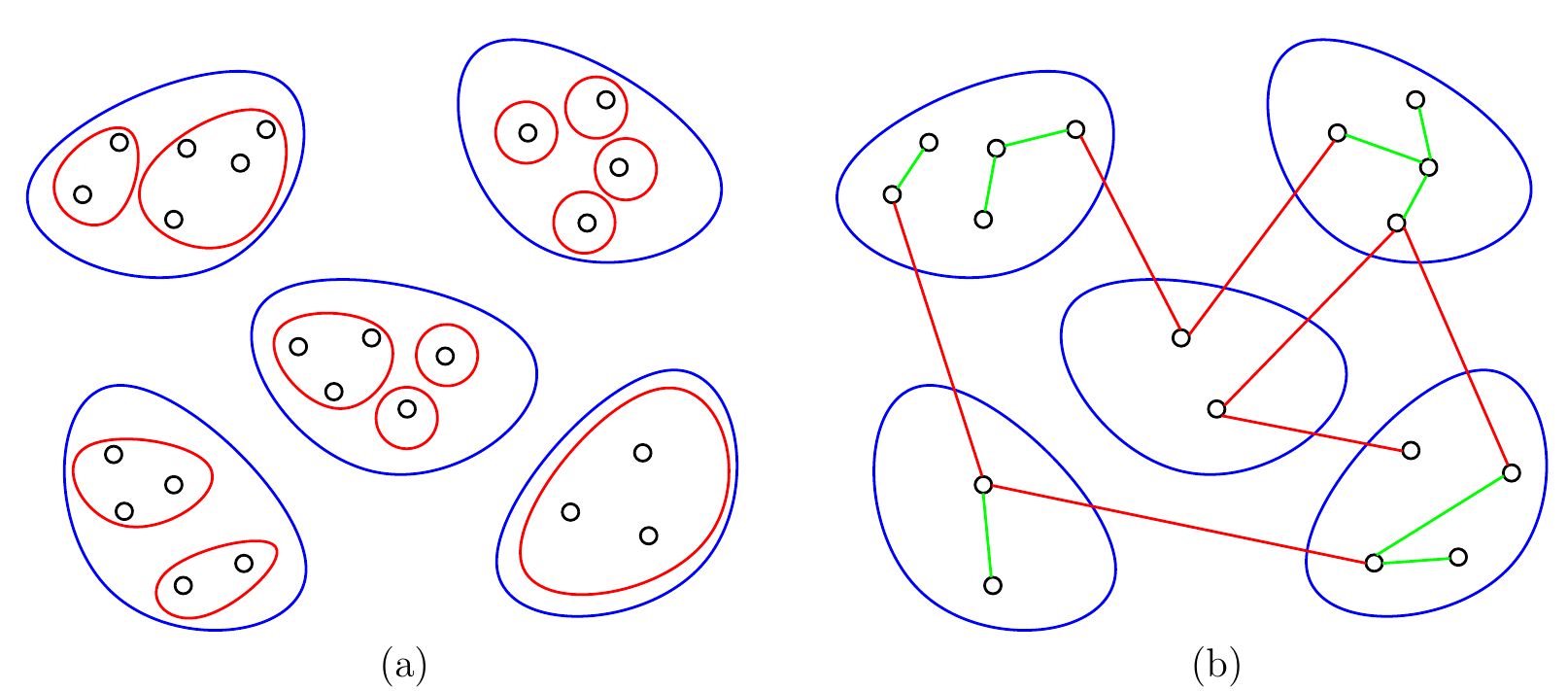}
    \caption{(a) Example of a partition $\tilde{P}$' that is a refinement of a partition $\tilde{P}$. Partition $\tilde{P}'$ is depicted in red and parition $\tilde{P}$ is depicted in blue. (b) Example of a tree $T$ that crosses a partition $\tilde{P}$ at most $2k-2$ times. The red edges are the edges from $T$ that cross the partition and the green edges are the remaining edges of $T$.}
    \label{fig:refinementTtree}
\end{figure}
%

        
A {\em $k$-cut} $\tilde{S}$ of a graph $G$ is a $k$-partition of $G$. Let $G$ be an edge weighted graph $G$, then the {\em weight of the  $k$-cut}, denoted by $w(G,\tilde{S})$ is simply the sum of the weights of edges with endpoints in different parts of $\tilde{S}$. When $G$ is an unweighted multigraph, then 
the weight of $\tilde{S}$ is the number of  edges with endpoints in different parts of $\tilde{S}$ (if there are $q$ edges between a pair of vertices then it adds $q$ to the sum). We denote this also by $w(G,\tilde{S})$. In the cases where the graph $G$ is clear from the context, we just use $w(\tilde{S})$ instead of $w(G,\tilde{S})$. An {\em optimal $k$-cut} or a {\em minimum $k$-cut}  of $G$ is a $k$-cut of $G$ that has the {\em minimum weight} among all $k$-cuts of $G$. We denote the weight of an optimal $k$-cut by 
$\opt(G,k)$.   A {\em non-trivial cut} means that there is at least one edge crossing the cut. We remark that in some parts of the paper, we refer to a $k$-cut of a graph $G$ as a $k$-partition of $V(G)$ but it will be clear from the context. 

For a graph $G$, {\em an edge cut} is a pair $A, B \subseteq V(G)$ such that $A\cup B=V(G)$ and $A\cap B=\emptyset$. The order of an edge cut $(A,B)$ is $|E(A,B)|$, that is, the number of edges with one endpoint in $A$ and the other in $B$. Observe that edge cut and $2$-cut are the same. 
\begin{definition}[unbreakability]
A set $X \subseteq V(G)$ is \emph{$(q,s)$-edge-unbreakable} if every edge cut $(A,B)$ of order at most $s$
satisfies $|A \cap X| \leq q$ or $|B \cap X| \leq q$.
\end{definition}
A \emph{separation} is a pair $A, B \subseteq V(G)$ such that $A \cup B = V(G)$ and $E(A \setminus B, B \setminus A) = \emptyset$. The order of a separation $(A,B)$ is $|A \cap B|$.

\subsection{$T$-tree and Treewidth}
An important notion that underlies our algorithm is a notion of {\em $T$-trees } introduced in the seminal work of 
 Thorup~\cite{Thorup08}. 
 For an instance $(G,k)$ of \mkk, a $T$-{\em tree} of $G$ is a spanning tree of $G$ such that there exists an optimal $k$-cut $S^\star$ of $G$ such that $T$ crosses $S^\star$ at most $2k-2$ times. For an illustration of $T$-tree we refer to Figure~\ref{fig:refinementTtree}(b). 

Another important notion that we need is of tree decomposition where bags are ``highly connected''. Towards this we first define tree decomposition, treewidth and associated notions and notations that we make use of. For a rooted tree $\tau$ and vertex $v \in V(\tau)$ we denote by $\tau_v$ the subtree of $\tau$ rooted at $v$. We refer to the vertices of $\tau$ as nodes.

A {\em tree decomposition} of a graph $G$ is a pair $(\chi,\tau)$ where $\chi$ is a rooted tree and $\tau$ is a function from $V(\tau)$ to $2^{V(G)}$ such that the following three conditions hold. 
\begin{description}
\setlength{\itemsep}{-2pt}
\item[(T1)] $\underset{t\in V(\tau)}{\bigcup}\chi(t) = V(G)$;
\item[(T2)] For every $uv\in E(G)$, there exists a node $t\in \tau$ such that $\chi(t)$ contains both $u$ and $v$; and 
\item[(T3)] For every vertex $u\in V(G)$, the set $T_u=\{t\in V(\tau): u\in \chi(t)\}$, i.e., the set of nodes whose corresponding bags contain $u$, induces a connected subtree of $\tau$. 
\end{description}
For every $t \in V(\tau)$ a set $\chi(t) \subseteq V(G)$, is called a \emph{bag}.  For a tree decomposition $(\tau,\chi)$ fix an edge $e = tt' \in E(\tau)$. The deletion of $e$ from $\tau$ splits $\tau$ into two trees $\tau_1$ and $\tau_2$, and naturally induces a separation $(X_1,X_2)$ in $G$
with $X_i := \bigcup_{t \in V(\tau_i)} \chi(t)$, which we henceforth call \emph{the separation associated with $e$}.
The set $\adh_{\tau,\chi}(e) := X_1 \cap X_2 = \chi(t) \cap \chi(t')$ is called the \emph{adhesion} of $e$.
We suppress the subscript if the decomposition is clear from the context.

For $s,t \in V(\tau)$ we say that \emph{$s$ is a descendant of $t$}
or that \emph{$t$ is an ancestor of $s$} if $t$ lies on the unique path from $s$ to the root;
note that a node is both an ancestor and a descendant of itself.
For a node $t$ that is not a root of $\tau$, by $\adh_{\tau,\chi}(t)$ we mean
the adhesion $\adh_{\tau,\chi}(e)$ for the edge $e$ connecting $t$ with its parent in $\tau$. 
We extend this notation to $\adh_{\tau,\chi}(r) = \emptyset$ for the root $r$.
Again, we omit the subscript if the decomposition is clear from the context.  For brevity, for $t\in V(\tau)$, we use $A_t$ to denote $\adh_{\tau,\chi}(t)$.

We define the following functions for convenience:
\begin{align*}
\gamma(t) &= \bigcup_{q\textrm{\ descendant\ of\ }t} \chi(q), \\
\alpha(t) &= \gamma(t) \setminus A_t, \\
G_t &= G[\gamma(t)]\\
\end{align*}
We say that a rooted tree decomposition $(\tau,\chi)$ of $G$ is \emph{compact}
if for every node $t \in V(\tau)$ for which $\adh_t \neq \emptyset$ we have
that $G[\alpha(t)]$ is connected and $N_G(\alpha(t)) = A_t$. We can extend the function $\chi$ to subsets of $V(\tau)$ in the natural way: for a subset $X \subseteq V(\tau)$, $\chi(X) = \bigcup_{x \in X} \chi(x)$. 
By $\chd(t)$, we denote the set of children of $t$ in 
$\tau$. For each subset $X$ of nodes in $V(\tau)$, we define $G_X = G[\cup_{t\in X}\chi(V(\tau_t))]$.



\subsection{Splitters and Chernoff Bound}
Let $[n]$ denote the set of integers $\{1,\ldots,n\}$. 
We start by defining the notion of splitters. 
\begin{definition}[\cite{DBLP:conf/focs/NaorSS95}]
\label{splitterdefn}
An $(n,k,\ell)$ splitter $\mathcal{F}$ is a family of functions from $[n]\rightarrow[\ell]$ such that for all  $S\subseteq [n]$, $|S|=k$, there exists a fucntion $ f\in \cal F$ that splits $S$ evenly. That is, for all $1\leq j,j'\leq \ell$, $|f^{-1}(j')\cap S|$ and $|f'^{-1}(j)\cap S|$ differ by at most one.
\end{definition}

We will need following algorithm to compute splitters with desired parameters. 
\begin{theorem}[\cite{DBLP:conf/focs/NaorSS95}]\label{splitter2}
For all $\ n,k\geq 1$ one can construct an $(n,k,k^2)$ splitter family of size $k^{\cO(1)}\log n$ in time $k^{\cO(1)}n\log n$.
\end{theorem}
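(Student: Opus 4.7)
The plan is to observe first that since $\ell = k^2$, the even-splitting condition $||f^{-1}(j)\cap S| - |f^{-1}(j')\cap S|| \leq 1$ combined with $|S|=k$ forces each bucket of $f$ to contain at most one element of $S$. Consequently an $(n,k,k^2)$-splitter family is nothing but a $k$-perfect hash family $\mathcal{F}$ from $[n]$ to $[k^2]$: for every $k$-subset $S$ of $[n]$ some $f \in \mathcal{F}$ must be injective on $S$. For a uniformly random $f:[n]\to[k^2]$, the probability of injectivity on a fixed $S$ is at least $\prod_{i=0}^{k-1}(1 - i/k^2) \geq 1 - \binom{k}{2}/k^2 \geq 1/2$. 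A union bound over the $\binom{n}{k}\leq n^k$ candidate subsets then shows that $\cO(k\log n)$ independent random hash functions suffice to form a splitter with positive probability; this is the probabilistic baseline to be derandomized.

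The next step is to derandomize so as to obtain an explicit family constructible in time $k^{\cO(1)}n\log n$. A first attempt is $k$-wise independence: random polynomials of degree $k-1$ over $\mathbb{F}_p$ with $p=\Theta(k^2)$ give a $k$-wise independent family under which the injectivity probability bound still holds, but the size $p^k = k^{\cO(k)}$ is too large. I would instead pursue a two-stage strategy. First, exploit the fact that the predicate ``$f$ is injective on $S$'' can be evaluated by a small-space automaton that streams through the values $f(1),\ldots,f(n)$ while tracking only which of the $k^2$ buckets have been hit by an element of $S$; consequently a Nisan-style pseudorandom generator for small-space computations fools every such test with seed length $\cO(\log^{\cO(1)}(nk^2))$, yielding a candidate family of size $k^{\cO(1)}\log n$. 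Second, verify coverage of all $k$-subsets and, if needed, complete the family via the method of conditional probabilities applied to the expected number of uncovered sets, which is maintainable in polynomial time using the $k$-wise independence of the generator.

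The main obstacle is achieving the $k^{\cO(1)}$ dependence in the family size rather than the easier $2^{\cO(k)}$ one gets from a naive derandomization. This sharper bound rests crucially on the small-space structure of the injectivity tests, so that a single generic pseudorandom generator for small-space computations simultaneously fools all $\binom{n}{k}$ of them; this is the technical heart of the Naor--Schulman--Srinivasan construction. Once the seed enumeration is fixed, the running-time bound is routine: each candidate hash function admits $\cO(n)$-time evaluation on $[n]$ given its seed, and there are only $k^{\cO(1)}\log n$ seeds to process, giving overall construction time $k^{\cO(1)}n\log n$.
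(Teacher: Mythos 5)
First, a point of orientation: the paper does not prove this statement at all --- Theorem~\ref{splitter2} is quoted from Naor--Schulman--Srinivasan \cite{DBLP:conf/focs/NaorSS95} as a black box, so there is no internal proof to match your argument against. Judged on its own merits, your first paragraph is fine: with $\ell=k^2$ the even-splitting condition is exactly injectivity on $S$, so an $(n,k,k^2)$-splitter is a $k$-perfect hash family into $[k^2]$, and the random-function computation plus union bound over $\binom{n}{k}\le n^k$ sets correctly gives the probabilistic existence of a family of size $\cO(k\log n)$.

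The derandomization --- which is the entire content of the theorem --- has genuine gaps. (1) You conflate seed length with family size: a Nisan-style generator with seed length $\log^{\cO(1)}(nk)$ yields a family of size $2^{\log^{\cO(1)}(nk)}$, which is quasi-polynomial in $n$, not $k^{\cO(1)}\log n$; worse, the automaton you describe must remember which buckets have been hit by elements of $S$, i.e.\ roughly $\binom{k^2}{\le k}$ states, so space $\Theta(k\log k)$ and seed length $\Omega(k\log k\cdot\log n)$, giving a family of size $n^{\Omega(k\log k)}$. (2) The $k$-wise-independence aside is also off: degree-$(k-1)$ polynomials over $\mathbb{F}_p$ with $p=\Theta(k^2)$ only have domain $[p]$, not $[n]$; to hash $[n]$ you need field size at least $n$ and hence a family of size $n^{\Theta(k)}$. (3) The proposed repair --- ``verify coverage of all $k$-subsets and complete via conditional probabilities'' --- cannot run in the stated time $k^{\cO(1)}n\log n$ (essentially linear in $n$): there are $\binom{n}{k}$ subsets, and you give no pessimistic estimator that avoids enumerating them. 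Finally, attributing this route to NSS mischaracterizes their construction for this parameter regime: the standard proof uses explicit algebraic (FKS/Schmidt--Siegel-style) modular hashing --- e.g.\ first $x\mapsto x\bmod p$ for a prime $p=k^{\cO(1)}\log n$ chosen so as to separate $S$, composed with maps of the form $y\mapsto (ay\bmod q)\bmod k^2$ --- where the key point is that for every fixed $k$-set a constant fraction of this small explicit family is already injective on it, so no union bound over subsets, no space-bounded PRG, and no method of conditional probabilities is needed, and each function is evaluable in near-linear total time. As written, your argument does not establish either the $k^{\cO(1)}\log n$ size bound or the $k^{\cO(1)}n\log n$ construction time.
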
  

The next lemma is a simple application of Theorem~\ref{splitter2} and is used as a subroutine in our new algorithm for \mkk. 
\begin{lemma}
\label{splittersetlemma}
There exists an algorithm that takes as input a set $S$, two positive integers $s_1$ and $s_2$ that are less than $|S|$, and outputs a family $\mathcal{S}$ of subsets of $S$ having size $\cO((s_1+s_2)^{\cO(s_1)}\log |S|)$ such that for any two disjoint subsets $X_1$ and $X_2$ of $S$ of size at most $s_1$ and $s_2$, $\mathcal{S}$ contains a subset $X$ that satisfies $X_1\subseteq X$ and $X_2\cap X =\emptyset$ in time $\cO((s_1+s_2)^{\cO(s_1)}|S|^{\cO(1)})$.
\end{lemma}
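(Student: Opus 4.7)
The plan is to reduce Lemma~\ref{splittersetlemma} directly to Theorem~\ref{splitter2} by the standard ``splitter + guess of image'' trick. First I would invoke Theorem~\ref{splitter2} with $n = |S|$ and $k = s_1 + s_2$ to compute, in time $(s_1+s_2)^{\OO(1)} |S| \log |S|$, an $(|S|, s_1+s_2, (s_1+s_2)^2)$-splitter family $\mathcal{F}$ of size $(s_1+s_2)^{\OO(1)} \log |S|$; identify $S$ with $[|S|]$ so that each $f\in \mathcal{F}$ is a function $f : S \to [(s_1+s_2)^2]$. The family $\mathcal{S}$ is then defined by: for every $f \in \mathcal{F}$ and every $T \subseteq [(s_1+s_2)^2]$ with $|T| \le s_1$, include the set $X_{f,T} := f^{-1}(T) \subseteq S$.

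For correctness, fix disjoint $X_1, X_2 \subseteq S$ with $|X_1| \le s_1$ and $|X_2| \le s_2$, and let $Y = X_1 \cup X_2$, so $|Y| \le s_1 + s_2 = k$. By the defining property of the splitter, there is some $f \in \mathcal{F}$ that splits $Y$ evenly across $k^2$ buckets; since $|Y| \le k < k^2$, ``evenly'' forces each bucket to contain at most one element of $Y$, so $f$ restricted to $Y$ is injective. Set $T := f(X_1)$; then $|T| = |X_1| \le s_1$, and $X_1 \subseteq f^{-1}(T)$ by construction, while $X_2 \cap f^{-1}(T) = \emptyset$ because any $y \in X_2$ has $f(y) \notin f(X_1) = T$ (by injectivity of $f$ on $Y$ and disjointness of $X_1, X_2$). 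Hence $X_{f,T}$ is the required set in $\mathcal{S}$.

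To bound the output size I would simply multiply: the number of choices for $T$ is at most $\sum_{i=0}^{s_1} \binom{(s_1+s_2)^2}{i} \le (s_1+s_2)^{\OO(s_1)}$, so $|\mathcal{S}| \le |\mathcal{F}| \cdot (s_1+s_2)^{\OO(s_1)} = (s_1+s_2)^{\OO(s_1)} \log |S|$, matching the target bound. The total running time is dominated by enumerating these pairs $(f, T)$ and writing out each $f^{-1}(T)$, giving $(s_1+s_2)^{\OO(s_1)} |S|^{\OO(1)}$ overall.

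The only subtle point, and the one step I would double-check, is the ``evenly splits'' calculation: one must verify that choosing the codomain of size exactly $(s_1+s_2)^2$ (rather than $k$ or $k \log k$) forces injectivity on every $Y$ of size at most $k$, so that the guess $T = f(X_1)$ cleanly separates $X_1$ from $X_2$. Everything else is routine counting and a direct appeal to Theorem~\ref{splitter2}.
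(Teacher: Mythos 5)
Your construction is the same core idea as the paper's proof: invoke Theorem~\ref{splitter2}, and for each splitter function $f$ guess the image $T=f(X_1)$ and output $f^{-1}(T)$; the counting and running-time analysis are fine. The one place your argument does not quite go through as written is the step you yourself flagged: Definition~\ref{splitterdefn} guarantees an evenly-splitting $f$ only for subsets of size \emph{exactly} $k$, whereas your set $Y=X_1\cup X_2$ may have $|Y|<s_1+s_2=k$. So ``there is some $f\in\mathcal{F}$ that splits $Y$ evenly'' is not licensed by the definition, and injectivity of some $f$ on $Y$ does not follow directly. The standard patch of padding $Y$ with arbitrary elements of $S$ up to size exactly $s_1+s_2$ works only if $|S|\ge s_1+s_2$, which the hypotheses of Lemma~\ref{splittersetlemma} do not guarantee (they only give $s_1,s_2<|S|$).

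The paper sidesteps this by enumerating all pairs $(s_1',s_2')$ with $s_1'\le s_1$ and $s_2'\le s_2$, building a $(|S|,s_1'+s_2',(s_1'+s_2')^2)$-splitter for each, and invoking the splitter property only for the pair matching the exact sizes $s_1'=|X_1|$, $s_2'=|X_2|$, so the set $X_1\cup X_2$ always has exactly the size for which the splitter is defined. This costs only an extra factor of $s_1s_2$ in the family size and is absorbed into the $\cO((s_1+s_2)^{\cO(s_1)}\log|S|)$ bound. If you either adopt that enumeration, or add padding together with an explicit justification that sets of size below $k$ are also separated (e.g.\ by assuming $|S|\ge s_1+s_2$ or by restricting to $k=\min(s_1+s_2,|S|)$ and arguing separately), your proof becomes complete and is then essentially identical to the paper's.
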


\begin{proof}
The algorithm carries out the following steps.
Order the elements in $S$ arbitrarily, let $v_j$ denote the $j^{th}$ element in this ordering. Initialize $\mathcal{S}=\emptyset$. For each pair $s_1'$ and $s_2'$ such that $s_1'\leq s_1$ and $s_2'\leq s_2$ repeat the following procedure. 
\begin{enumerate}\setlength\itemsep{-.7mm}
\item[(i)] Let $s'=s_1'+s_2'$, construct a $(|S|,s',s'^2)$ splitter $\mathcal{F}$.
\item [(ii)] Construct a set $\mathcal{L}$ of subsets of elements of $S$ by the following procedure. For each function $f$ in $\mathcal{F}$ and for each set $X \subseteq [s'^2]$ having size $s_1'$ construct a subset $C$ of $S$ that contains an element $s_j\in S$ if $f(j)\in X$, that is $C=\{s_j: f(j)\in X\}$ and add it to $\mathcal{L}$.
\item [(iii)]$\mathcal{S}=\mathcal{S}\cup \mathcal{L}$.
Finally output $\mathcal{S}$
\end{enumerate}

Consider two disjoint subsets $X_1,X_2$ of $S$ of size at most $s_1$ and $s_2$ and the splitter $\mathcal{F}$ constructed for the values of $s_1'=|X_1|$ and $s_2'=|X_2|$. By Definition~\ref{splitterdefn}, for all $X'\subseteq [|S|]$ of size $s'=s_1'+s_2'$, there exists a function $f\in \mathcal{F}_1$ that maps each $x\in X'$ to a distinct integer in $[s'^2]$. Thus, there is a function $f\in \mathcal{F}$, that maps each $j$ to a distinct integer in $[s'^2]$ for each $v_j\in X_1\cup X_2$. Since the size of $X_1$ is $s_1'$, we can infer that there exists a function $f\in \mathcal{F}$ and a set $X\subseteq [s'^2]$ of size $s_1'$ such that $X=\{f(j):v_j\in X_1\}$. Therefore, by the construction of $\mathcal{L}$, there exists a set in $C$ in $\mathcal{L}$ and hence in $\mathcal{S}$ such that $X_1\subseteq C$ and $X_2\cap C=\emptyset$.  

For each $s_1'$ and $s_2'$ such that $s_1'\leq s_1$ and $s_2'\leq s_2$, it follows from Theorem~\ref{splitter2} that the size of $\mathcal{F}$ is $s'^{\cO(1)}\log |S|$. From the construction of $\mathcal{L}$ it is easy to see that its size is $|\mathcal{F}|{{s'^2}\choose{s_1'}}$. 

Thus, $\mathcal{S}$ is of size $(s_1s_2)(s_1+s_2)^{\cO(1)}\log |S|
{{(s_1+s_2)^2}\choose{s_1}}=\cO((s_1+s_2)^{\cO(s_1)}\log |S|)$ and can be computed in time $\cO((s_1+s_2)^{\cO(s_1)}|S|^{\cO(1)})$. This concludes the proof. 
\end{proof}

To prove the correctness of our sparsification procedure we need standard Chernoff bounds to bound the tail probability. 

\begin{lemma}[Chernoff Bound,\cite{DBLP:books/daglib/0012859}]
\label{lem:chernoff}
Let $X_1, \ldots, X_n$ be independent Poisson trials such that $\Pr[X_i]=p_i$. Let $X=\sum_{i=1}^n X_i$ and $\mu=E[X]$. For  
$0<\delta <1 $. 
\[\Pr[|X-\mu|\geq \delta\mu ] \leq 2\cdot e^{-\mu \delta^2/3}.\]
\end{lemma}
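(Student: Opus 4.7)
The plan is to use the standard moment generating function approach via Markov's inequality. I would split the two-sided bound into upper- and lower-tail estimates, each of the form $\exp(-\mu \delta^2/3)$, and then apply a union bound to pick up the factor $2$.

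For the upper tail, I fix $t > 0$ and apply Markov's inequality to $e^{tX}$:
\[
\Pr[X \geq (1+\delta)\mu] = \Pr[e^{tX} \geq e^{t(1+\delta)\mu}] \leq \frac{E[e^{tX}]}{e^{t(1+\delta)\mu}}.
\]
Since the $X_i$ are independent Bernoulli random variables with $E[X_i] = p_i$, I use independence followed by the inequality $1 + x \leq e^x$ to bound the MGF:
\[
E[e^{tX}] \;=\; \prod_{i=1}^n \bigl(1 + p_i(e^t - 1)\bigr) \;\leq\; \prod_{i=1}^n e^{p_i(e^t - 1)} \;=\; e^{(e^t-1)\mu}.
\]
Choosing $t = \ln(1+\delta)$ gives the classical bound $\Pr[X \geq (1+\delta)\mu] \leq (e^\delta/(1+\delta)^{1+\delta})^\mu$. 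It then remains to verify the calculus inequality $(1+\delta)\ln(1+\delta) - \delta \geq \delta^2/3$ for $0 < \delta < 1$, which is what is needed to conclude $\Pr[X \geq (1+\delta)\mu] \leq e^{-\mu \delta^2/3}$.

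The lower tail $\Pr[X \leq (1-\delta)\mu]$ is handled symmetrically by applying Markov's inequality to $e^{-tX}$ for $t > 0$; optimizing the resulting exponent yields $\Pr[X \leq (1-\delta)\mu] \leq (e^{-\delta}/(1-\delta)^{1-\delta})^\mu$, and an analogous (in fact slightly easier) calculus step shows this is at most $e^{-\mu \delta^2/2} \leq e^{-\mu \delta^2/3}$. Summing the two tail bounds gives the claimed $\Pr[|X-\mu| \geq \delta \mu] \leq 2 \cdot e^{-\mu \delta^2/3}$.

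The only step that is not completely mechanical is the elementary inequality $(1+\delta)\ln(1+\delta) - \delta \geq \delta^2/3$ on $(0,1)$; this is the main (and only real) obstacle, and is typically dispatched by comparing Taylor expansions or by checking that the difference has nonnegative derivative at $\delta = 0$ together with the right sign of the second derivative. Everything else is a routine combination of independence, Markov's inequality, and the standard bound $1+x \leq e^x$.
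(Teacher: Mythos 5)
The paper does not prove this lemma at all---it is quoted as a standard result with a citation to a probability textbook---and your argument is exactly the standard moment-generating-function proof that the cited reference gives (Markov applied to $e^{tX}$, the bound $1+x\le e^x$, the choice $t=\ln(1+\delta)$, and the elementary inequality $(1+\delta)\ln(1+\delta)-\delta\ge \delta^2/3$ on $(0,1]$, with the symmetric lower-tail estimate and a union bound), so it is correct and matches the intended source. One small caution on the only nontrivial step: for $f(\delta)=(1+\delta)\ln(1+\delta)-\delta-\delta^2/3$ the second derivative $\tfrac{1}{1+\delta}-\tfrac{2}{3}$ changes sign at $\delta=\tfrac12$, so ``nonnegative second derivative'' alone does not close it; instead note $f'(0)=0$, $f'$ increases up to $\delta=\tfrac12$ and then decreases but $f'(1)=\ln 2-\tfrac23>0$, hence $f'\ge 0$ and $f\ge 0$ on $[0,1]$.
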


\section{Reduction to Instances with Logarithmic Solution Size}
\label{section:sparsification}
In this section we give a polynomial time algorithm that  given an unweighted graph $G$, an integer $k$ and an 
$\epsilon >0$, outputs a subgraph $G'$ of $G$, such that $V(G)=V(G')$ and a real number $r$ such that with high probability it holds that for {\em every} partition $\tilde{P}$ of $V(G) = V(G')$ into $k$ parts, we have that 
$w(G,\tilde{P})\simeq r \cdot w(G',\tilde{P})$. Furthermore, every  non-trivial $2$-cut  of $G_1$ has weight at least $\frac{\epsilon \cdot  \opt(G,k)}{k-1}$. Recall, that a non-trivial cut means that there is at least one edge crossing the cut. 
We start by a greedy procedure that will ensure that weight of each $2$-cut of the sparsifier is sufficiently large.

\begin{lemma}\label{RandLemma1}
There exists an algorithm that takes as input  an integer $k>0$, an unweighted graph $G$ such that $\cc(G)<k$, 
and a real number 
$0< \epsilon \leq 1 $ and outputs a subgraph $G_1$ of $G$ with $V(G')=V(G)$, such that 
$|E(G)|-|E(G_1)|\leq 2\epsilon\cdot \opt(G,k)$. Further,  if $\cc(G_1)<k$,  then
 each non-trivial $2$-cut 
 of $G_1$ has weight at least $\frac{\epsilon \cdot  \opt(G,k)}{k-1}$.
\end{lemma}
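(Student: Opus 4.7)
The plan is a straightforward greedy ``peeling'' procedure. First I would invoke any of the polynomial-time $2(1-1/k)$-approximation algorithms for \mkk~\cite{NaorR01,ravi2008approximating,SaranV95} to compute a number $W$ satisfying $\opt(G,k) \leq W \leq 2\cdot \opt(G,k)$, and set the threshold $\tau := \frac{\epsilon \cdot W}{k-1}$. Starting from $G_1 := G$, I would repeat the following while $\cc(G_1) < k$: compute a minimum non-trivial $2$-cut of $G_1$ in polynomial time (by running a global min-cut algorithm inside each connected component of size at least two and taking the smallest); if that $2$-cut has weight at most $\tau$, delete all of its edges and iterate; otherwise terminate and return $G_1$.

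Each iteration removes the edges of a non-trivial $2$-cut, which must increase $\cc(G_1)$ by at least one. Since the loop is entered only while $\cc(G_1) < k$ and we start with $\cc(G) \leq k-1$, the loop runs at most $k - \cc(G) \leq k-1$ times. Each iteration deletes at most $\tau$ edges, hence
\[
|E(G)| - |E(G_1)| \;\leq\; (k-1) \cdot \tau \;=\; \epsilon \cdot W \;\leq\; 2\epsilon \cdot \opt(G,k),
\]
establishing the edge-deletion bound claimed in the lemma.

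Now suppose the algorithm terminates with $\cc(G_1) < k$. Then the inner termination must have triggered: the minimum non-trivial $2$-cut of $G_1$ had weight strictly more than $\tau$. Using $W \geq \opt(G,k)$, we have $\tau = \frac{\epsilon W}{k-1} \geq \frac{\epsilon \cdot \opt(G,k)}{k-1}$, so every non-trivial $2$-cut of $G_1$ has weight at least $\frac{\epsilon \cdot \opt(G,k)}{k-1}$, which is the second claim.

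The proof itself is routine. The mild subtlety, and the only reason the stated bound is $2\epsilon \cdot \opt(G,k)$ rather than $\epsilon \cdot \opt(G,k)$, is that $\opt(G,k)$ is unknown to the algorithm, forcing the threshold to be defined in terms of the computable surrogate $W$; substituting $W$ for $\opt(G,k)$ costs exactly one factor of $2$. The stopping condition $\cc(G_1) < k$ is essential: without it, the algorithm could in principle keep removing small $2$-cuts long after $\cc$ exceeds $k$ and blow the edge-deletion budget, but capping the number of iterations at $k-1$ precisely matches the $k-1$ in the denominator of $\tau$.
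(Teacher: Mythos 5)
Your proposal is correct and follows essentially the same route as the paper's proof: compute a $2$-approximate value $w_a$ of $\opt(G,k)$, greedily delete non-trivial $2$-cuts of weight at most $\frac{\epsilon w_a}{k-1}$ while $\cc < k$, bound the iterations by $k-1$ to get the $\epsilon w_a \leq 2\epsilon\cdot\opt(G,k)$ deletion bound, and use $w_a \geq \opt(G,k)$ for the second claim. No substantive differences.
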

\begin{proof}

The graph $G_1$ is essentially obtained by a greedy algorithm that removes all $2$-cuts of small weights. In particular,  
we obtain the graph $G_1$ by the following procedure. Obtain a $2$-approximate $k$-cut of $G$ using any of the known approximation algorithm~\cite{NaorR01,ravi2008approximating,SaranV95}, let $w_a$ be the weight of this cut.  Initialize $G'=G$ and as long as the graph has a non-trivial $2$-cut  of weight at most $\frac{\epsilon \cdot  w_a}{k-1}$ and $\cc(G')<k$, we remove all edges of this $2$-cut from $G'$. Return the resulting graph $G'$ as $G_1$.

%

Observe that the number of times the above procedure will be repeated is at most $k-1$ times,  since if it is
repeated $k-1$ times we have cut the graph into at least $k$ connected components. Furthermore, in each iteration 
in which $E(G')$ is modified, the size of the set of edges $E'$ removed from $G'$ is at most 
$\frac{\epsilon\cdot w_{a}}{k-1}$.  
Thus $|E(G)|-|E(G_1)|$, the total number of edges removed from $G$ to obtain $G_1$ is less than or equal to $\epsilon\cdot w_{a}$. Since $w_{a}\leq 2\cdot \opt(G,k)$, it follows that $|E(G)|-|E(G_1)|\leq 2\epsilon \cdot \opt(G,k)$.  Also, if $\cc(G_1)<k$, then the  weight of a minimum non-trivial $2$-cut  of $G_1$ is more than $\frac{\epsilon\cdot w_{a}}{k-1}$. Since $w_{a}\geq \opt(G,k)$, it follows that $\frac{\epsilon\cdot w_{a}}{k-1} \geq \frac{\epsilon\cdot \opt(G,k)}{k-1}$. Thus, if $G_1$ has less than $k$ connected components, each non-trivial $2$-cut of $G_1$ has weight more than 
$\frac{\epsilon\cdot \opt(G,k)}{k-1}$. Since, we can find a non-trivial minimum $2$-cut in polynomial time, we have that the algorithm runs in polynomial time. 
This concludes the proof. 
\end{proof}

Next given an unweighted graph $G$, an integer $k$ and an $\epsilon >0$, we give a sparsification procedure that outputs a subgraph $G'$ of $G$, and a number $p$ such that with high probability it holds that for {\em every} partition $\tilde{P}$ of $V(G) = V(G')$ into $k$ parts, the number of edges of $G'$ crossing the partition $\tilde{P}$ is within a $1 \pm \epsilon$ factor of $p$ times the number of edges crossing it in $G$. This procedure is very similar to that of  known randomized cut sparsifiers~\cite{BenczurK15,Karger94}).

\begin{lemma}\label{RandLemma2}
There exists a polynomial time algorithm that takes as input  an integer $k>0$, an unweighted graph $G$ with $\cc(G)<k$, 
 and a real number $\frac{1}{n}<\epsilon \leq 1$ and outputs a subgraph  $G_2$ with $V(G_2)=V(G)$, and a real number $r$  
 such that with probability at least $\prup$, for all $k$-cuts $\tilde{P}$ in $G$, 
 $(1-\epsilon)\cdot w(G,\tilde{P})\leq w(G_2,\tilde{P})\cdot r \leq (1+\epsilon)\cdot w(G_2,\tilde{P})$. 
\end{lemma}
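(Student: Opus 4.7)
The plan is a randomized edge-sampling sparsifier in the Karger style~\cite{Karger94,BenczurK15}. First compute $\opt(G,2)$ exactly in polynomial time using any global min-cut algorithm, applied per connected component (since $\cc(G)<k$ the number of components is $\cO(k)$, so one can sparsify each component independently and take their disjoint union; I will assume $G$ is connected for the analysis). Set
\[
p\ :=\ \min\!\left\{1,\ \frac{C \log n}{\opt(G,2)\cdot \epsilon^2}\right\}, \qquad r\ :=\ 1/p,
\]
for an absolute constant $C$ fixed below, and form $G_2$ by keeping each edge of $G$ independently with probability $p$. If $p=1$ then $(G_2,r)=(G,1)$ and the lemma is immediate, so assume $p<1$.

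For any fixed $k$-cut $\tilde{P}$ of weight $w := w(G,\tilde{P})$, the variable $w(G_2,\tilde{P})$ is a sum of $w$ independent Bernoulli$(p)$ indicators with mean $pw$. Lemma~\ref{lem:chernoff} gives
\[
\Pr\!\Big[\,|w(G_2,\tilde{P}) - pw| \geq \epsilon\, pw\,\Big]\ \leq\ 2\exp(-\epsilon^2 pw/3)\ \leq\ 2\, n^{-(C/3)\cdot w/\opt(G,2)}.
\]
On the complementary event, multiplying by $r=1/p$ gives $(1-\epsilon)w(G,\tilde{P}) \leq r\cdot w(G_2,\tilde{P}) \leq (1+\epsilon) w(G,\tilde{P})$, which is exactly the sandwich the lemma asks for (I read the second $w(G_2,\tilde{P})$ in the displayed statement as a typo for $w(G,\tilde{P})$).

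The main obstacle is taking a union bound over the (potentially super-polynomially many) $k$-cuts of $G$. I would bucket them by weight: let $B_i$ be the set of $k$-cuts with weight in $[2^i\opt(G,2),\,2^{i+1}\opt(G,2))$ for $i=0,1,2,\ldots$. Two facts control $|B_i|$. First, every $k$-cut has weight at least $k\cdot\opt(G,2)/2$ (each of its $k$ parts has edge-boundary of size at least $\opt(G,2)$ and each cut edge belongs to exactly two such boundaries), so in particular $\opt(G,k) \geq k\opt(G,2)/2$. Second, the Karger--Stein cut-counting bound implies that the number of $k$-cuts of weight at most $\alpha\cdot\opt(G,k)$ is at most $n^{\cO((k-1)\alpha)}$. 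Combining these yields $|B_i| \leq n^{\cO(2^i)}$, with the implicit constant independent of $k$. With the per-cut failure probability above, the union bound becomes
\[
\sum_{i\geq 0} |B_i|\cdot 2\, n^{-(C/3)\cdot 2^i}\ \leq\ 2\sum_{i\geq 0} n^{-(C/3-c_0)\cdot 2^i}
\]
for an absolute constant $c_0$. Choosing $C$ so that $C/3 - c_0 \geq 27$ makes this sum (dominated by the $i=0$ term) at most $1/n^{26}$. This fixes $C$ and completes the argument; the algorithm clearly runs in polynomial time.
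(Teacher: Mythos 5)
Your core argument is sound for connected graphs, and it is a genuinely different route from the paper's. You union bound directly over $k$-cuts, bucketed by weight, and you make this affordable by combining two facts: (i) in a connected graph every $k$-cut has weight at least $k\cdot\opt(G,2)/2$ (boundaries of the $k$ parts, each counted twice), and (ii) the Karger--Stein bound that the number of $k$-cuts of weight at most $\alpha\cdot\opt(G,k)$ is $n^{\cO(\alpha(k-1))}$; the two cancel the dependence on $k$ in the exponent, so a per-cut failure probability of $n^{-\Omega(w/\opt(G,2))}$ suffices. The paper instead never counts $k$-cuts at all: it proves concentration simultaneously for all non-trivial $2$-cuts (using only Karger's $n^{2\alpha}$ bound on $\alpha$-minimum $2$-cuts) and then observes that the weight of any $k$-cut is half the sum of the weights of the $2$-cuts $(P_i, V\setminus P_i)$, so concentration for $2$-cuts transfers to all $k$-cuts for free. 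The paper's route uses a weaker counting tool, needs no lower bound on $\opt(G,k)$, and is insensitive to whether $G$ is connected; your route is a legitimate alternative but leans on the approximate \emph{$k$-cut} counting theorem of Karger and Stein~\cite{KargerS96}, which you should cite or prove rather than assert. (Your reading of the right-hand side of the displayed inequality as a typo for $w(G,\tilde{P})$ is correct, and your handling of $p\geq 1$ is a nice touch the paper omits.)

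The genuine gap is the disconnected case, which the lemma explicitly allows ($\cc(G)<k$) and which is exactly the situation in which the lemma is invoked: in Section~\ref{section:combining} it is applied to $G_1$, which may have up to $k-1$ components. Your one-line reduction, ``sparsify each component independently and take their disjoint union,'' does not work as stated: if each component gets its own sampling rate $p_j$ (based on its own min cut), the scale factors $1/p_j$ differ across components and there is no single number $r$ satisfying the lemma, since a $k$-cut mixes edges from several components. If instead you use one common $p$ defined from the global minimum non-trivial $2$-cut, then your two key facts fail on the whole graph: a $k$-cut of a disconnected graph can have weight as small as $\opt(G,2)$ (parts may be unions of whole components), so the bound $w(\tilde{P})\geq k\opt(G,2)/2$ is false and the counting bound relative to $\opt(G,k)$ no longer yields $|B_i|\leq n^{\cO(2^i)}$ with a constant independent of $k$. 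The argument can be repaired --- e.g., keep a single $p$, and union bound component by component over all $\ell$-cuts of each (connected) component for every $\ell\leq k$, absorbing the extra polynomial factor into the constant $C$; if no component has a deviating $\ell$-cut then every $k$-cut of $G$ is preserved with the common $r$ --- but this step is missing and needs to be spelled out. Note that the paper's $2$-cut decomposition sidesteps the issue entirely, since trivial part-boundaries contribute nothing.
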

\begin{proof}
We obtain the graph $G_2$ by the following procedure: (i) Initialize $G'=G$. (ii) Fix ${p=\frac{100 \cdot \log n}{\epsilon^2\cdot \opt(G,2)}}$, where $\opt(G,2)$ denotes the weight of a non-trivial 
minimum $2$-cut of $G$. (iii) Keep each edge $e$ in $G'$ with probability $p$ and remove with probability $1-p$. (iv) Return $G'$ as $G_2$ and $r=1/p$.

\begin{claim}
\label{claim:singledeviate}
For every non-trivial $2$-cut $C$ in $G$, $|w(G_2,C)-w(G,C)\cdot p|\geq \epsilon\cdot w(G,C)\cdot p$ with probability at most $2\cdot n^{\frac{-100\cdot w(G,C)}{3\cdot \opt(G,2)}}$.
\end{claim}
\begin{proof}
Let us associate a random variable $X_e$ to each edge $e\in G$, $X_e=1$ if the edge $e$ is in graph $G_2$ and $0$ otherwise. The random variable $X_e = 1$ with probability $p$ and $X_e=0$ with probability $1-p$. Let $E'$ be the set of edges in the cut $C$ in $G$ and $X=\underset{e\in E'}{\sum}X_e.$
The expectation of $X$ (denoted by $\mu$), is $\underset{e\in E'}{\sum}p=w(G,C)\cdot p$.  
Applying Chernoff bound~(Lemma~\ref{lem:chernoff}) on $X$, we get 
$$\Pr[|w(G_2,C)-w(G,C)\cdot p]\geq \epsilon\cdot w(G,C)\cdot p|  \leq   2\cdot e^{-{\frac{100 \epsilon^2\cdot \log n \cdot  w(G,C)}{3\epsilon^2\cdot \opt(G,2)}}}
 \leq  2\cdot n^{\frac{-100\cdot w(G,C)}{3\cdot \opt(G,2)}}. $$
\end{proof}

Next we would like to simultaneously apply Claim~\ref{claim:singledeviate} over all non-trivial $2$-cuts in $G$. Towards this we will first bound the number of non-trivial $2$-cuts in $G$
\begin{claim}
The probability that for all non-trivial $2$-cuts $C$ in $G$, $|w(G_2,C)-w(G,C)\cdot p|\leq \epsilon\cdot w(G,C)\cdot p$ is at least $\prup$.
\end{claim}
\begin{proof}
First we show that the number of non-trivial $2$-cuts in $G$ of weight $i$, where $\opt(G,2)\leq i\leq |E(G)|$ is at most $n^{2i/\opt(G,2)}$.  For any $\alpha \geq 1$, a cut is called an {\em $\alpha$-minimum cut} if its number of edges is at most $\alpha$ times larger than a minimum cut. In any undirected graph, and for any real number $\alpha \geq 1$,  the number of $\alpha$-minimum cuts is at most $n^{2\alpha}$~\cite{Karger94}. By choosing $\alpha= i/\opt(G,2)$, we get that the number of non-trivial $2$-cuts in $G$ of weight $i$ is at most $n^{2i/\opt(G,2)}$.

Now we use union bound to show that the probability of all non-trivial $2$-cuts $C$ of $G$ satisfying $|w(G_2,C)-w(G,C)\cdot p|\leq \epsilon\cdot w(G,C)\cdot p$ is at least $\prup$. Let $A$ denote the event  that some non-trivial $2$-cut $C$ in $G$ does not satisfy the required inequality. An upper bound on $\Pr[A]$
can be obtained by combining the Claim~\ref{claim:singledeviate}  along with the bound on the number of cuts of weight $i$ as follows. However, recall that the number of edges in $G$, say $m$, is upper bounded by  $m_{\sf orig}^2/\epsilon$, where $m_{\sf orig}$ is the number of edges in the input graph (before we applied the Knapsack trick to make it an  unweighted graph). This implies that $m\leq m_{\sf orig}^2/\epsilon \leq n^5$. 

\begin{eqnarray*}
\Pr[A] &\leq  &  \displaystyle\sum_{i=\opt(G,2)}^{m} 2 \cdot n^{\frac{-100 i}{3\cdot \opt(G,2)}} \cdot n^{\frac{2i}{\opt(G,2)}}  \\
 &\leq & \displaystyle\sum_{i=\opt(G,2)}^{m} 2\cdot n^{-31i/\opt(G,2)}\\
 &\leq & \frac{1}{n^{31}} \cdot  \frac{n^4}{\epsilon} \\
 & \leq & \pru.
\end{eqnarray*}

\end{proof}
\begin{claim}
The probability that for all $k$-cuts $\tilde{P}$ in $G$, $(1-\epsilon)\cdot w(G,\tilde{P})\leq w(G_2,\tilde{P})\cdot r \leq (1+\epsilon)\cdot w(G,\tilde{P})$ is at least $\prup$. 
\end{claim}
\begin{proof}
Let $\tilde{P}=\{P_1,\ldots ,P_k\}$ be a $k$-cut in $G$, each part $P_i$ is in some connected component $X$ of $G$. Let $\cal C$ contain all the non-trivial $2$-cuts $C_i=(P_i,V(G_1)\backslash P_i)$ in $G$. 
 The weight of $\tilde{P}$ in $G$ is given by,
\begin{align*}
w(G,\tilde{P}) &= \frac{\displaystyle{\sum_{P_i\in \tilde{P}} w(G,(P_i,V(G_1\backslash P_i))}}{2} \\
 & = \frac{\displaystyle{\sum_{C_i\in {\cal C}} w(G,C_i)}}{2}
\end{align*}
Thus, using the previous claim, with probability at least $\prup$,
\begin{align*}
 |w(G_2,C_k)-w(G,C_k)\cdot p| &= \frac{ \bigg | \displaystyle\sum_{C_i\in {\cal C}}w(G_2,C_i) -  \displaystyle\sum_{C_i\in {\cal C}} w(G,C_i) \bigg | }{2} \\
 &= \frac{ \bigg | \displaystyle\sum_{C_i\in {\cal C}}
 w(G_2,C_i) - w(G,C_i) \bigg | }{2} \\
 &= \frac{ \bigg | \displaystyle\sum_{C_i\in {\cal C}}\epsilon\cdot w(G,C_i)\cdot p \bigg | }{2}  \\
 &=  \frac{ \epsilon\cdot p \cdot \bigg |\displaystyle\sum_{C_i\in {\cal C}} w(G,C_i)\bigg | }{2}  \\ 
 &= \epsilon\cdot w(G,\tilde{P})\cdot p
\end{align*} 
\end{proof}
The previous claim concludes the proof of the lemma.
\end{proof}
%


\section{Edge Unbreakable Decomposition Theorem in Polynomial Time}
\label{section:decomposition}
In this section we give our main decomposition theorem. That is, we give a tree decomposition where each bag is edge-unbreakable. In particular we will prove the following theorem.

\begin{theorem}\label{thm:decomp}\label{thm:decomposition}
Given an $n$-vertex graph $G$ and an integer $s$, one can in polynomial time compute a rooted compact tree decomposition $(\tau,\chi)$ of $G$ such that 
\begin{enumerate}
\setlength{\itemsep}{-2pt}
\item every adhesion of $(\tau,\chi)$ is of size at most $s$;
\item every bag of $(\tau,\chi)$ is $((s+1)^5,s)$-edge-unbreakable.
\end{enumerate}
\end{theorem}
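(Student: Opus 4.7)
The plan is to produce $(\tau,\chi)$ by a top-down recursive procedure. I start from the trivial decomposition whose root bag equals $V(G)$ and maintain the invariant that the partial decomposition built so far is compact with every adhesion of size at most $s$. For each current leaf bag $X$, I ask the following question: does $X$ admit an edge cut $(A,B)$ of $G$ of order at most $s$ with both $|A \cap X| > (s+1)^5$ and $|B \cap X| > (s+1)^5$? If the answer is no, the leaf is already $((s+1)^5,s)$-edge-unbreakable and is finalised. If the answer is yes, I use the witnessing cut to subdivide the node: create two children whose bags lie on the two sides of the cut, with the at most $s$ vertices of $A \cap B$ placed into the new adhesion. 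Because every split strictly reduces the number of vertices contained in a bag outside the new adhesion, the recursion finalises every leaf after a polynomial number of steps, and the total bag size of the decomposition remains polynomial.

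The decisive subroutine is therefore the following balanced edge-separator primitive: given a graph $H$ and integer $s$, either output an edge cut of order at most $s$ that is ``balanced enough'' to drive the recursion, or certify that no edge cut of order at most $s$ has both sides of size greater than $(s+1)^5$. This is exactly the content of Subsection~\ref{subsec:flw}: a polynomial-time approximation algorithm for small balanced edge-separators. The algorithm solves a relaxation (typically flow/LP-based) of the exact balanced separator problem; the polynomial gap between what it can find and what it can rule out is precisely the $(s+1)^5$ slack that appears in the unbreakability parameter of the theorem. Crucially, the primitive never returns a separator of order larger than $s$, which is what allows every adhesion of $(\tau,\chi)$ to stay bounded by $s$.

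Once the recursion halts, I post-process the decomposition into a compact one, following Cygan et al.~\cite{DBLP:journals/corr/abs-1810-06864} almost verbatim: whenever a node $t$ fails the compactness condition (i.e.\ either $G[\alpha(t)]$ is disconnected or $N_G(\alpha(t)) \subsetneq A_t$), I either split $t$ along the connected components of $G[\alpha(t)]$ or contract superfluous adhesion vertices into the parent. Each such rewrite only restricts bags and adhesions, so bags stay $((s+1)^5,s)$-edge-unbreakable (unbreakability is inherited by subsets with the same parameters) and adhesions stay of size at most $s$. The details here are exactly the refinement of Subsections~\ref{sec:refine} and~\ref{sec:decoAlgo} and need no new ideas.

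The main obstacle, and the only genuinely new ingredient over the template of Cygan et al., is the polynomial-time approximation scheme for small balanced edge-separators that drives the recursion. Exact minimisation is intractable in polynomial time, so the challenge will be to design a rounding scheme whose approximation loss falls only on the balance parameter (blowing $s+1$ up to $(s+1)^5$) while preserving the separator size bound $s$ exactly; any loss on the separator size would compound with the recursion and violate the adhesion bound. Everything else, including termination, compactification, and the invariance bookkeeping, is routine once the primitive of Subsection~\ref{subsec:flw} is in place.
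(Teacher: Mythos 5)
Your high-level architecture (repeatedly split any bag that is not $((s+1)^5,s)$-edge-unbreakable using a ``find a balanced small edge cut or certify unbreakability'' primitive, then compactify) is the same template the paper follows, but the proposal has a genuine gap at exactly the point that constitutes the theorem's new content. The decisive primitive — a \emph{polynomial-time} algorithm that, given a terminal set $Q$, either certifies that no edge cut of order at most $s$ has $(s+1)^5$ terminals on each side, or outputs an edge cut of order at most $s$ with at least $s+1$ terminals on each side (Lemma~\ref{lem:unbreakableTesting}) — is not proved in your sketch; you defer it with ``the challenge will be to design a rounding scheme'' and gesture at a flow/LP relaxation. That gesture does not obviously work: known LP/flow-based balanced-separator approximations lose a factor on the \emph{cut size} (which, as you yourself note, would destroy the adhesion bound), and making the entire loss fall on the balance parameter is the nontrivial step. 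The paper's actual argument is purely combinatorial: from a spanning tree it builds a polynomial-size family of connected sets, each containing at least $s+1$ terminals (subtrees $T_v$, tree paths $P_{uv}$, and paths bundled with $s+1$ groups of child subtrees), proves that if a $((s+1)^5,s)$-breaking cut exists then some set of the family lies entirely on each side, and then runs pairwise min-cut computations. Without this (or some equally concrete replacement), the theorem is not proved — everything else is indeed the Cygan et al.\ machinery.

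A second, more technical flaw: you split a bag along an \emph{edge} cut $(A,B)$ and put ``the at most $s$ vertices of $A\cap B$'' into the new adhesion, but by definition $A\cap B=\emptyset$ for an edge cut, and a naive two-sided split leaves the cut edges (both of whose endpoints land in different children) uncovered by any bag, violating property (T2). One must first convert the edge cut into a \emph{vertex} separation of order at most $s$ — e.g.\ via the lean-witness/Menger route of Claim~\ref{claim:equiv} — and use that separator as the adhesion; the paper's refinement step (Lemma~\ref{lem:vertex-refine}) also has to insert separator vertices into bags along tree paths and uses the Menger paths to argue that no bag grows, which is what makes the potential $\Pot_{G,s}$ strictly decrease and yields the polynomial bound on the number of refinement rounds. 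Your termination claim (``every split strictly reduces the number of vertices outside the new adhesion'') is asserted rather than argued and would need an analogue of this potential analysis to go through.
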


The proof of Theorem~\ref{thm:decomp} closely follows the proof of Theorem~$1.2$ of Cygan et al.~\cite{DBLP:journals/corr/abs-1810-06864} with a few differences. The overall template of iterated refinement is {\em exactly} the same. 

The main difference is found in Section~\ref{subsec:flw} where we state and prove Lemma~\ref{lem:unbreakableTesting}. In Cygan et al.~\cite{DBLP:journals/corr/abs-1810-06864} the corresponding lemma has vertex separators instead of edge cuts,  condition $1$ has $|A \cap Q| > s$ and $|B \cap Q| > s$ instead of  $|A \cap Q| > (s+1)^5$ and $|B \cap Q| > (s+1)^5$ and their algorithm runs in $s^{\cO(s)}n^{\cO(1)}$ time instead of polynomial time.  The proofs of Lemma~\ref{lem:unbreakableTesting} and Lemma~$3.4$ of Cygan et al.~\cite{DBLP:journals/corr/abs-1810-06864} are entirely different. The remaining parts of the proof are so similar that we {\bf have copied their text verbatim and incorporated minor modifications to suit our need}. In particular {\em all of Subsections~\ref{sec:refine} and~\ref{sec:decoAlgo} are included solely for the convenience of the reader and ease of understanding}.

\subsection{Improving Tree Decomposition given a Lean Witness}
\label{sec:refine}
Our starting point for the proof of Theorem~\ref{thm:decomp} is the definition of a lean tree decomposition of Thomas~\cite{Thomas90}; we follow the formulation of~\cite{BellenbaumD02}.
\begin{definition}
A tree decomposition $(\tau,\chi)$ of a graph $G$ is called \emph{lean} if for every $t_1,t_2 \in V(\tau)$
and all sets $Z_1 \subseteq \chi(t_1)$ and $Z_2 \subseteq \chi(t_2)$ with $|Z_1| = |Z_2|$, either
$G$ contains $|Z_1|$ vertex-disjoint $Z_1-Z_2$ paths, or there exists an edge $e \in E(\tau)$ on the path from $t_1$ to $t_2$
such that $|\adh(e)| < |Z_1|$.
\end{definition}

For a graph $G$ and a tree decomposition $(\tau,\chi)$ that is not lean, a quadruple
$(t_1,t_2,Z_1,Z_2)$ for which the above assertion is not true is called a \emph{lean witness}.
Note that it may happen that $t_1 = t_2$ or $Z_1 \cap Z_2 \neq \emptyset$.
In particular $(t,t,Z_1,Z_2)$ is called a \emph{single bag lean witness}.
The \emph{order} of a lean witness is the minimum order of a separation $(X_1,X_2)$ such that $Z_i \subseteq X_i$ for $i=1,2$. Observe that by Menger's theorem the following conditions are equivalent.


\begin{claim}
\label{claim:equiv}
For a tree decomposition $(\tau,\chi)$, a node $t \in V(\tau)$, and subsets $Z_1,Z_2 \subseteq \chi(t)$,
either both or none of the following two conditions are true:
\begin{itemize}
  \item $(t,t,Z_1,Z_2)$ is a single bag lean witness for $(\tau,\chi)$,
  \item there is a separation $(X_1,X_2)$ of $G$ with $Z_i \subseteq X_i$, where $X = X_1 \cap X_2$, such that $|X| < |Z_1| = |Z_2|$,
  and there is a set of vertex disjoint $Z_1-Z_2$ paths $\{P_x\}_{x \in X}$ 
  such that $x \in V(P_x)$ for every $x \in X$.
\end{itemize}
Moreover given a single bag lean witness $(t,t,Z_1,Z_2)$ one can find the above separation $(X_1,X_2)$ and set of paths $\{P_x\}_{x \in X}$ in polynomial time.
\end{claim}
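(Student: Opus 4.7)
The plan is to prove both implications as direct consequences of Menger's theorem, with the polynomial-time construction supplied by the standard max-flow reduction. The first step is to unpack the lean-witness definition in the special case $t_1 = t_2 = t$: the unique path from $t$ to $t$ in $\tau$ has no edges, so the clause ``there exists an edge $e$ on the path from $t_1$ to $t_2$ with $|\adh(e)| < |Z_1|$'' is vacuously false. Therefore $(t,t,Z_1,Z_2)$ is a single bag lean witness if and only if $|Z_1| = |Z_2|$ and $G$ contains strictly fewer than $|Z_1|$ vertex-disjoint $Z_1$--$Z_2$ paths.

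For the forward direction ($1 \Rightarrow 2$), I would invoke the structural form of Menger's theorem. Assuming the maximum number of vertex-disjoint $Z_1$--$Z_2$ paths is some $k < |Z_1|$, Menger supplies a set $X \subseteq V(G)$ of size $k$ that meets every $Z_1$--$Z_2$ path, together with $k$ vertex-disjoint $Z_1$--$Z_2$ paths $\{P_x\}_{x \in X}$ such that $x \in V(P_x)$ for each $x$. From $X$ I would construct the desired separation by reachability: let $C_1$ be the set of vertices reachable from $Z_1 \setminus X$ in $G - X$, and set $X_1 := C_1 \cup X$ and $X_2 := V(G) \setminus C_1$. Since $X$ separates $Z_1$ from $Z_2$ in $G$, no edge joins $C_1$ to $X_2 \setminus X$, so $(X_1,X_2)$ is a separation with $X_1 \cap X_2 = X$. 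We have $Z_1 \subseteq X_1$ directly, and $Z_2 \subseteq X_2$ because any $v \in Z_2 \setminus X$ cannot lie in $C_1$ (else a $Z_1$--$Z_2$ path in $G - X$ would exist). The reverse direction ($2 \Rightarrow 1$) is essentially immediate: any $Z_1$--$Z_2$ path in $G$ must traverse $X = X_1 \cap X_2$, so a vertex-disjoint family can have at most $|X| < |Z_1|$ members, certifying the lean witness.

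For the algorithmic statement, I would apply the standard reduction of vertex-disjoint $Z_1$--$Z_2$ paths to a unit-capacity maximum flow: split each $v \in V(G)$ into an in-copy and out-copy joined by a unit-capacity arc, add a super-source attached to $Z_1$ and a super-sink attached to $Z_2$, and compute an integral max-flow. The flow decomposes into $k$ vertex-disjoint paths, and a minimum vertex separator $X$ of size $k$ is read off from the source-side reachable vertices in the residual graph; the reachability construction above then produces $(X_1,X_2)$ in linear time, so the whole procedure is polynomial. The main point requiring minor care, rather than a genuine obstacle, is the possibility $Z_1 \cap Z_2 \neq \emptyset$: a vertex $v \in Z_1 \cap Z_2$ forms a trivial path and satisfies $v \in X_1 \cap X_2 = X$ automatically, and the vertex-split flow reduction handles it uniformly, so no separate case analysis is needed.
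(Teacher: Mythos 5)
Your proposal is correct and follows essentially the same route as the paper, which simply asserts the claim as an observation via Menger's theorem; you fill in the standard details (the vacuity of the adhesion clause when $t_1=t_2$, the reachability construction of $(X_1,X_2)$ from a minimum separator, and the unit-capacity vertex-split max-flow computation for the algorithmic part), all of which are sound.
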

A minimum order of a separation from the second point is called the {\em{order}}
of the single bag lean witness $(t,t,Z_1,Z_2)$.
To argue that the refinement process stops after a small number of steps, or that it stops
at all, we define the following potential for a graph $G$, a tree decomposition $(\tau,\chi)$, and an integer $s$:
$$\Pot_{G,s}(\tau, \chi) = \sum_{t \in \tau} \max(0, |\chi(t)|-2s-1).$$
%

\begin{lemma}\label{lem:vertex-refine}
Assume we are given a graph $G$, an integer $s$, a tree decomposition $(\tau, \chi)$ of $G$
with every adhesion of size at most $s$, one node $q \in \tau$ with $|\chi(q)| > 2s+1$, and
a single bag lean witness $(q,q,Z_1,Z_2)$ with $|Z_1| \le s+1$.
Then one can in polynomial time compute a tree decomposition $(\tau',\chi')$ of $G$
with every adhesion of size at most $s$
such that $\Pot_{G,s}(\tau,\chi) > \Pot_{G,s}(\tau',\chi')$.
\end{lemma}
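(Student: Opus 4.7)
The plan is to use the single bag lean witness to find a small separation of $G$ and \emph{cleave} the oversized bag $\chi(q)$ along it, following the classical Thomas / Bellenbaum--Diestel template for constructing lean decompositions. First I would apply Claim~\ref{claim:equiv} to the witness $(q,q,Z_1,Z_2)$ to obtain in polynomial time a separation $(X_1,X_2)$ of $G$ with $Z_i \subseteq X_i$, together with a family of vertex-disjoint $Z_1$--$Z_2$ paths $\{P_x\}_{x \in X}$, where $X := X_1 \cap X_2$ satisfies $|X| < |Z_1| \le s+1$ and each $P_x$ contains the corresponding $x$. Since $Z_1, Z_2 \subseteq \chi(q)$ and every $P_x$ both begins and ends in $\chi(q)$, a standard push-along-the-paths argument lets me additionally assume $X \subseteq \chi(q)$; in particular $|X| \le s$.

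Next I would construct $(\tau',\chi')$ by replacing $q$ with two new nodes $q_1, q_2$ joined by a tree edge, setting $\chi'(q_i) := \chi(q) \cap X_i$ for $i=1,2$, and reattaching every subtree of $\tau - q$ to whichever of $q_1, q_2$ contains its old adhesion with $q$. Because $X \subseteq \chi(q)$, the new adhesion equals $X$ and hence has size at most $s$; the other adhesions and bags remain untouched. Verifying T1--T3 then reduces to noting that $(X_1,X_2)$ has no edges between $X_1 \setminus X$ and $X_2 \setminus X$ (covering T2 inside $\chi(q)$) and that any vertex shared between $\chi'(q_2)$ and a bag hanging under $q_1$ must lie in $X$ (giving T3). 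In the residual case where the adhesion of some subtree spans both sides of $X$, I would either pick the initial separation to respect that adhesion (since each adhesion has size $\le s$, absorbing its crossing vertices into $X$ still keeps us within budget) or perform a local rerouting along the paths $P_x$.

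Finally I would verify the potential drop. Writing $a := |\chi'(q_1)|$ and $b := |\chi'(q_2)|$, we have $a + b = |\chi(q)| + |X|$; moreover $Z_2 \setminus X$ is non-empty (since $|X| < |Z_2|$) and lies in $X_2 \setminus X_1$, hence outside $\chi'(q_1)$, giving $a < |\chi(q)|$ and symmetrically $b < |\chi(q)|$. A short case split on whether $a$ or $b$ exceeds $2s+1$, using $|\chi(q)| > 2s+1$ and $|X| \le s$, then yields $\max(0, a-2s-1) + \max(0, b-2s-1) < |\chi(q)| - 2s - 1$, which is exactly the desired strict decrease of $\Pot_{G,s}$. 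The main obstacle I anticipate is precisely the reattachment of subtrees whose adhesion with $q$ crosses $X$; this is the standard technical subtlety in the Thomas / Bellenbaum--Diestel proof and is resolved by the same classical technique here, without affecting any of the other bookkeeping or the polynomial running time.
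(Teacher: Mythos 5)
There is a genuine gap, and it sits exactly where your proposal diverges from the Bellenbaum--Diestel template rather than following it. You assume, via a ``push-along-the-paths'' argument, that the separator $X=X_1\cap X_2$ produced by Claim~\ref{claim:equiv} can be taken inside $\chi(q)$. This is false in general for separations of order at most $s$. Consider $Z_1=\{a_1,\dots,a_{s+1}\}$ and $Z_2=\{b_1,\dots,b_{s+1}\}$ inside $\chi(q)$, with no edges inside $\chi(q)$, and a single vertex $c\notin\chi(q)$ (living in a child bag) adjacent to all of $Z_1\cup Z_2$; pad $\chi(q)$ so that $|\chi(q)|>2s+1$. Then $(q,q,Z_1,Z_2)$ is a single bag lean witness of order $1$ with $X=\{c\}$, but any separator contained in $\chi(q)$ must contain all of $Z_1$ or all of $Z_2$ and hence has size $s+1>s$. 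Pushing $c$ along its path to an endpoint in $\chi(q)$ does not yield a separator at all. Since your entire construction (split $\chi(q)$ into $\chi(q)\cap X_1$ and $\chi(q)\cap X_2$, reattach the subtrees of $\tau-q$, declare the new adhesion to be $X$) presupposes $X\subseteq\chi(q)$, it does not go through.

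The second weak point is the ``residual case'' of subtrees whose adhesion with $q$ meets both $X_1\setminus X_2$ and $X_2\setminus X_1$: this is not a corner case but the heart of the matter, and neither of your fixes works. Absorbing the crossing vertices of such adhesions into $X$ can involve arbitrarily many subtrees, so $|X|$ can exceed $s$ (breaking the adhesion bound) and the identity $a+b=|\chi(q)|+|X|$ underlying your potential computation is lost; ``local rerouting along the paths $P_x$'' is not an argument. The paper's proof (which is the classical technique you appeal to) avoids both problems by not reattaching subtrees at all: it duplicates the \emph{entire} decomposition, sets $\chi^i(t^i)=\chi(t)\cap X_i$, joins the two copies at $q^1q^2$ with adhesion exactly $X$, inserts each $x\in X\setminus\chi(q)$ only into the bags on the path from $q^i$ to the node $t_x$ nearest $q$ containing $x$, and then uses the vertex-disjoint linking paths $\{P_x\}_{x\in X}$ together with Menger's theorem to show that every such bag already contained a vertex of $P_x$ from $X_{3-i}\setminus X_i$ that the intersection with $X_i$ removes, so no bag or adhesion grows; the strict potential drop at $q$ then follows from $|Z_i|>|X|$. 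Your final potential calculation would be fine if the construction were valid, but without the correct handling of separator vertices outside $\chi(q)$ and of crossing adhesions, the lemma is not proved.
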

\begin{proof}
Apply Claim~\ref{claim:equiv}, yielding a separation $(X_1,X_2)$ and a family $\{P_x\}_{x \in X}$ of vertex disjoint $Z_1-Z_2$ paths, where $X=X_1\cap X_2$. 
Note that $s+1 \ge |Z_1| > |X|$, hence the order of $(X_1,X_2)$ is at most $s$.

We construct a tree decomposition $(\tau',\chi')$ as follows.
First for every $i = 1,2$, we construct a decomposition $(\tau^i,\chi^i)$ of $G[X_i]$:
we start with $\tau^i$ being a copy of $\tau$, where a node $t \in V(\tau)$ corresponds
to a node $t^i \in V(\tau^i)$, and $\chi^i(t^i) := \chi(t) \cap X_i$ for every $t \in V(\tau)$.
Then for every $x \in X \setminus \chi(q)$ we take the node $t_x \in V(\tau)$ such that $x \in \chi(t_x)$ and $t_x$
is closest to $q$ among such nodes, and insert $x$ into every bag $\chi^i(t^i)$ for $t^i$
lying on the path between $q^i$ (inclusive) and $t_x^i$ (exclusive) in $\tau^i$.

Clearly, $(\tau^i,\chi^i)$ is a tree decomposition of $G[X_i]$ and $X \subseteq \chi^i(q^i)$.
We construct $(\tau',\chi')$ by taking $\tau'$ to be a disjoint union of $\tau^1$ and $\tau^2$, with
the copies of the node $q$ connected by an edge $q^1q^2$, and $\chi' := \chi^1 \cup \chi^2$.
Since $(X_1,X_2)$ is a separation and $X = X_1 \cap X_2$ is present in both bags $\chi^1(q^1)$, $\chi^2(q^2)$, we infer
that $(\tau',\chi')$ is a tree decomposition of $G$.

We now argue that every adhesion of $(\tau', \chi')$ is of size at most $s$. This is clearly
true for the edge $q^1q^2$ 
connecting $\tau^1$ and $\tau^2$, as the adhesion there is exactly $X$ and $|X| \leq s$.

Consider now a bag $t^i$ in a tree $(\tau^i,\chi^i)$. The set $\chi^i(t^i) \setminus \chi(t)$
consists of some vertices of $X$, namely those vertices $x \in X \setminus \chi(q)$ for which $t$
lies on the path between $q$ (inclusive) and $t_x$ (exclusive).
However, by the properties of a tree decomposition
and Menger's theorem, $\chi(t)$ contains at least one vertex of $P_x$ that lies between 
$x$ and the endpoint in $Z_{3-i}$, that is in $V(P_x) \cap (X_{3-i} \setminus X_i)$.
This vertex is not present in $\chi^i(t^i)$ and, consequently, $|\chi^i(t^i)| \leq |\chi(t)|$.
The same argumentation holds for every edge $e^i \in E(\tau^i)$ and adhesion of this edge.

We are left with analysing the potential decrease.
Fix $t \in V(\tau)$. We analyse the difference between the contribution to the potential
of $t$ in $(\tau,\chi)$ and the copies of $t$ in $(\tau',\chi'$).
First, by the analysis in the previous paragraph, we have $|\chi^i(t^i)| \leq |\chi(t)|$
for $i=1,2$. Consequently, if $|\chi(t)| \leq 2s+1$, then $|\chi^i(t^i)| \leq 2s+1$ for $i=1,2$
and the discussed contributions are all equal to $0$.
Furthermore, if $|\chi^i(t^i)| \leq 2s+1$ for some $i=1,2$, then 
$\max(|\chi(t)|-2s-1,0) \geq \max(|\chi^{3-i}(t^{3-i})|-2s-1,0)$ as
$|\chi^{3-i}(t^{3-i})| \leq |\chi(t)|$.

Otherwise, assume that $|\chi(t)| > 2s+1$ and $|\chi^i(t^i)| > 2s+1$ for $i=1,2$.
Note that $\chi^1(t^1) \cup \chi^2(t^2) \subseteq \chi(t) \cup X$
while $\chi^1(t^1) \cap \chi^2(t^2) \subseteq X$ and $|X| \leq s$.
Consequently, 
\begin{align*}
|\chi(t)|-2s-1 &\geq |\chi(t) \setminus X_2| + |\chi(t) \setminus X_1| - 2s-1 \\
    &\geq |(\chi(t) \setminus X_2) \cup X| + |(\chi(t) \setminus X_1) \cup X| - 4s-1 \\
    &> |\chi^1(t^1)| -2s-1 + |\chi^2(t^2)| -2s-1.
\end{align*}
We infer that for every bag $t \in V(\tau)$, the contribution of $t$ to the potential
$\Pot_{G,s}(\tau,\chi)$ is not smaller than the contribution of the two copies of $t$
in $\Pot_{G,s}(\tau',\chi')$. To prove strict inequality, we show that these contributions
are not equal for the bag $q$.

Recall that we assumed $|\chi(q)| > 2s+1$. By the previous argumentation,
the only chance for equal contributions of $q$ to $\Pot_{G,s}(\tau,\chi)$
and $q^1,q^2$ to $\Pot_{G,s}(\tau',\chi')$ is that $|\chi^i(s^i)| \leq 2s+1$ for some $i=1,2$.
However, note that $\chi^{3-i}(q^{3-i}) = (\chi(q) \setminus X_i) \cup X$. Consequently, as $Z_i \subseteq \chi(q)$ and $|Z_i| > |X|$, we have
$|\chi^{3-i}(q^{3-i})| < |\chi(q)|$ and hence $|\chi(q)|-2s-1 > \max(|\chi^{3-i}(q^{3-i})|-2s-1,0)$. This finishes the proof of the lemma.
\end{proof}

\subsection{Finding a Lean Witness}
\label{subsec:flw}

\begin{lemma}
\label{lem:unbreakableTesting}
There exists a polynomial time algorithm that takes as input an unweighted connected graph $G$, a set of terminals $Q\subseteq V(G)$ and an  integer $s$ and in polynomial time concludes either of the two statements. 
\begin{enumerate}
\setlength{\itemsep}{-3pt}
\item There is no edge cut $(A,B)$ of $G$ of order at most $s$ such that $|A\cap Q|\geq (s+1)^5$ and  $|B\cap Q|\geq (s+1)^5$.
\item Outputs a edge cut $(A,B)$ of $G$ of  order at most $s$  such that $|A\cap Q|\geq s+1$ and  
$|B\cap Q|\geq s+1$. 

\end{enumerate}
\end{lemma}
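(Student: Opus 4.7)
The plan is to compute the Gomory--Hu tree $T$ of $G$ with respect to the terminal set $Q$ and resolve the dichotomy by inspecting its edges, together with a structural argument on the tree of ``heavy components''.

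First I would construct $T$ in polynomial time using $|Q|-1$ max-flow computations; $T$ is a weighted tree on $Q$ such that for every $u,v \in Q$ the minimum $u$-$v$ cut in $G$ equals the minimum-weight edge on the unique $u$-$v$ path in $T$. Each edge $e$ of $T$ comes with an explicit cut $(A_e,B_e)$ in $G$ realizing this value, and moreover $A_e \cap Q$ and $B_e \cap Q$ coincide with the two components of $T-e$. The algorithm then enumerates every \emph{light} edge $e$ of $T$ (one with $w(e) \le s$) and checks whether both components of $T-e$ contain at least $s+1$ terminals. If so, it outputs the corresponding cut $(A_e,B_e)$, discharging conclusion~2. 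If no light edge passes the test, it outputs conclusion~1. All steps are polynomial time.

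The delicate part is arguing conclusion~1 is correct. Suppose instead that no light edge of $T$ yields a balanced split, yet a cut $(A,B)$ of order at most $s$ exists with $|A \cap Q|, |B \cap Q| \ge (s+1)^5$. Define a \emph{heavy component} as a maximal subset of $Q$ that is connected in $T$ by edges of weight $>s$; by the Gomory--Hu property, any two terminals in the same heavy component have min cut $>s$, so every heavy component lies entirely on one side of $(A,B)$. Contracting heavy components in $T$ produces a tree $T^*$ whose edges are all light, and our assumption becomes: for every edge of $T^*$, one side of $T^*-e$ contains at most $s$ terminals. Orient each edge of $T^*$ from its $Q$-heavier side toward its $Q$-lighter side. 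Any oriented tree has a \emph{source} $v^*$ (follow in-edges backwards from any node; the process terminates since $T^*$ is acyclic), and at $v^*$ every subtree hanging off $v^*$ contains at most $s$ terminals of $Q$.

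The remaining step, and the main obstacle, is to extract a contradiction from $v^*$ together with $(A,B)$. Let $C_{v^*}$ be the heavy component at $v^*$; WLOG $C_{v^*} \subseteq A$. Then the $\ge (s+1)^5$ terminals in $B \cap Q$ are distributed over subtrees of $v^*$ of size $\le s$, so at least $(s+1)^4$ subtrees lie on the $B$-side. I would close the argument by invoking the laminar structure of Gomory--Hu cuts at a common node: the min cuts $(A_{e_i},B_{e_i})$ at edges $e_i=(v^*,u_i)$ can be chosen so that the sets $B_{e_i}$ (each containing the associated subtree $\Sigma_i$) are pairwise disjoint. Applying submodularity of the cut function iteratively to these disjoint $B_{e_i}$'s, together with the fact that $(A,B)$ simultaneously separates $C_{v^*}$ from every $B$-side $\Sigma_i$, should force $|E(A,B)|$ to contain one distinct edge per $B$-side subtree, hence $|E(A,B)| \ge (s+1)^4 > s$, contradicting $|E(A,B)| \le s$. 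Making this edge-counting argument rigorous, in the presence of possibly shared bottleneck edges between different $B_{e_i}$'s, is the technical core of the lemma; the comfortable slack between $(s+1)^5$ and $s+1$ is exactly what provides room for the combinatorial overhead in this final step.
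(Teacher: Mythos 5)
You correctly reduce to the situation of a ``source'' heavy component $v^*$ all of whose branches carry at most $s$ terminals, but the step you defer as the technical core is not merely unfinished --- the claim it needs is false, and with it the algorithm itself. The cut $(A,B)$ separates $C_{v^*}$ from all the $B$-side branches \emph{simultaneously}; since those branches lie on the same side of $(A,B)$ as one another, nothing forces a distinct edge of $E(A,B)$ per branch, and disjointness or submodularity of the Gomory--Hu cuts $B_{e_i}$ only bounds each $w(e_i)$ individually, not $|E(A,B)|$. Concretely, for any $s\ge 2$ let $K$ be a clique on $(s+1)^5$ terminals, let $h$ be a non-terminal hub joined to $K$ by $s$ edges, and attach $(s+1)^5$ groups $F_1,\dots,F_{(s+1)^5}$ to $h$, each group a clique on $s+2$ vertices of which exactly $s$ are terminals, attached to $h$ by a single edge. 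The cut $(K,\,V(G)\setminus K)$ has order $s$ and both sides contain at least $(s+1)^5$ terminals, so conclusion~1 must not be reported. Yet the heavy components are $K\cap Q$ and the sets $F_i\cap Q$, every cross-component minimum cut equals $1$, and the only capacity-$1$ cuts in $G$ are the bridges $\delta(F_i)$, whose terminal bipartition is $F_i\cap Q$ (of size $s$) versus the rest. Hence every Gomory--Hu tree on $Q$ is a star in which each light edge splits off exactly $s$ terminals, your test ``both sides of $T-e$ contain at least $s+1$ terminals'' fails for every light edge, and your algorithm wrongly outputs conclusion~1.

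The moral is that a terminal-balanced cut of order at most $s$ need not be witnessed by any single Gomory--Hu tree edge: one must also consider cuts separating the core from a \emph{bundle} of many small branches. This is precisely where the paper's proof differs. It builds, from an arbitrary spanning tree, a polynomial-size family of connected candidate sets, each containing at least $s+1$ terminals: rooted subtrees, ancestor--descendant paths, and, crucially, a path together with $s+1$ pairwise disjoint bundles of its ``good children'' (so that at least one bundle is missed by all of the at most $s$ cut edges and lies, with at least $s+1$ terminals, entirely inside one side of the hypothetical cut); it then runs a minimum cut computation between every pair of candidates and outputs any pair whose minimum cut has at most $s$ edges. Your framework has no analogue of this grouping step, and without something of that kind conclusion~1 cannot be certified by inspecting Gomory--Hu edges alone.
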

\begin{proof}
Towards the proof, we will show that if $G$ has a edge cut $(X,Y)$ of order at most $s$ such that $|X\cap Q|\geq (s+1)^5$ and  $|Y\cap Q|\geq (s+1)^5$, then we will always output a edge cut $(A,B)$ of $G$ of  order at most $s$  such that $|A\cap Q|\geq s+1$ and  $|B\cap Q|\geq s+1$.

For rest of the proof we assume that we have a edge cut $(X,Y)$ of order at most $s$ such that $|X\cap Q|\geq (s+1)^5$ and  $|Y\cap Q|\geq (s+1)^5$. Let $E(X,Y)$ be denoted by $S$. 
To design our algorithm in polynomial time we will output a polynomial size family $\cal F$ of {\em connected sets} that has following properties.  
\begin{enumerate}
\setlength{\itemsep}{-3pt}
\item For each set $C\in \cal F$, we have that $|C\cap Q|\geq s+1$. 
\item There exists a set $C_1\in \cal F$ that belongs to $G[X]$ and there exists a set $C_2\in \cal F$ that belongs to $G[Y]$. 
\end{enumerate} 

Note that if we have $\cal F$ of polynomial size satisfying the above properties then the algorithm just iterates over pairs $(C_1,C_2)$ and computes a min-cut between $C_1$ and $C_2$ in polynomial time~\cite{DBLP:books/daglib/0015106}. If this cut is of size at most $s$, then we can get a edge cut $(A,B)$ of order at most $s$ such that $C_1 \subseteq A$ and 
$C_2\subseteq B$. Thus, we  have that $|A\cap Q|\geq s+1$ and  $|B\cap Q|\geq s+1$, concluding the proof. Next, we will show that if we do have a edge cut $(X,Y)$ of order at most $s$ such that $|X\cap Q|\geq (s+1)^5$ and  $|Y\cap Q|\geq (s+1)^5$, then we will always find such a pair of $(C_1, C_2)$. 

Now our job  reduces to finding $\cal F$ of polynomial size satisfying the above properties. To achieve this we start with a spanning tree $T$ of $G$ and use this to get the family $\cal F$. Let $T$ be rooted at some vertex. This immediately gives parent, child, ancestor, descendant relations among the vertices of the tree.  We add the following connected sets  to the  family ${\cal F}$. 
\begin{itemize}
\setlength{\itemsep}{-3pt}
\item For every vertex $v$ of the tree, all the vertices in the rooted subtree $T_v$, if $|V(T_v)\cap Q|\geq s+1$.
\item For every pair of vertices $(u,v)$ in the tree where one is descendant of other, the path between them $P_{uv}$, if 
$|V(P_{uv})\cap Q|\geq s+1$.
\item Let $P_{uv}$ be the path where one is descendant of other. Let $W$ denote the set of children of $P_{uv}$. That is all those vertices whose parents are on $P_{uv}$. We say $w\in W$  is {\em a good child} if $|V(T_w)\cap Q|\geq 1$. 
Let $W_{\sf good}$ be the subset of $W$ containing all good children. Let $|W_{\sf good}|\geq (s+1)^2$. We partition  
$|W_{\sf good}|$ into $(s+1)$ parts, say $W_1,\ldots, W_{s+1}$ such that each part contains at least $s+1$ vertices and 
no two sets have any element in common. Now, for all $1\leq i\leq s+1$, add the set containing the path $P_{uv}$ and  subtree rooted at vertices in $W_i$. We call these sets $H_{1},\ldots, H_{s+1}$. 
%
\end{itemize}
By our construction each set in $\cal F$ is connected, contains at least $s+1$ terminals, and has size polynomial in $n$. 

%

%

%
%
%

To prove that the $\cal F$ has the desired property, all that remains to show is that there exists a set $C_1\in \cal F$ that belongs to $G[X]$ and there exists a set $C_2\in \cal F$  that belongs to $G[Y]$. We will only show that there exists a set $C_1\in \cal F$ that belongs to $G[X]$. The proof for the other case is analogous.   

Let $C$ be a connected component in $G[X]$ (in $T-S$)  with largest intersection with $Q$. Observe that $C$ has at least $(s+1)^4$ terminals as deleting at most $s$ edges can only make $s+1$ connected components. 
Since, $T$ is a a rooted tree,  $C$ itself is a rooted tree, let $r$ be the root of $C$. For every vertex $v \in C$ incident to a cut-edge (edge in $S$), mark the path from $v$ to $r$ in $C$). Let $M$ be the set of marked vertices in $C$. Observe that $M$ is a {\em sub-tree} of $C$. For every $u\in M$, define the weight of $u$ to be the number of terminals in the connected component of 
$C -(M -\{u\})$ that contains $u$. Note that  the sum of weights in $M$ is equal to number of terminal  in $C$. 
Let   $P$ be the root-leaf path in $M$ of largest weight. Since, $|S|\leq s$, we have that the weight of $P$ ($w(P)$) is at least $\frac{w(C)}{s}\geq \frac{(s+1)^4}{s}$. Here, $w(C)$ denotes the number of terminals in $C$. If $P$ contains at least $s+1$ terminals, then clearly $V(P)\in \cal F$. Thus, assume that there are at most $s$ terminals on $P$.  Since,  $w(P)$ is at least $\frac{(s+1)^4}{s}$ we have that either there is a child $u$ of a vertex in $P$, such that $T_u$ has at least $s+1$  terminals or $|W_{\sf good}|\geq (s+1)^2$. First let us handle the case when  $|W_{\sf good}|\geq (s+1)^2$. In this case, look at the sets $H_{1},\ldots, H_{s+1}$ that we added to $\cal F$. We know that any edge in $S$ can touch at most one set among  $H_{1},\ldots, H_{s+1}$. This implies that there exists a set $H_i$ that has no edge from $S$ incident to it and thus, 
 it is completely contained in $G[X]$. Indeed, it also contains at least $s+1$ terminals.

 Now we know that $|W_{\sf good}|< (s+1)^2$ and $P$ has at most $s$ terminals. Let $u$ be the vertex of $P$. 
 Then, the weight of a vertex $u$ on $P$ is given by the following. 
 
\begin{displaymath} 
    w(u) =
\left\{
	\begin{array}{ll}
		1+\sum\limits_{\mbox{$ux$ and $x$ is not marked}} |V(T_x)\cap Q|  & \mbox{if $u$ is a terminal}  \\ \\
		  \sum\limits_{\mbox{$ux$ and $x$ is not marked}} |V(T_x)\cap Q|     & \mbox{otherwise} 
	\end{array}
\right.
\end{displaymath}

Observe that if a vertex $w$ is not {\em marked} then the subtree $T_w$ is not incident to any edge in $S$ because of our construction of $M$. Recall that, $w(P)$ is at least $\frac{(s+1)^4}{s}$ . This implies that one of the edges that projected weight onto $P$ projected at least $1/(s+1)^2$ fraction of total weight. Let $ux$ be that edge. By the above definition of the weight of $u$ ,we have that $x$ is unmarked and the subtree of $T$ rooted at $x$ has many terminals and furthermore, $T_x$ is not incident to any edge in $S$ and hence lies in $G[X]$. Furthermore, $T_x$ has at least 
$\frac{\frac{(s+1)^4}{s}-s}{(s+1)^2} \geq (s+1)$. This concludes the proof. 
\end{proof}

\subsection{The Algorithm}
\label{sec:decoAlgo}
%
%
%

For our proof we will use the following \emph{cleanup procedure} on a tree decomposition $(\tau,\chi)$ of a graph $G$: as long as there exists
an edge $st \in E(\tau)$ with $\chi(s) \subseteq \chi(t)$, contract the edge $st$ in $\tau$, keeping the name $t$ and the bag $\chi(t)$
at the resulting vertex. We shall say that a node $s$ and bag $\chi(s)$ \emph{disappears} in a cleanup step.
Clearly, the final result $(\tau',\chi')$ of a cleanup procedure is a tree decomposition of $G$ and every adhesion of $(\tau',\chi')$ is also an adhesion of $(\tau,\chi)$. Observe that $|E(\tau')| \leq |V(G)|$: if one roots $\tau'$ at an arbitrary vertex,
 going from child to parent on every edge $\tau'$ we forget at least one vertex of $G$, and every vertex can be forgotten only once.

It is well-known that every rooted tree decomposition can be refined to a compact one; see e.g.~\cite[Lemma 2.8]{BojanczykP16}. We will need  the following for our proof.

\begin{lemma}[\cite{DBLP:journals/corr/abs-1810-06864},Lemma~2.3]\label{lem:compactification}
Given a graph $G$ and its tree decomposition $(\tau,\chi)$, one can compute in polynomial time a compact tree decomposition $(\wh{\tau},\wh{\chi})$ of $G$ such that every bag of $(\wh{\tau},\wh{\chi})$ 
is a subset of some bag of $(\tau,\chi)$, and every adhesion of $(\wh{\tau},\wh{\chi})$ is a subset of some adhesion of $(\tau,\chi)$.
\end{lemma}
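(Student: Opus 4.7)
The plan is to repair the two failure modes of compactness — $G[\alpha(t)]$ disconnected, or $A_t$ properly containing $N_G(\alpha(t))$ — one node at a time, via local surgeries that never enlarge a bag. I would root $\tau$ arbitrarily and traverse it top-down in BFS order, so that when a non-root node $t$ is visited, compactness already holds at every strict ancestor of $t$, and subsequent modifications at descendants will only touch bags strictly below $t$. A direct check shows that splits or trims performed at descendants of $t$ do not change $\gamma(t)$ (since the trimmed vertex remains in some bag of $\tau_t$ outside the trimmed subtree, and splits only rearrange the tree), so compactness at $t$ is preserved permanently after it has been achieved.

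At such a $t$, first compute the connected components $C_1,\ldots,C_p$ of $G[\alpha(t)]$ in the current decomposition. If $p\ge 2$, excise the subtree $\tau_t$ and attach in its place $p$ parallel siblings under $p(t)$, where the $i$-th sibling is a copy of $\tau_t$ in which every bag $\chi(q)$ is replaced by $\chi(q)\cap(C_i\cup A_t)$. Properties (T1) and (T2) are immediate: the union of restricted bags recovers $\gamma(t)$, and every edge of $G[\gamma(t)]$ either lies within some $C_i$ or has an endpoint in $A_t$, which is retained in every copy; (T3) holds because for $v\notin A_t$ the unique $i$ with $v\in C_i$ determines the only copy in which $v$ survives, and the bags containing $v$ within that copy form the same connected subtree as they did in $\tau_t$. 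The adhesion between $p(t)$ and the root of the $i$-th copy equals $\chi(p(t))\cap\chi(t)\cap(C_i\cup A_t)=A_t$, using $C_i\cap\chi(p(t))=\emptyset$ (a consequence of T3 applied to the original, since any $v\in C_i\cap\chi(p(t))$ would force $v\in A_t$), while every internal adhesion of the copy is the original adhesion intersected with $C_i\cup A_t$, hence a subset. Next, for each $v\in A_t\setminus N_G(\alpha(t))$, delete $v$ from $\chi(t)$ and from every descendant bag; (T2) still holds since $v$'s neighbors in $\gamma(t)$ all lie in $A_t\subseteq\chi(p(t))$, and (T3) because the set of bags containing $v$ merely shrinks to an ancestor-closed connected piece. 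Both operations only shrink bags, so the bag-subset guarantee is maintained inductively.

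After the top-down traversal I would apply the cleanup procedure that contracts any edge $st$ with $\chi(s)\subseteq\chi(t)$, which only removes nodes and preserves every invariant; this forces $|V(\wh{\tau})|=\OO(|V(G)|)$. The one place I expect real care is arguing that the whole procedure runs in polynomial time, since each split can spawn several copies of the current subtree. The key point is that across sibling copies the $\alpha$-parts $C_i$ are pairwise disjoint, and more generally across any two incomparable nodes of the evolving $\wh{\tau}$ their $\alpha$-sets stay disjoint; combined with the final cleanup bounding the tree to size at most $|V(G)|$, this caps the total work by a polynomial in $|V(G)|+|V(\tau)|$. Once compactness is verified at every non-root node and the subset guarantees on bags and adhesions are confirmed along the way, the lemma follows.
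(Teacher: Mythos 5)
Your overall surgery-plus-cleanup strategy is the standard one (the paper itself only cites this as Lemma~2.3 of Cygan et al.\ and gives no proof), but as written there is a concrete gap in the step that is supposed to enforce $N_G(\alpha(\cdot))=A_{(\cdot)}$ at the new nodes created by a split. When you replace $\tau_t$ by $p$ copies, restricting the bags of the $i$-th copy to $C_i\cup A_t$, the root of that copy has $\alpha = C_i$ and adhesion $A_t$; compactness there requires the adhesion to equal $N_G(C_i)$. Your subsequent trimming only deletes the vertices of $A_t\setminus N_G(\alpha(t))$, i.e.\ those adjacent to \emph{no} component, so any $v\in A_t$ that is adjacent to some other component $C_j$ but not to $C_i$ survives in every bag of the $i$-th copy. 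Its root then has $N_G(\alpha)\subsetneq$ adhesion, and since your BFS never revisits these replacement roots (and trimming at their descendants cannot remove $v$ from the root bag itself), this violation is never repaired, so the output need not be compact. The fix is exactly what the standard construction does: restrict the $i$-th copy to $C_i\cup N_G(C_i)$ (equivalently, perform the trimming per copy with respect to $N_G(C_i)$); note $N_G(C_i)\subseteq A_t$, so the new adhesion $N_G(C_i)$ is still a subset of an original adhesion and all your (T1)--(T3) and subset arguments go through unchanged.

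A second, smaller issue is the running-time argument. Disjointness of $\alpha$-sets across incomparable nodes plus a cleanup performed only at the very end does not by itself bound the intermediate tree: copies whose $\alpha$-sets are empty are ``junk'' that is repeatedly duplicated by later splits at their ancestors, and the final cleanup cannot retroactively cap how large the tree got. What does work is to observe that the component sets produced over all splits form a laminar family of distinct non-empty subsets of $V(G)$ (strictly nested along root--leaf lines because a split only happens when $\alpha$ is properly disconnected inside the enclosing component, and disjoint across incomparable nodes), so there are $\OO(|V(G)|)$ split events with total multiplicity $\OO(|V(G)|)$; since in a top-down traversal each split duplicates a subtree of at most the original size $|V(\tau)|$, the evolving decomposition always has $\OO(|V(G)|\cdot|V(\tau)|)$ nodes. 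Alternatively, interleaving the $\chi(s)\subseteq\chi(t)$ contraction step with the traversal keeps the tree at $\OO(|V(G)|)$ edges throughout. With the per-component trimming fixed and one of these size bounds in place, your proof is complete.
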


With Lemmas~\ref{lem:unbreakableTesting} and \ref{lem:compactification} in hand, we now formally prove
Theorem~\ref{thm:decomp}.

\begin{proof}[Proof of Theorem~\ref{thm:decomp}.]
First, we can assume that $G$ is connected, as otherwise we can compute a tree decomposition
for every component separately, and then glue them up in an arbitrary fashion.

We start with a naive unrooted tree decomposition $(\tau,\chi)$ that has a single bag
with the entire vertex set and iteratively improve it, using Lemma~\ref{lem:vertex-refine},
until it satisfies the conditions of Theorem~\ref{thm:decomp}, except for compactness,
      which we will handle in the end. 
We will maintain the invariant that every adhesion of $(\tau,\chi)$ is of size at most $s$.
At every step the potential $\Pot_{G,s}(\tau, \chi)$ will decrease, leading to at most
$n-2s-1$ steps of the algorithm.

Let us now elaborate on a single step of the algorithm.There is one reasons why $(\tau,\chi)$ may not satisfy
the conditions of Theorem~\ref{thm:decomp}:  it contains a bag that is not $((s+1)^5,s)$-edge unbreakable.  
%
Consider a bag $Q := \chi(t)$ that is not $((s+1)^5,s)$-edge unbreakable. 
Now using Lemma~\ref{lem:unbreakableTesting}, with graph $G$ and terminal set $Q$, in polynomial time we can 
find a a edge cut $(A,B)$ of $G$ of  order at most $s$  such that $|A\cap Q|\geq s+1$ and  
$|B\cap Q|\geq s+1$. Consequently, Lemma~\ref{lem:unbreakableTesting} allows us to find a single-bag lean witness of order at most $s$ for the node $t$ in polynomial time.

Thus, we uncovered a single-bag lean witness for a node $t\in V(\tau)$ satisfying $|\chi(t)|>2s+1$.
Hence, we may refine the decomposition by applying Lemma~\ref{lem:vertex-refine} and proceed iteratively with the refined decomposition. As asserted by Lemma~\ref{lem:vertex-refine}, the potential $\Pot_{G,k}(\tau,\chi)$ strictly decreases in each iteration.

We remark that between the refinement steps we need to reduce the number of edges in the decomposition to at most $n$ using the same reasoning as in the proof of Lemma~\ref{lem:compactification}. 
That is, as long as there exists
an edge $st \in E(\tau)$ with $\chi(s) \subseteq \chi(t)$, we can contract $s$ onto $t$ keeping $\chi(t)$ as the bag of the new node.
A direct check shows that this operation neither increases the sizes of adhesions nor the potential $\Pot_{G,k}$ 
of the decomposition, while, as argued in the proof of Lemma~\ref{lem:compactification}, it bounds the number of edges of $\tau$ by $n$.

Observe that the potential $\Pot_{G,k}(\tau,\chi)$ is bounded polynomially in $n$ and every iteration can be executed in polynomial time. Hence, we conclude that the refinement process finishes within polynomial time and outputs an unrooted tree decomposition $(\tau,\chi)$ that satisfies all the requirements of Theorem~\ref{thm:decomp}, except for being compact.
This can be remedied by applying the algorithm of Lemma~\ref{lem:compactification}.
Note that neither the edge-unbreakability of bags nor the upper bound on the sizes of adhesions can deteriorate as a result of applying the algorithm of Lemma~\ref{lem:compactification},
as every bag (resp. every adhesion) of the obtained tree decomposition is a subset of a bag (resp. an adhesion) of the original one. This concludes the proof. 
\end{proof}


\section{A $s^{\cO(k)}$ Time Algorithm for {\sc Min $k$-cut} on Unweighted Graphs}
\label{section:DP}
In this section we give our new algorithm for \mkk parameterized by $s$ and $k$. It is a dynamic programming 
algorithm over an edge unbreakable tree decomposition (Theorem~\ref{thm:decomp})  
and uses $T$-trees introduced by Thorup~\cite{Thorup08} crucially. 

Let $G$, $k$, $s$ be input to \mkk. Towards our proof, in polynomial time, we first compute 
a rooted compact tree decomposition $(\tau,\chi)$ of $G$, using Theorem~\ref{thm:decomp}, such that 
\begin{enumerate}
\setlength{\itemsep}{-2pt}
\item every adhesion of $(\tau,\chi)$ is of size at most $s$;
\item every bag of $(\tau,\chi)$ is $((s+1)^5,s)$-edge-unbreakable.
\end{enumerate}

Let $\tilde{P}^*$ be an optimum $k$-cut of $G$. Secondly, we compute a family $\mathcal{T}$ of spanning trees of $G$ such that there exists a tree $T^*\in \mathcal{T}$ such that $E(T^*)$ crosses $\tilde{P}^*$ at most $2k-2$ times. Recall that $T^*$ is said to be a $T$-tree of $G$. Such a family can be computed in polynomial time using Thorup's tree packing algorithm~\cite{Thorup08}.

Finally, we assume that the input graph $G$ is connected. Else, if $\cc(G)\geq k$, we return the connected components of $G$ itself as the desired partition of $G$ of value $0$. Otherwise, if  $\cc(G)\leq k$, then at the expense of $k^ks^k$ in time we can guess the values of $k$ and $s$ for each connected component. 

Thus, throughout this section, we assume that we are given $(\tau,\chi)$ and a tree $T$ from $\mathcal{T}$ and remark that the notations $G,k,s,(\tau,\chi),T$ have fixed meanings. Also recall that in some parts, we refer to a $l$-cut of a graph $G'$ as a $l$-partition of $V(G')$ but it will be clear from the context. We begin by defining additional terminologies and proving some results that will helpful in defining the states of the dynamic programming.


\begin{definition}
For a tree $T'$ and an integer $c$, a partition $\tilde{P}$ of $V(T')$ is $c$-{\em admissible} with respect to $T'$ if $E(T')$ crosses $\tilde{P}$ at most $c$ times. For a subset $X$ of $V(T')$ a partition $\tilde{P}$ of $X$ is $T'$-{\em feasible} if $\tilde{P}$ is the projection onto $X$ of a $2k-2$-admissible partition of $V(T')$.
\end{definition}
%
We will denote the family of all $T'$-feasible partitions of $X$ by ${\cal F}^X_{T'}$. Further, if $X=\emptyset$ then ${\cal F}^X_{T'}$ will contain the partition $P_{\emptyset}$, where $P_{\emptyset}$ is the partition of $X$ that contains a single empty part.

Given a tree $T'$, we define the partition of $V(T')$ obtained by removing a subset $E'$ of edges from $T'$ as {\em the  partition of $V(T')$} in which each part is equal to the set of vertices in some tree in the forest obtained by deleting $E'$ from $T'$. 

Given a tree $T'$ and $X \subseteq V(T')$, we define the {\em projection of $T'$ onto $X$} as the tree ${\sf proj}(T',X)$ obtained by the following procedure. Initially, set $T'' = T'$, and exhaustively apply the following modifications to $T''$:
(a) Delete from $T''$ all leaves of $T''$ that are not in $X$.
(b) Smooth all degree $2$ vertices $v$ of $T''$ that are not in $X$. 
Once neither one of the operations $(a)$, $(b)$ can be applied the procedure returns $T''$ as the projection ${\sf proj}(T',X)$. An example of $\sf{proj}(T',X)$ is depicted in Figure~\ref{fig:projection}.
\begin{figure}[h]
    \centering
    \includegraphics[width=0.8\textwidth]{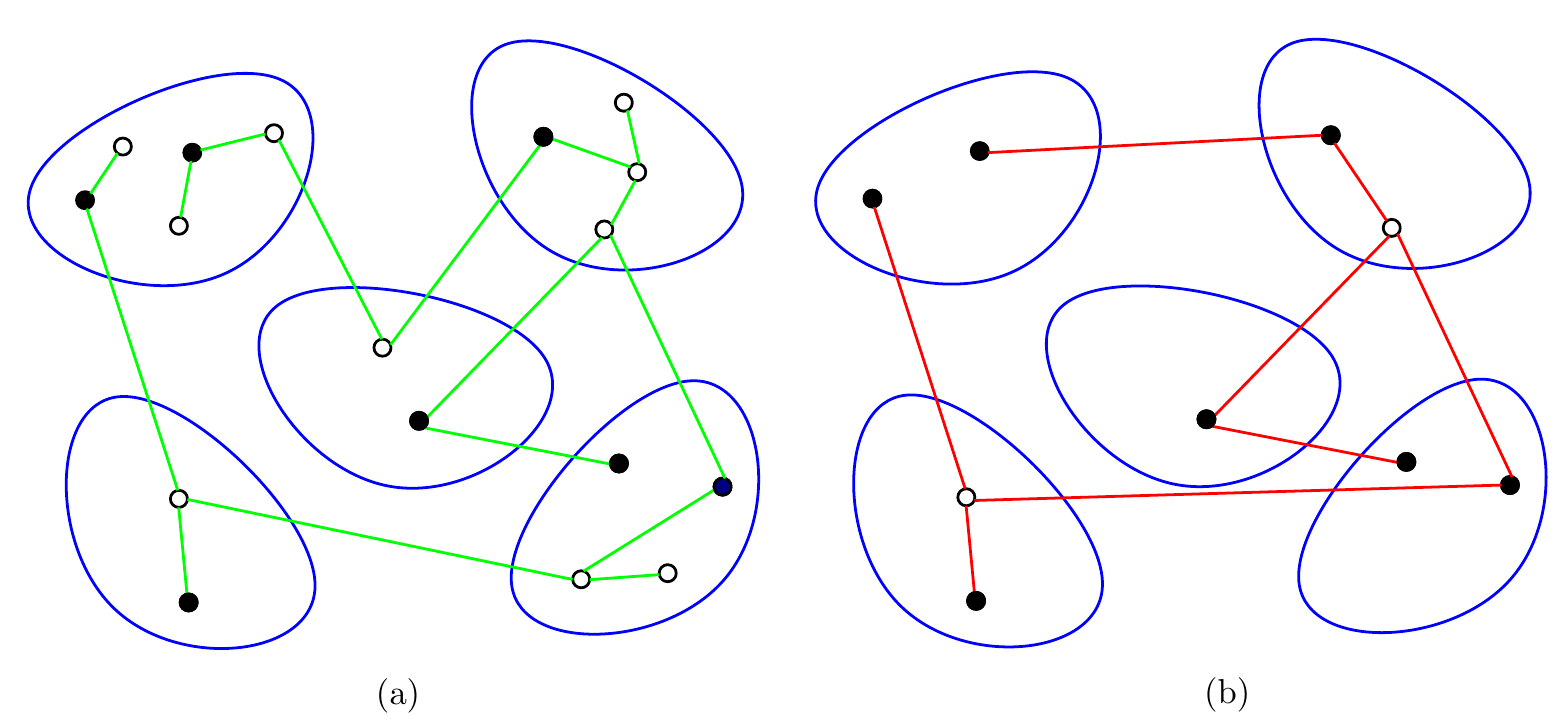}
    \caption{(a) Tree $T'$ is depicted by green edges and the set $X$ by black vertices. (b) $\sf{proj}(T',X)$ is depicted by red edges. }
    \label{fig:projection}
\end{figure}

For the projection ${\sf proj}(T',X)$, we will denote its vertices and edges by $V^{X}_{T'}$ and $E^{X}_{T'}$ respectively. We encapsulate some useful properties of ${\sf proj}(T',X)$ in the following observation.

\begin{observation}
\label{projPropertyObs}
All leaves and degree $2$ vertices of ${\sf proj}(T',X)$ are in $X$. Furthermore, the number of vertices in ${\sf proj}(T',X)$ is at most $2|X|$.
\end{observation}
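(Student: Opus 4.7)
The first claim is immediate from the stopping condition of the projection procedure. My plan is to observe that the procedure halts precisely when neither operation (a) nor operation (b) can be applied, which is by definition the statement that every leaf of ${\sf proj}(T',X)$ lies in $X$ and every degree-$2$ vertex of ${\sf proj}(T',X)$ lies in $X$. I will also remark, for use in the second part, that neither operation (a) nor operation (b) ever removes a vertex belonging to $X$, so any vertex of $X$ that was present in $T'$ survives in ${\sf proj}(T',X)$; in particular $X \subseteq V^X_{T'}$.

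For the bound on $|V^X_{T'}|$, my plan is a standard handshake argument on the tree ${\sf proj}(T',X)$. I will let $a$ denote the number of its vertices outside $X$ and set $b = |X \cap V^X_{T'}|$, which by the preceding remark equals $|X|$. Since ${\sf proj}(T',X)$ is a tree on $a+b$ vertices, it has $a + b - 1$ edges and hence $\sum_v \deg(v) = 2(a+b) - 2$. By the first claim, every vertex outside $X$ has degree at least $3$ in ${\sf proj}(T',X)$, contributing at least $3a$ to the degree sum, while (whenever $a+b \geq 2$) every vertex in $X$ contributes at least $1$, for a total of at least $b$. Combining these bounds gives $2(a+b) - 2 \geq 3a + b$, which rearranges to $a \leq b - 2$, and therefore $|V^X_{T'}| = a + b \leq 2b - 2 \leq 2|X|$. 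The degenerate situations $|V^X_{T'}| \leq 1$ (for instance $X = \emptyset$ or $X$ a single vertex) will be handled by direct inspection.

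I do not expect any significant obstacles: the argument is essentially a counting exercise, and the only mildly delicate point is confirming that the projection procedure never evicts vertices of $X$, which is built into the definitions of (a) and (b).
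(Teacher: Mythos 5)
Your proposal is correct and matches the paper's argument in substance: the paper also gets the first part directly from the stopping condition and derives the $2|X|$ bound from the standard fact that in a tree the number of vertices of degree at least three is at most the number of leaves, which is exactly the handshake count you carry out explicitly (your version even yields the slightly sharper bound $2|X|-2$ when $|X|\geq 2$). Your remark that operations (a) and (b) never remove vertices of $X$, plus direct inspection of the trivial cases, is all the extra care needed, so nothing further is required.
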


The first part of the observation follows directly from the definition of the projection procedure, while the second follows from the well-known fact that in a tree, the number of vertices of degree at least three is at most the number of leaves. 
The following lemma shows that the set of $T$-feasible partitions of a set $X$ does not change when $T$ is projected on $X$. Thus, for enumerating $T$-feasible partitions of $X$ it will be sufficient to work with the projection of $T$ onto $X$ instead, which is much more efficient.
\begin{lemma}\label{sizeoffeasibleset} For all $X\subseteq V(T)$, $\mathcal{F}^X_T = \mathcal{F}^X_{\sf{proj}(T,X)}$, Further $\mathcal{F}^X_T$ is of size $\cO((4k|X|)^{2k})$ and can be computed in time proportional to its size. 
\end{lemma}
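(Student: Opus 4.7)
My plan is to establish both inclusions $\mathcal{F}^X_T \subseteq \mathcal{F}^X_{T'}$ and $\mathcal{F}^X_{T'} \subseteq \mathcal{F}^X_T$ (writing $T' := {\sf proj}(T,X)$), and then bound the size of $\mathcal{F}^X_{T'}$ and describe the enumeration algorithm. The structural fact I will lean on throughout is that every edge $e$ of $T'$ corresponds to a unique path $\pi_e$ in $T$ whose internal vertices were smoothed out, that the paths $\{\pi_e\}_{e \in E(T')}$ are edge-disjoint, that $X \subseteq V(T')$, and that by Observation~\ref{projPropertyObs} we have $|V(T')| \leq 2|X|$.

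For the forward inclusion $\mathcal{F}^X_T \subseteq \mathcal{F}^X_{T'}$, I would take $\tilde{P}\in\mathcal{F}^X_T$ with witness $(2k-2)$-admissible partition $\tilde{Q}$ of $V(T)$, whose crossing edges form a set $E_T \subseteq E(T)$ with $|E_T| \leq 2k-2$. Let $\tilde{Q}'$ be the projection of $\tilde{Q}$ onto $V(T')$; since $X \subseteq V(T')$, $\tilde{Q}'$ still projects to $\tilde{P}$ on $X$. To verify $\tilde{Q}'$ is $(2k-2)$-admissible in $T'$, I would argue that whenever $e=uv\in E(T')$ crosses $\tilde{Q}'$, the vertices $u,v$ lie in different parts of $\tilde{Q}$, so walking along $\pi_e$ in $T$ must traverse at least one edge of $E_T$. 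Since the paths $\pi_e$ are edge-disjoint, this gives an injection from the crossings of $\tilde{Q}'$ into $E_T$, bounding them by $2k-2$.

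For the reverse inclusion $\mathcal{F}^X_{T'} \subseteq \mathcal{F}^X_T$, I would take a witness $(2k-2)$-admissible partition $\tilde{Q}'$ of $V(T')$ with crossing edges $E_{T'}$, pick for each $e\in E_{T'}$ an arbitrary $e^\star\in E(\pi_e)$, and set $E_T := \{e^\star : e\in E_{T'}\}$. The key technical step is to verify that two vertices of $V(T')$ are connected in $T-E_T$ if and only if they are connected in $T'-E_{T'}$; this follows from the edge-disjointness of the $\pi_e$ together with the fact that the $u$--$v$ path in $T$ is precisely the concatenation of $\pi_{e_i}$ for the edges $e_i$ on the $u$--$v$ path in $T'$. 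Consequently, the components of $T-E_T$ that meet $V(T')$ restrict bijectively to the components of $T'-E_{T'}$, while components of $T-E_T$ disjoint from $V(T')$ lie in pruned subtrees and are therefore disjoint from $X$. I would then define $\tilde{Q}$ by grouping the components of $T-E_T$ according to the coarsening given by $\tilde{Q}'$ and arbitrarily absorbing the pruned-subtree components into any part; the crossing edges of $\tilde{Q}$ then lie in $E_T$, so $\tilde{Q}$ is $(2k-2)$-admissible, and its projection onto $X$ equals $\tilde{P}$.

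For the size and the algorithm, Observation~\ref{projPropertyObs} gives $|E(T')|\leq 2|X|-1$, and every partition in $\mathcal{F}^X_{T'}$ is obtained by choosing a set $F\subseteq E(T')$ with $|F|\leq 2k-2$ (at most $(2|X|)^{2k-2}$ choices) and then coarsening the at most $2k-1$ components of $T'-F$ (at most $(2k-1)^{2k-1}$ choices, via the Bell-number bound), which multiplies out to $\cO((4k|X|)^{2k})$. Enumerating these two choices, computing the resulting partition of $V(T')$, projecting onto $X$, and deduplicating via hashing yields $\mathcal{F}^X_T = \mathcal{F}^X_{T'}$ in time proportional to the stated bound. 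I expect the main obstacle to be the formal verification of the $\pi_e$-correspondence in the reverse inclusion, especially making sure that pruned subtrees (which intersect neither $V(T')$ nor $X$) can safely be absorbed without breaking either the bound on crossings or the projection onto $X$.
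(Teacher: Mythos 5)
Your proposal is correct and follows essentially the same route as the paper: both inclusions are proved via the correspondence between edges of ${\sf proj}(T,X)$ and the edge-disjoint paths they represent in $T$ (choosing one crossing edge per path in one direction and one path edge per crossing in the other), and the size bound and algorithm come from enumerating the at most $(2|X|)^{2k-2}$ edge subsets of the projection together with coarsenings of the resulting at most $2k-1$ components. Your packaging via projecting/lifting whole partitions and the component bijection is just a slightly cleaner formulation of the paper's per-pair edge-selection argument, so no substantive difference.
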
 
\begin{proof}
For the proof, we show that every $T$-feasible partition of $X$ is a ${\sf proj}(T,X)$-feasible partition of $X$ and that every ${\sf proj}(T,X)$-feasible partition of $X$ is a $T$-feasible partition of $X$.

Let $\tilde{P}$ be a $T$-feasible partition of $X$, we show that $\tilde{P}$ is also a ${\sf proj}(T,X)$-feasible partition. Since $\tilde{P}$ is $T$-feasible, there exists a $2k-2$-admissible partition $\tilde{P}_{V(T)}$ of $V(T)$ whose projection on $X$ is $\tilde{P}$. Let $E_1$ be the set of edges from $E(T)$ that cross $\tilde{P}_{V(T)}$. We construct a set $E_2$ of edges from ${\sf proj}(T,X)$ of size at most $2k-2$ such that the projection of the partition obtained by removing $E_2$ from ${\sf proj}(T,X)$ on $X$ is $\tilde{P}$. For each pair $x,x'\in X$ in different parts in $\tilde{P}$, there is at least one edge in $E_1$ in the path $P(x,x')$ connecting $x$ and $x'$ in $T$, call one such arbitrary edge $e_{x,x'}$. For each $e_{x,x'}$, we will find an edge $e'_{x,x'}$ in $E({\sf proj}(T,X))$ such that $e'_{x,x'}$ lies in the path between $x$ and $x'$ in ${\sf proj}(T,X)$ and add it to $E_2$ and prove that none of the edges in $E_2$ are in any path $P(y,y')$ in $\sf{proj}(T,X)$ connecting two vertices $y,y'\in X$ that are in same component in $\tilde{P}$. 
Thus the projection of the partition obtained by removing $E_2$ from ${\sf proj}(T,X)$ on $X$ will be $\tilde{P}$ and hence $\tilde{P}$ will be ${\sf proj}(T,X)$-feasible. 
We now find the edge $e'_{x,x'}$. Let $e_{x,x'}=uv$, assume that $P(x,x')=P(x,u)\cup uv \cup P(v,x')$. Let $u'$ be the last vertex in $P(x,u)$ that has not been smoothened in ${\sf proj(T,X)}$ and $v'$ be the first vertex in $P(v,x')$ that has not been smoothened in ${\sf proj(T,X)}$. Note that since we never smooth out $x$ and $x'$, $u'$ and $v'$ must exist and can be equal to $x$ and $x'$ respectively. We set $e'_{x,x'}=u'v'$. 

We now prove that none of the edges $e'_{x,x'}\in E_2$ can lie in the path between two vertices $y,y'\in X$ that are in same components in $\tilde{P}$ in ${\sf proj}(T,X)$. Suppose not, assume some edge $e'_{x,x'}=u'v'\in E_2$ lies in the path between two vertices $y,y'\in X$ that are in the same component in $\tilde{P}$. Then, by the construction of ${\sf proj}(T,X)$, $e_{x,x'}=uv$ must lie in the path between $y,y'$ in $T$. Thus, $y,y'$ would not have been in the same components in $\tilde{P}$, this is a contradiction. 

For the converse, let $\tilde{P}$ be a ${\sf proj}(T,X)$-feasible partition of $X$, we show that $\tilde{P}$ is also a $T$-feasible partition. Since $\tilde{P}$ is ${\sf proj}(T,X)$-feasible, there exists a $2k-2$-admissible partition $\tilde{P}_{V^X_T}$ of $V^X_T$ whose projection on $X$ is $\tilde{P}$. Let $E_1$ be the set of edges from $E^X_T$ that cross $\tilde{P}_{V^X_T}$. We construct a set $E_2$ of edges from $T$ of size at most $2k-2$ such that the projection of the partition obtained by removing $E_2$ from $T$ on $X$ is $\tilde{P}$. For each pair $x,x'\in X$ in different parts in $\tilde{P}$, there is at least one edge in $E_1$ in the path $P(x,x')$ connecting $x$ and $x'$ in ${\sf proj}(T,X)$, call one such arbitrary edge $e_{x,x'}$. For each $e_{x,x'}$, we will find an edge $e'_{x,x'}$ in $E(T)$ such that $e'_{x,x'}$ lies in the path between $x$ and $x'$ in $T$ and add it to $E_2$ and prove that none of the edges in $E_2$ are in any path $P(y,y')$ in $T$ connecting two vertices $y,y'\in X$ that are in same component in $\tilde{P}$.
Thus the projection of the partition obtained by removing $E_2$ from $T$ on $X$ will be $\tilde{P}$ and hence $\tilde{P}$ will be a $T$-feasible partition of $X$. 
We now find the edge $e'_{x,x'}$. Let $e_{x,x'}=uv$, assume that $P(x,x')=P(x,u)\cup uv \cup P(v,x')$. Let $v'$ be the vertex adjacent to vertex $u$ in the path $P(u,v)$ between $u$ and $v$ in $T$. We set $e'_{x,x'}=uv'$. 
We now prove that none of the edges $e'_{x,x'}\in E_2$ can lie in the path between two vertices $y,y'\in X$ in $T$ that are in same components in $\tilde{P}$. Suppose not, assume some edge $e'_{x,x'}=uv'\in E_2$ lies in the path between two vertices $y,y'\in X$ that are in same components in $\tilde{P}$. All the interior vertices in path $P(u,v)$ were smoothened in ${\sf proj}(T,X)$. So, by construction of ${\sf proj}(T,X)$, none of the interior vertices are in $X$ nor have a path to $y$ or $y'$ in $T$ that doesn't include $u$ or $v$. Thus, $uv$ was an edge in the path between $y$ and $y'$ in ${\sf proj}(T,X)$. This implies that $y,y'$ would not have been in the same components in $\tilde{P}$, this is a contradiction.

Thus, we have proved that $\mathcal{F}^X_T = \mathcal{F}^{X}_{\sf{proj}(T,X)}$. To find the set $\mathcal{F}^X_T$, we do the following procedure. We initialize set $\mathcal{P}=\emptyset$. For each subset $E'$ of edges in $\sf{proj}(T,X)$ of size at most $2k-2$, we obtain the partition $\tilde{P}$ by removing the edges in $E'$ from $\sf{proj}(T,X)$ and add to $\mathcal{P}$ all partitions that $\tilde{P}$ refines. Finally we return $\mathcal{P}$ as $\mathcal{F}^X_{\sf{proj}(T,X)}$. Clearly, the set $\mathcal{F}^X_{\sf{proj}(T,X)}$ is of size $\cO((2|X|)^{2k}(2k)^{2k})$ which is $\cO((4k|X|)^{2k})$ and the procedure takes time corresponding to its size to compute it.
\end{proof}

{\em For the rest of this section, unless stated otherwise, we will assume that the weight of any partition of a set of vertices $X\subseteq V$ is computed with respect to $G[X]$}. 

We now define the states of the dynamic programming algorithm over the tree $\tau$ that we will use to compute the required value. For each node $t \in V(\tau)$ we define a function $f_t \colon \mathcal{F}^{A_t}_{T} \times \{ 1,\ldots ,k \} \longrightarrow \{ 0,\ldots , s \} \cup \{ \infty \}$ that our algorithm will compute. 
%
%
The domain of $f_t$ consists of all pairs $(\tilde{P}_{A_t}, i)$ where $\tilde{P}_{A_t}$ is a $T$-feasible (meaning crosses $T$ at most $ 2k-2$ times) partition of $A_t$, where $A_t$ is the set of vertices that are common between node $t$ and its parent in $\tau$, and $i$ is a positive integer less than or equal to $k$. On input $(\tilde{P}_{A_t}, i)$, $f_t$ returns the smallest possible weight 
of a $T$-feasible $i$-partition $\tilde{P}$ of $V(G_t$) such that $\tilde{P}_{A_t}$ is the projection of $\tilde{P}$ on $A_t$. However, if this weight is greater than $s$, or no such partition exists then $f_t(\tilde{P}_{A_t}, i)$ returns $\infty$.
The main step of the algorithm of Theorem~\ref{thm:paraAlg} is an algorithm that computes $f_t(\tilde{P}_{A_t}, i)$ for every node $t$ and pair $(\tilde{P}_{A_t}, i) \in \mathcal{F}^{A_t}_T \times \{ 1,\ldots,k \}$ assuming that the values of $f_{t'}$ have already been computed for all children $t'$ of $t$.  We now state that this step can be carried out efficiently.  

\begin{lemma} \label{dpstate_lemma}
There exists an algorithm that takes as input $(\tau,\chi)$, a node $t \in V(\tau)$, a $T$-feasible partition $\tilde{P}_{A_t}$ of $A_t$ and a positive integer $i \leq k$, together with the value of $f_{t'}(\tilde{P}_{A_{t'}}', i')$ for every child $t'$ of $t$, $T$-feasible partition $\tilde{P}_{A_{t'}}'$ of $A_{t'}$, and positive integer $i' \leq i$, and outputs $f_t(\tilde{P}_{A_t},i)$ in time $s^{\cO(k)}n^{\cO(1)}$.
\end{lemma}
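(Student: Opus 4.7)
The plan is to compute $f_t(\tilde{P}_{A_t}, i)$ by enumerating a compact representation of how an optimal $T$-feasible $i$-partition $\tilde P$ of $V(G_t)$ (projecting to $\tilde{P}_{A_t}$ on $A_t$ and of weight at most $s$) restricts to $\chi(t)$, and for each such configuration looking up $f_{t_j}(\cdot,\cdot)$ for each child $t_j$ of $t$ and combining the values. Because $|\chi(t)|$ may be as large as $\Omega(n)$, we cannot afford to list extensions of $\tilde{P}_{A_t}$ to $\chi(t)$ one by one; we must produce only $s^{\cO(k)}$ candidate configurations.

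The enumeration will exploit two structural facts. The first is the edge-unbreakability of $\chi(t)$ from Theorem~\ref{thm:decomp}: since $\tilde P$ has weight at most $s$ and the adhesion $A_t$ has size at most $s$, every part $P$ of $\tilde P$ yields an edge cut of $G$ of order at most $2s$, and $((s+1)^5, s)$-edge-unbreakability then forces $|P \cap \chi(t)| \le (s+1)^5$ for all but at most one part. Consequently $\tilde P|_{\chi(t)}$ has at most one ``major'' part and at most $k-1$ ``minor'' parts whose total size in $\chi(t)$ is at most $(k-1)(s+1)^5$. The second fact is $T$-feasibility: $\tilde P|_{\chi(t)}$ arises by cutting at most $2k-2$ edges of ${\sf proj}(T,\chi(t))$ and grouping the resulting subtrees. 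Since $\tilde{P}_{A_t}$ itself is $T$-feasible, Lemma~\ref{sizeoffeasibleset} bounds the number of its cut-edge realizations inside ${\sf proj}(T, A_t)$ (a tree on at most $2s$ vertices) by $s^{\cO(k)}$.

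The enumeration would proceed in two stages. In the first stage, iterate over all $s^{\cO(k)}$ cut-edge realizations of $\tilde{P}_{A_t}$ inside ${\sf proj}(T, A_t)$; each realization picks at most $2k-2$ edges plus a grouping of the resulting tree-components into parts. Lifting the realization to ${\sf proj}(T,\chi(t))$ yields, for every vertex of $\chi(t) \setminus A_t$ that lies on an $A_t$-driven subtree, its unique part label. In the second stage, I would account for the at most $k-1$ ``extra'' minor parts that do not touch $A_t$: each such part is sliced off by an additional edge of ${\sf proj}(T,\chi(t))$ lying on a path that expands an edge of ${\sf proj}(T,A_t)$. I would enumerate the locations of these extra cuts using Lemma~\ref{splittersetlemma} with $s_1 = \cO(k)$ (for the at most $k-1$ extra cuts, each encoded by the pair consisting of the ambient ${\sf proj}(T,A_t)$-edge and a small signature) and $s_2 = \cO(s)$ (the surrounding edge set of ${\sf proj}(T,A_t)$), obtaining a family of size $s^{\cO(k)}$ that covers every valid configuration. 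Once a full configuration is fixed, $\tilde P|_{A_{t_j}}$ is determined for each child $t_j$ (since $A_{t_j} \subseteq \chi(t)$), and the relevant $f_{t_j}$ value is fetched in constant time; a standard convolution over children (with $k^{\cO(k)}$ overhead) reconciles the part-count parameter $i$ and the within-$\chi(t)$ cut cost.

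The main obstacle is the second stage: showing that the ``extra'' minor-part cuts admit an $\cO(k)$-sized canonical signature inside an $\cO(s)$-sized universe derived from ${\sf proj}(T,A_t)$, so that the splitter lemma with the above parameters really suffices. Intuitively, each extra minor subtree is anchored to a single edge of ${\sf proj}(T, A_t)$ (the unique edge of ${\sf proj}(T,A_t)$ onto whose expansion the extra cut projects), and the anchoring is faithful because the $\tilde{P}_{A_t}$-realization is already fixed; combined with $\cO(k)$ discrete grouping choices per anchor, this bounds the total number of configurations by $s^{\cO(k)}$ and yields the claimed $s^{\cO(k)} n^{\cO(1)}$ running time.
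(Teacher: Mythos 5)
Your enumeration strategy has a genuine gap in the place you yourself flag, and it is not a repairable detail but the crux. The cut edges of the optimal partition live in ${\sf proj}(T,\chi(t))$, a tree on up to $2|\chi(t)|=\Omega(n)$ vertices, and they need not be anchored to ${\sf proj}(T,A_t)$ at all: a minor part can be sliced off inside a branch of ${\sf proj}(T,\chi(t))$ that contains no vertex of $A_t$ and therefore disappears entirely in ${\sf proj}(T,A_t)$ (in the extreme case $t$ is the root, $A_t=\emptyset$ and ${\sf proj}(T,A_t)$ is empty while $\chi(t)$ may be huge), so "the unique edge of ${\sf proj}(T,A_t)$ onto whose expansion the extra cut projects'' need not exist. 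Even when a cut edge does lie on the expansion of an edge of ${\sf proj}(T,A_t)$, that expansion is a path through up to $\Omega(n)$ vertices of $\chi(t)$, and the exact position of the cut along it determines the part labels of those $\chi(t)$-vertices, hence the induced partitions $\tilde{P}|_{A_{t'}}$ of the children's adhesions and the weight of the cut inside $G[\chi(t)]$; an application of Lemma~\ref{splittersetlemma} with $s_1=\cO(k)$ over an $\cO(s)$-sized universe derived from ${\sf proj}(T,A_t)$ cannot encode these positions, so your family of $s^{\cO(k)}$ configurations does not cover all optima. The same problem already undermines your first stage: a cut-edge realization of $\tilde{P}_{A_t}$ inside ${\sf proj}(T,A_t)$ does not "lift'' to unique part labels for the vertices of $\chi(t)\setminus A_t$ on the expanded paths.

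For comparison, the paper never tries to pin down the cut $C\subseteq E^{\chi(t)}_T$ exactly. It applies the splitter to the \emph{full} edge set $E^{\chi(t)}_T$ (harmless, since the family size depends only logarithmically on the universe) with the sole aim of producing some $C'$ with $C\subseteq C'$ and $C'\cap S_{E^{\chi(t)}_T}=\emptyset$, where $S_{E^{\chi(t)}_T}$ is the $\cO(k^2(s+1)^5)$-sized set of projected-tree edges lying inside the small parts (Lemma~\ref{size_comp}); deleting $C'$ then over-refines only the big part, and the small parts survive as unions of components. A second splitter over those components, the "nice decomposition'' machinery (Lemmas~\ref{NiceDecompositionPreprocessingLemma} and~\ref{FinalKnapsackStyleDP}), and a knapsack-style DP over the parts and the children's adhesions then recover $f_t(\tilde{P}_{A_t},i)$. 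Two further points in your sketch would also need repair even if the enumeration were fixed: "fetch $f_{t_j}$ in constant time plus a standard convolution'' underestimates the child-aggregation step, since $\Omega(n)$ children hang off $\chi(t)$ at adhesions of size up to $s$ that interleave with the unknown big part (this is exactly what the $\nu/\zeta$ knapsack DP and property (iv) of nice decompositions are for); and you invoke $((s+1)^5,s)$-edge-unbreakability on cuts of order $2s$, whereas the guarantee only covers cuts of order at most $s$.
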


Since a $T$-tree is a spanning tree of $G$ such that there exists an optimal $k$-cut of $G$ that is $T$-feasible, it follows that if $T$ is a $T$-tree of $G$ then $\opt(G,k)$ is given by $f_r(P_{\emptyset},k)$, where $r$ is the root of $\tau$, recall that $A_{r}=\emptyset$. If $T$ is not a $T$-tree of $G$, then $f_r(P_\emptyset,k)$ is the weight of some $k$-cut of $G$ of weight at most $s$ or $\infty$ and thus $f_r(P_\emptyset,k)\geq \opt(G,k)$. Therefore, we will now prove Theorem~\ref{thm:paraAlg} using Lemma~\ref{dpstate_lemma}.

\begin{proof}[Proof of Theorem \ref{thm:paraAlg}]
As explained in the beginning of the section, we assume that (a) the graph is connected, (b) we are given 
$(\tau,\chi)$, satisfying properties mentioned in Theorem~\ref{thm:decomp} and (c) a tree $T$ from $\mathcal{T}$~\cite{Thorup08}. 
We design an algorithm that takes all these inputs and works as follows. 
Starting from the leaf nodes, in a bottom up manner, for each node $t\in \tau$, the algorithm computes $\mathcal{F}^{A_t}_T$ using Lemma~\ref{sizeoffeasibleset} and obtains the value of $f_t(\tilde{P}_{A_t},i)$ for all pairs $(\tilde{P}_{A_t},i)\in \mathcal{F}^{A_t}_T \times \{1,\ldots,k\}$ using Lemma~\ref{dpstate_lemma}. Finally it returns the value $f_r(P_\emptyset,k)$, where $r$ is the root of $\tau$.

Since $f_r(P_\emptyset,k)\geq \opt(G,k)$ for any tree spanning tree $T$ of $G$ and $f_r(P_\emptyset,k)= \opt(G,k)$ if $T$ is a $T$-tree, the algorithm returns the desired output. 

It is easy to see that $\mathcal{F}^{A_t}_T$ is of size $\cO((4ks)^{2k})$ which is $s^{\cO(k)}$ and can be computed in time $s^{\cO(k)}$ from Lemma~\ref{sizeoffeasibleset} and the fact that $|A_t|\leq s$ for all $t\in \tau$. Thus the domain of $f_t$ is of size $s^{\cO(k)}\cdot k$ and each state can be computed in time $s^{\cO(k)}n^{\cO(1)}$ from Lemma~\ref{dpstate_lemma}. Thus the algorithm runs in time $s^{\cO(k)}n^{\cO(1)}$. This completes the proof.
\end{proof}
For proving Lemma~\ref{dpstate_lemma}, we will design an algorithm that takes all the inputs mentioned in Lemma~\ref{dpstate_lemma} and returns a positive integer $v$, which is the weight of some $T$-feasible $i$-partition $\tilde{P}$ of $V(G_t)$ having weight at most $s$ such that $\tilde{P}_{A_t}$ is the projection of $\tilde{P}$ on $A_t$. If no such partition exists, the algorithm will return $v = \infty$. This automatically ensures that $f_t(\tilde{P}_{A_t},i) \leq v$. The difficult part is to ensure that $f_t(\tilde{P}_{A_t},i) \geq v$. If $f_t(\tilde{P}_{A_t},i)=\infty$, this inequality trivially holds, and so it suffices to prove it for the cases when $f_t(\tilde{P}_{A_t},i)$ is finite and a $T$-feasible $i$-partition $\tilde{P}$ of $V(G_t)$ such that $\tilde{P}_{A_t}$ is the projection of $\tilde{P}$ on $A_t$ and having weight at most $s$ exists.
The proof of Lemma~\ref{dpstate_lemma} spans the remainder of this section and follows the route of  ``randomized contraction style Dynamic Programming'' \cite{randcontr,DBLP:journals/corr/abs-1810-06864}  but is somewhat more complicated because at every step we need to use the $T$-tree $T$ to speed up computations. 
We begin by giving names to some of the central objects of this proof. The notation for these tools will be used throughout the section and are summarized in Figure~\ref{fig:notations}.


Let $\tilde{P}(\tilde{P}_{A_t},i)$ be a smallest possible weight 
$T$-feasible $i$-partition of $V(G_t)$ such that $\tilde{P}_{A_t}$ is the projection of $\tilde{P}(\tilde{P}_{A_t},i)$ on $A_t$ having weight at most $s$. We say that $\tilde{P}(\tilde{P}_{A_t},i)$ realises $f_t(\tilde{P}_{A_t},i)$. We introduce the following notations assuming such a partition exists. 
\begin{figure}[h]
    \centering
    \includegraphics[width=1.0\textwidth]{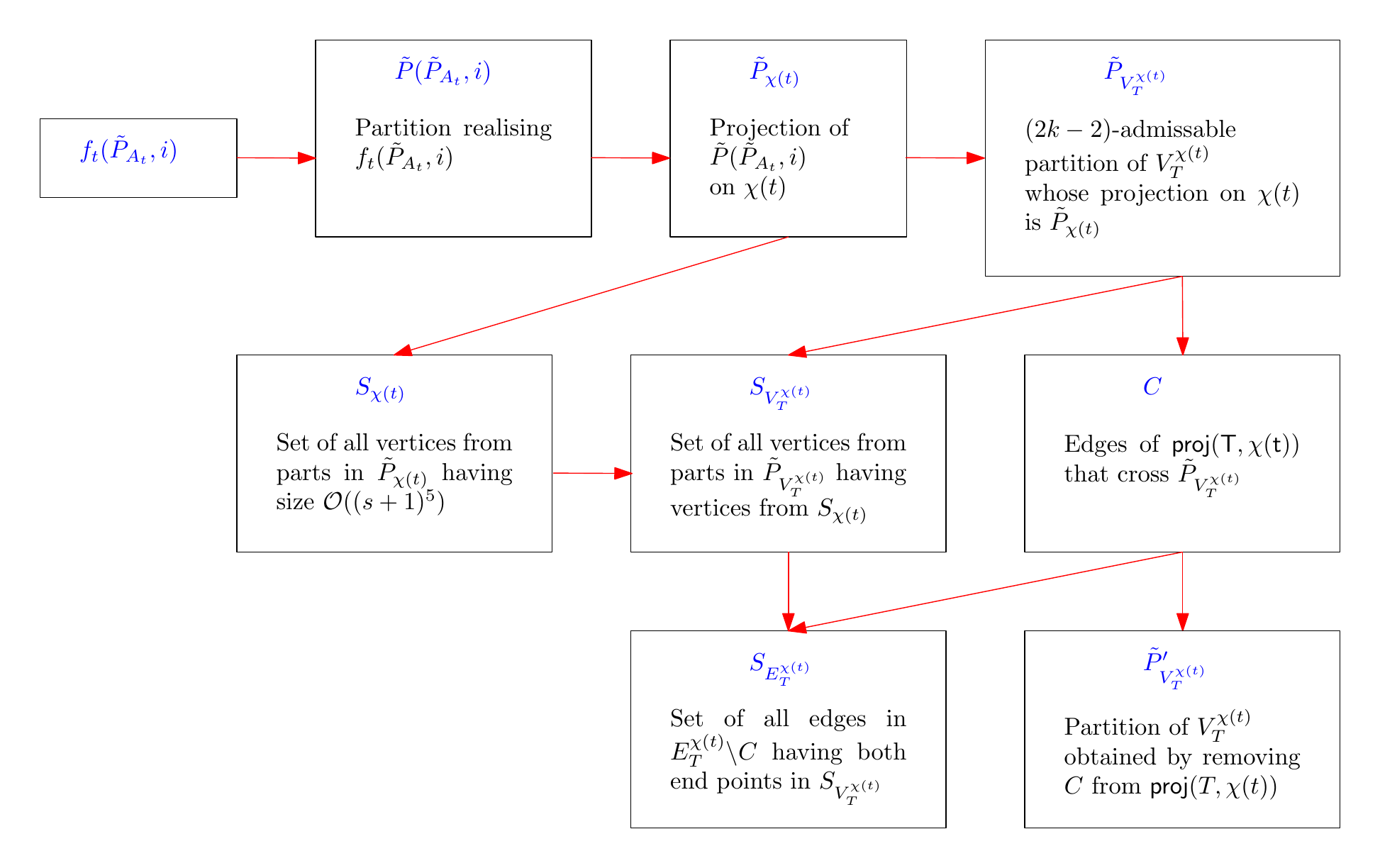}
    \caption{Various notations associated with the state $f_t(\tilde{P}_{A_t},i)$ along with their dependencies.}
    \label{fig:notations}
\end{figure}

Let $\tilde{P}_{\mathcal{\chi}(t)}$ be the projection of $\tilde{P}(\tilde{P}_{A_t},i)$ onto $\mathcal{\chi}(t)$.
%
Since $\tilde{P}(\tilde{P}_{A_t},i)$ is a $T$-feasible $i$-partition of $V(G_t)$, $\tilde{P}_{\chi(t)}$ must be a $T$-feasible partition of $\chi(t)$. Recall that the vertices and edges of $\sf{proj}(T,\mathcal{\chi}(t))$ are denoted by $V^{\chi(t)}_T$ and $E^{\chi(t)}_T$ respectively. An example of $\chi(t)$ along with $A_t$ and ${\sf proj}(T,\chi)$ is illustrated in Figure~\ref{fig:projchi}. 
\begin{figure}[h]
    \centering
    \includegraphics[width=0.7\textwidth]{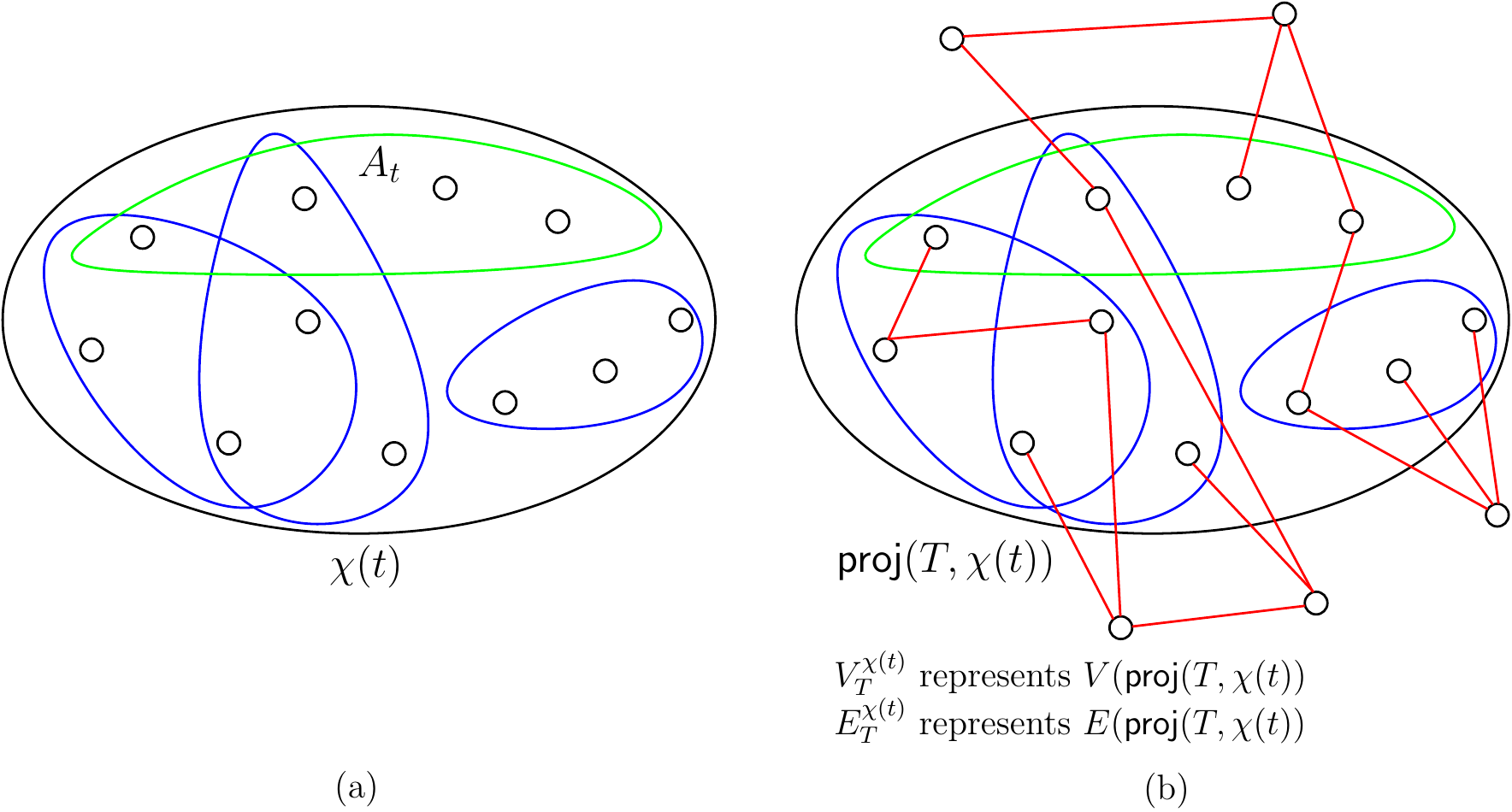}
    \caption{(a) An example of a bag $\chi(t)$ in $(\tau,\chi)$ along with the adhesions (blue) associated with it, the adhesion $A_t$ is in green. (b) An example of ${\sf proj}(T,\chi(t))$ for the bag $\chi(t)$ in $(a)$.}
    \label{fig:projchi}
\end{figure}
From Lemma~\ref{sizeoffeasibleset} it follows that $\tilde{P}_{\chi(t)}$ is the projection onto $\chi(t)$ of some $2k-2$-admissible partition $\tilde{P}_{V^{\chi(t)}_T}$ of $V^{\chi(t)}_T$. Let $C$ be the set of edges of $\sf{proj}(T,\mathcal{\chi}(t))$ that cross 
$\tilde{P}_{V^{\chi(t)}_T}$, it is easy to see that $|C|\leq 2k-2$. Let $\tilde{P}'_{V^{\chi(t)}_T}$ be the partition of $V^{\chi(t)}_T$ obtained by deleting the edges in $C$ from $\sf{proj}(T,\chi(t))$. Observe that $\tilde{P}'_{V^{\chi(t)}_T}$ is a refinement of $\tilde{P}_{V^{\chi(t)}_T}$. We remark that $\tilde{P}'_{V^{\chi(t)}_T}$ and $\tilde{P}_{V^{\chi(t)}_T}$ are different objects, and that this difference is crucial to our arguments (we apologize for the notational similarity!). 
%
%

Since $(\tau,\chi)$ is an $((s+1)^5,s)$-unbreakable tree decomposition, each part in $\tilde{P}_{\chi(t)}$ is of size $\cO((s+1)^5)$ except for at most one part. 
Let $S_{\chi(t)}$ denote the set of all vertices in parts having size $\cO((s+1)^5)$ in $\tilde{P}_{\chi(t)}$. 
The set of vertices from $\chi(t)$ in each part in $\tilde{P}_{V^{\chi(t)}_T}$ are either all from $S_{\chi(t)}$ or from $\chi(t)\backslash S_{\chi(t)}$ as $\tilde{P}_{\chi(t)}$ is the projection of $\tilde{P}_{V^{\chi(t)}_T}$ on $\chi(t)$. Let the union of the vertices in parts in $\tilde{P}_{V^{\chi(t)}_T}$ that contain vertices from $S_{\chi(t)}$ be denoted by $S_{V^{\chi(t)}_T}$, that is $ S_{V^{\chi(t)}_T}= \bigcup_{P\in \tilde{P}_{V^{\chi(t)}_T}, P\cap \chi(t) \subseteq S_{\chi(t)}}P$. Observe that $S_{\chi(t)}\subseteq S_{V^{\chi(t)}_T}$.  
Let the set of edges in $E^{\chi(t)}_T \backslash C$ having both end points in $S_{V^{\chi(t)}_T}$ be denoted by $S_{E^{\chi(t)}_T}$. 
We now state a lemma that will help bound the sizes of the sets $S_{E^{\chi(t)}_T}$, and $S_{V^{\chi(t)}_T}$. 
\begin{lemma}\label{size_comp}
The size of each part $P$ in the partition $\tilde{P}'_{V^{\chi(t)}_T}$ is at most $2(|P\cap \chi(t)| + 2k-2)$. Furthermore the sets $S_{E^{\chi(t)}_T}$ and $S_{V^{\chi(t)}_T}$ are of size at most $2\cdot(2k-1)(|S_{\chi(t)}|+2k-2)$. 
\end{lemma}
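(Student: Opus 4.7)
The plan is to exploit the structural constraint from Observation~\ref{projPropertyObs}: every vertex of $V^{\chi(t)}_T\setminus\chi(t)$ has degree at least $3$ in $\mathsf{proj}(T,\chi(t))$. For the first claim, I fix a part $P\in\tilde{P}'_{V^{\chi(t)}_T}$ and let $T_P$ denote the connected component of $\mathsf{proj}(T,\chi(t))-C$ spanned by $P$. Writing $n_\chi:=|P\cap\chi(t)|$, I would partition $P\setminus\chi(t)$ into sets $D_1,D_2,D_{\ge 3}$ according to the degree of each vertex in $T_P$; the goal is to bound $|P\setminus\chi(t)|$ by $n_\chi+2k-4$.

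The main ingredient is a charging argument. For $v\in P\setminus\chi(t)$, its degree in $T_P$ equals its degree in $\mathsf{proj}(T,\chi(t))$ minus the number of incident edges of $v$ that lie in $C$; since $\mathsf{proj}(T,\chi(t))$ is a tree, each edge of $C$ separates its endpoints and hence is charged to at most one vertex of $T_P$. Because every vertex of $P\setminus\chi(t)$ started with degree at least $3$, a vertex landing in $D_1$ (resp.\ $D_2$) must absorb at least $2$ (resp.\ $1$) edges of $C$, giving $2|D_1|+|D_2|\le|C|\le 2k-2$. For $|D_{\ge 3}|$ I invoke the elementary fact that in any tree the number of vertices of degree at least $3$ is at most two less than the number of leaves; since every leaf of $T_P$ lies in $\chi(t)$ or in $D_1$, this yields $|D_{\ge 3}|\le n_\chi+|D_1|-2$. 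Adding,
\[
|P\setminus\chi(t)|=|D_1|+|D_2|+|D_{\ge 3}|\le 2|D_1|+|D_2|+n_\chi-2\le n_\chi+2k-4,
\]
which gives $|P|\le 2(n_\chi+2k-2)$. The degenerate case $|T_P|=1$ is handled by inspection.

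For the second claim I aggregate the first over the parts of $\tilde{P}'_{V^{\chi(t)}_T}$ lying inside $S_{V^{\chi(t)}_T}$. Since $\tilde{P}'_{V^{\chi(t)}_T}$ is obtained by deleting at most $2k-2$ edges of $\mathsf{proj}(T,\chi(t))$, it has at most $2k-1$ parts. Let $R$ be the sub-collection of these parts contained in $S_{V^{\chi(t)}_T}$; because $\tilde{P}'_{V^{\chi(t)}_T}$ refines $\tilde{P}_{V^{\chi(t)}_T}$, each $P'\in R$ lies inside a part of $\tilde{P}_{V^{\chi(t)}_T}$ whose intersection with $\chi(t)$ is contained in $S_{\chi(t)}$, and these intersections are pairwise disjoint. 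Hence $\sum_{P'\in R}|P'\cap\chi(t)|\le|S_{\chi(t)}|$ and $|R|\le 2k-1$, and the first claim gives
\[
|S_{V^{\chi(t)}_T}|=\sum_{P'\in R}|P'|\le 2|S_{\chi(t)}|+2(2k-1)(2k-2)\le 2(2k-1)(|S_{\chi(t)}|+2k-2).
\]
Finally, any edge of $E^{\chi(t)}_T\setminus C$ with both endpoints in $S_{V^{\chi(t)}_T}$ must lie inside some $T_{P'}$ with $P'\in R$, so $|S_{E^{\chi(t)}_T}|\le\sum_{P'\in R}(|P'|-1)<|S_{V^{\chi(t)}_T}|$, giving the same bound. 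I expect the only delicate point to be setting up the edge-charging inequality $2|D_1|+|D_2|\le|C|$ cleanly; once it is in hand, the remainder is routine bookkeeping.
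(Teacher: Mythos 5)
Your proof is correct and follows essentially the same route as the paper's: associate each part with its component tree $T_P$ in $\mathsf{proj}(T,\chi(t))-C$, use Observation~\ref{projPropertyObs} to charge low-degree vertices outside $\chi(t)$ to edges of $C$, bound the degree-$\ge 3$ vertices by the number of leaves, and then sum over the at most $2k-1$ parts lying in $S_{V^{\chi(t)}_T}$. Your accounting (weighting $D_1$ twice) is marginally tighter than the paper's, but the underlying argument is the same.
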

\begin{proof}
Since $\tilde{P}'_{V^{\chi(t)}_T}$ is the partition of $V^{\chi(t)}_T$ obtained by removing the edges in $C$ from $\sf{proj}(T,\chi(t))$, each part $P$ in it can be associated with a tree $T_P$ in the forest obtained by removing $C$ from $\sf{proj}(T,\chi(t))$. The number of vertices from $V^{\chi(t)}_T\backslash \chi(t)$ that are a leaf or a degree two vertex in $T_P$ is at most $2k-2$ because from Observation~\ref{projPropertyObs} they had degree more than two in $\sf{proj}(T,\chi(t))$ and thus must have been adjacent to an edge in $C$. The number of vertices in $P$ 
having degree greater than two in $T_P$ is bounded by the number of leaves in $T_P$ which is bounded by $|P\cap \chi(t)|+2k-2$ because every leaf of $T_P$ is either in $\chi(t)$ or not in $\chi(t)$. Thus, the number of vertices from $V^{\chi(t)}_T\backslash \chi(t)$ in $P$ is bounded by $|P\cap \chi(t)|+2\cdot(2k-2)$ and hence the size of $P$ is at most $2\cdot(|P\cap \chi(t)|+2k-2)$.

Since the size of $C$ is at most $2k-2$, there are at most $2k-1$ parts in $\tilde{P}'_{V^{\chi(t)}_T}$ and thus the sizes of $S_{E^{\chi(t)}_T}$ and $S_{V^{\chi(t)}_T}$ are at most $2\cdot(2k-1)(|S_{\chi(t)}|+2k-2)$. 
\end{proof}
Since $S_{\chi(t)}$ is of size $\cO(k(s+1)^5)$, the following observation directly follows from Lemma~\ref{size_comp}
\begin{observation}
\label{sizeof4sets}
The set $C$ is of size at most $2k-2$ and the sets $S_{E^{\chi(t)}_T}$ and $S_{V^{\chi(t)}_T}$ are of size $\cO(k^2(s+1)^5)$.
\end{observation}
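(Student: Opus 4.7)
The plan is to verify the three bounds in sequence; all three are quick consequences of the setup already in place, so no step presents a genuine obstacle, and this observation functions as a compact summary of quantitative bounds to be used later in the dynamic-programming arguments.

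First, I will handle $|C| \leq 2k-2$ by unwinding the definitions. Recall that $\tilde{P}_{V^{\chi(t)}_T}$ was produced via Lemma~\ref{sizeoffeasibleset} as a $(2k-2)$-admissible partition of $V^{\chi(t)}_T$ whose projection onto $\chi(t)$ equals $\tilde{P}_{\chi(t)}$, and that $C$ was defined to be exactly the set of edges of ${\sf proj}(T,\chi(t))$ that cross $\tilde{P}_{V^{\chi(t)}_T}$. The definition of $(2k-2)$-admissibility states precisely that at most $2k-2$ edges of ${\sf proj}(T,\chi(t)) = E^{\chi(t)}_T$ cross $\tilde{P}_{V^{\chi(t)}_T}$, so $|C| \leq 2k-2$ is immediate.

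For the remaining two bounds I will use the discussion immediately preceding the observation. Since $\tilde{P}(\tilde{P}_{A_t},i)$ is an $i$-partition of $V(G_t)$ with $i \leq k$, its projection $\tilde{P}_{\chi(t)}$ onto $\chi(t)$ has at most $k$ parts. Combined with the already-noted consequence of the $((s+1)^5,s)$-edge-unbreakability of the bag $\chi(t)$, namely that all but at most one of these parts have size $\cO((s+1)^5)$, this yields
\[
|S_{\chi(t)}| \;\leq\; k \cdot \cO((s+1)^5) \;=\; \cO(k(s+1)^5),
\]
since $S_{\chi(t)}$ is precisely the union of the small parts.

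Finally, I plug this estimate of $|S_{\chi(t)}|$ into the bound from Lemma~\ref{size_comp}, which states
\[
|S_{E^{\chi(t)}_T}|,\; |S_{V^{\chi(t)}_T}| \;\leq\; 2(2k-1)\bigl(|S_{\chi(t)}| + 2k-2\bigr),
\]
to conclude
\[
|S_{E^{\chi(t)}_T}|,\; |S_{V^{\chi(t)}_T}| \;=\; \cO\bigl(k \cdot (k(s+1)^5 + k)\bigr) \;=\; \cO\bigl(k^2(s+1)^5\bigr),
\]
which is the desired bound. This completes the proof modulo routine arithmetic absorbed into the $\cO$-notation; no new combinatorial idea beyond what has already been developed is required.
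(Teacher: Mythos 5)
Your proposal is correct and follows essentially the same route as the paper: the bound $|C|\leq 2k-2$ comes directly from the $(2k-2)$-admissibility of $\tilde{P}_{V^{\chi(t)}_T}$, and the other two bounds come from combining $|S_{\chi(t)}|=\cO(k(s+1)^5)$ (at most $k$ parts, each small one of size $\cO((s+1)^5)$ by unbreakability) with the estimate of Lemma~\ref{size_comp}. The only difference is that you spell out the justification for $|S_{\chi(t)}|=\cO(k(s+1)^5)$, which the paper asserts without elaboration.
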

In later parts of this section, we develop an algorithm that takes as input a subset $C'$ of edges of ${\sf proj}(T,\chi(t))$ and outputs a positive integer $v$ greater than or equal to $f_t(\tilde{P}_{A_t},i)$. Further if $C\subseteq C'$ and $S_{E^{\chi(t)}_T}\cap C' = \emptyset$, then $v$ will be equal to $f_t(\tilde{P}_{A_t},i)$. We state this as a result below in Lemma~\ref{DPSubLemma} and will be using it to prove Lemma~\ref{dpstate_lemma}.

\begin{lemma}\label{DPSubLemma}
There exists an algorithm that takes as input $(\tau,\chi)$, a node $t \in V(\tau)$, a $T$-feasible partition $\tilde{P}_{A_t}$ of $A_t$, a positive integer $i\leq k$, $\sf{proj}(T,\chi(t))$, a set of edges $C'\subseteq E^{\chi(t)}_T$, together with the value of $f_{t'}(\tilde{P}_{A_{t'}}', i')$ for every child $t'$ of $t$, $T$-feasible partition $\tilde{P}_{A_{t'}}'$ of $A_{t'}$, and positive integer $i' \leq i$, and outputs an integer $v\geq f_t(\tilde{P}_{A_t},i)$ in time $s^{\cO(k)}n^{\cO(1)}$. Further if $f_t(\tilde{P}_{A_t},i)\neq \infty$, $C\subseteq C'$, and $S_{E^{\chi(t)}_T}\cap C' = \emptyset$, then $v\leq f_t(\tilde{P}_{A_t},i)$.
\end{lemma}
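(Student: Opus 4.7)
The plan is to prove Lemma~\ref{DPSubLemma} by combining a structural observation derived from $C'$ with a DP over the children of $t$. First, I would compute in polynomial time the partition $\mathcal{Q}$ of $\chi(t)$ obtained as the projection onto $\chi(t)$ of the forest ${\sf proj}(T,\chi(t)) \setminus C'$. The key structural observation: when the favorable hypotheses $C \subseteq C'$ and $C' \cap S_{E^{\chi(t)}_T} = \emptyset$ both hold, $\tilde{P}_{\chi(t)}$ must be a coarsening of $\mathcal{Q}$ of a very restricted form. Every small part of $\tilde{P}_{\chi(t)}$ (of size at most $(s+1)^5$, and there are at most $k-1$ such by the $((s+1)^5,s)$-edge-unbreakability of $\chi(t)$) coincides with a single part of $\mathcal{Q}$, since $C' \setminus C$ contains no edges inside $S_{V^{\chi(t)}_T}$; meanwhile the single large part of $\tilde{P}_{\chi(t)}$, if any, is the union of all remaining parts of $\mathcal{Q}$, since $C \subseteq C'$ implies the large part is a union of pieces of $\mathcal{Q}$. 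Thus it suffices to identify which parts of $\mathcal{Q}$ constitute the small parts of $\tilde{P}_{\chi(t)}$.

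Next, I would enumerate the candidate designations of small parts. Those small parts that intersect $A_t$ are forced by $\tilde{P}_{A_t}$ (each part of $\tilde{P}_{A_t}$ is either contained in a unique small part of $\mathcal{Q}$ or lies entirely in the large part), so only small parts disjoint from $A_t$ remain to be guessed. Each such candidate is a part of $\mathcal{Q}$ of size at most $(s+1)^5$ disjoint from $A_t$, and at most $k$ get designated. To avoid trying all $\binom{|\mathcal{Q}|}{\leq k}$ subsets, I would invoke Lemma~\ref{splittersetlemma} on the set of small parts of $\mathcal{Q}$ with $s_1 = k$ and a suitably chosen $s_2$ (bounded polynomially in $k$ and $s$ by the structural constraints), obtaining a family of $s^{\cO(k)} \log n$ candidate subsets that provably contains the correct designation. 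Each candidate, combined with the forced small parts read off from $\tilde{P}_{A_t}$, fully determines a coarsening $\tilde{P}_{\chi(t)}$; I would verify consistency with $\tilde{P}_{A_t}$ and $T$-feasibility and discard the candidate otherwise.

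For every surviving candidate $\tilde{P}_{\chi(t)}$, I would compute the minimum weight of an $i$-partition of $V(G_t)$ projecting to it by a DP over the children of $t$, the state being a single counter tracking the running number of parts constructed so far. Processing children one at a time: for each child $t'$, restrict $\tilde{P}_{\chi(t)}$ to $A_{t'}$ to obtain $\tilde{P}_{A_{t'}}$, enumerate $i' \in \{|\tilde{P}_{A_{t'}}|,\ldots,i\}$, look up the given value $f_{t'}(\tilde{P}_{A_{t'}},i')$, and update the running sum and counter. After processing all children, add the weight of edges of $G[\chi(t)]$ crossing $\tilde{P}_{\chi(t)}$ that are not already accounted for inside any $G[A_{t'}]$ (to avoid double counting with child contributions). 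The algorithm returns the minimum over all candidates, or $\infty$ if either no candidate is feasible or the minimum exceeds $s$.

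The hard part will be calibrating the splitter parameters in the second step so that the enumeration is simultaneously complete (whenever the favorable hypotheses on $C'$ hold, the correct designation appears among the candidates) and efficient ($s^{\cO(k)} n^{\cO(1)}$ total). The key leverage is that the designated set has size at most $k$ out of potentially polynomially many small $\mathcal{Q}$-parts, precisely the regime where Lemma~\ref{splittersetlemma} is effective; the corresponding ``enemy'' set is identified using only the structural information extractable from $\tilde{P}_{A_t}$ and the restrictions of $\mathcal{Q}$ on child adhesions, and bounded in size using the unbreakability of $\chi(t)$ together with Lemma~\ref{size_comp}. A further delicate point is arguing that the output $v$ is always a valid upper bound on $f_t(\tilde{P}_{A_t},i)$ regardless of the quality of $C'$, which follows because every candidate coarsening we score corresponds to some genuine $T$-feasible $i$-partition of $V(G_t)$, if one exists at all.
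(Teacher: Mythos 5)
Your outline follows the paper's general template (cut ${\sf proj}(T,\chi(t))$ along $C'$, use Lemma~\ref{splittersetlemma} to avoid an $n^{\Theta(k)}$ enumeration, then a knapsack-style DP over the children using the supplied $f_{t'}$ values), but it rests on a structural claim that is false. You assert that when $C\subseteq C'$ and $C'\cap S_{E^{\chi(t)}_T}=\emptyset$, every small part of $\tilde{P}_{\chi(t)}$ coincides with a single part of $\mathcal{Q}$. This fails even when $C'=C$: the parts of the $(2k-2)$-admissible partition $\tilde{P}_{V^{\chi(t)}_T}$ need not induce connected subtrees of ${\sf proj}(T,\chi(t))$ (a $T$-tree only bounds the number of crossing edges, not connectivity of parts), so deleting the crossing edges $C$ can split a single small part into several components; your justification only excludes extra splitting caused by $C'\setminus C$, not by $C$ itself. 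Hence a small part of $\tilde{P}_{\chi(t)}$ is in general a union of several $\mathcal{Q}$-parts (up to $2k-1$ pieces in total, cf.\ Lemma~\ref{size_comp}), and a guessed set of designated $\mathcal{Q}$-parts does \emph{not} ``fully determine a coarsening $\tilde{P}_{\chi(t)}$.'' This is exactly why the paper never tries to pin down $\tilde{P}_{\chi(t)}$: it only produces a nice decomposition whose partition $\tilde{Q}_{\chi(t)}$ \emph{refines} $\tilde{P}_{\chi(t)}$, and the grouping of the refined pieces into actual parts is optimized inside the inner DP (the enumeration of coarsenings $\tilde{R}\in\mathcal{R}(l,j)$ with at most $2k$ pieces per block in the proof of Lemma~\ref{KnapsackDPstateLemma}).

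A second gap is in how you use the splitter. Lemma~\ref{splittersetlemma} yields a family containing some set $X$ that \emph{contains} the wanted $\mathcal{Q}$-parts and \emph{avoids} a prescribed enemy set; $X$ is not itself the designation and may contain many irrelevant parts, so you still need machinery (the paper's auxiliary graph $G'$, merging large components into the center $O$, and property (ii) of nice decompositions) to convert $X$ into something the DP can consume while keeping each non-center block of bounded complexity. You also leave the enemy set unspecified; identifying it and bounding it by $(2k-1)(s^2+2s+k)$ is where the real work lies, and the paper's bound crucially uses compactness of $(\tau,\chi)$ to charge every adhesion straddling two parts to a distinct crossing edge of the optimum -- an argument absent from your sketch. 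Finally, your child-combination rule is not quite right as stated: two children can share adhesion edges, so ``add the bag edges not accounted for inside any $G[A_{t'}]$'' still double counts, and the part counter must be corrected by $|\tilde{R}_{A_{t'}}|$ when gluing a child, as in equations (4)--(5) of the paper; this last point is routine bookkeeping, but the first two issues are genuine missing ideas rather than calibration details.
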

We will now prove that Lemma~\ref{dpstate_lemma} is true assuming Lemma~\ref{DPSubLemma}. In the proof, we will apply the well-known technique of splitters to obtain a family $\mathcal{C}$ of subsets of edges in $\sf{proj}(T,X)$ such that if $f_t(\tilde{P}_{A_t},i)\neq  \infty$, there is a set $C'$ in the family such that $C\subseteq C'$, and $S_{E^{\chi(t)}_T}\cap C'=\emptyset$. Then, we will apply Lemma~\ref{DPSubLemma} on each set in this family to obtain $f_t(\tilde{P}_{A_t},i)$. 
\begin{proof}[Proof of Lemma~\ref{dpstate_lemma} assuming Lemma~\ref{DPSubLemma}] \label{Dpstate_lemma_proof}  
For the proof, we propose an algorithm that takes inputs as specified in Lemma~\ref{dpstate_lemma}, uses Lemma~\ref{DPSubLemma} and computes $f_t(\tilde{P}_{A_t},i)$. 

Firstly, the algorithm uses Lemma~\ref{splittersetlemma} with $S=E^{\chi(t)}_T$, $s_1=2k-2$, and $s_2=\cO(k^2(s+1)^5)$ to obtain a set $\mathcal{C}$ of subsets of $E^{\chi(t)}_T$. For each set $C'$ in $\mathcal{C}$, the algorithm calls the algorithm proposed in Lemma~\ref{DPSubLemma} with $C'$ along with the other inputs and obtains as output a value $v_{C'}$. Finally it returns the integer $v=\min\limits_{C'\in \mathcal{C}}v_{C'}$.

It follows from Lemma~\ref{DPSubLemma} that for each pair $C'$ in $\mathcal{C}$, $v_{C'}\geq f_t(\tilde{P}_{A_t},i)$. Thus, if $f_t(\tilde{P}_{A_t},i) = \infty$, then the algorithm will return the value $v=\infty$. If $f_t(\tilde{P}_{A_t},i) \neq \infty$, by Lemma~\ref{splittersetlemma} and Observation~\ref{sizeof4sets}, the set $\mathcal{C}$ contains a set $C'$ that satisfies $C\subseteq C'$, and $S_{E^{\chi(t)}_T}\cap C' = \emptyset$. By Lemma~\ref{DPSubLemma}, for that $C'$, $v_{C'} = f_t(\tilde{P}_{A_t},i)$. Thus if $f_t(\tilde{P}_{A_t},i) \neq \infty$, the algorithm will return the integer $v=f_t(\tilde{P}_{A_t},i)$.

The time taken by the algorithm is equal to the time taken to compute $\mathcal{C}$ plus the time taken to compute $v_{C'}$ for each $C'\in \mathcal{C}$. From Lemma~\ref{splittersetlemma} and Lemma~\ref{DPSubLemma}, it follows that the time taken by the algorithm is $s^{\cO(k)}n^{\cO(1)}$. This completes the proof.
\end{proof}

The remainder of this section will be dedicated to proving Lemma~\ref{DPSubLemma}, the crux of our result.
\subsection{Nice Decompositions of $\chi(t)$} 
We first begin by defining {\em nice decompositions} of $\chi(t)$, a structure that will be helpful for proving Lemma~\ref{DPSubLemma}.
\begin{figure}[h]
    \centering
    \includegraphics[width=0.7\textwidth]{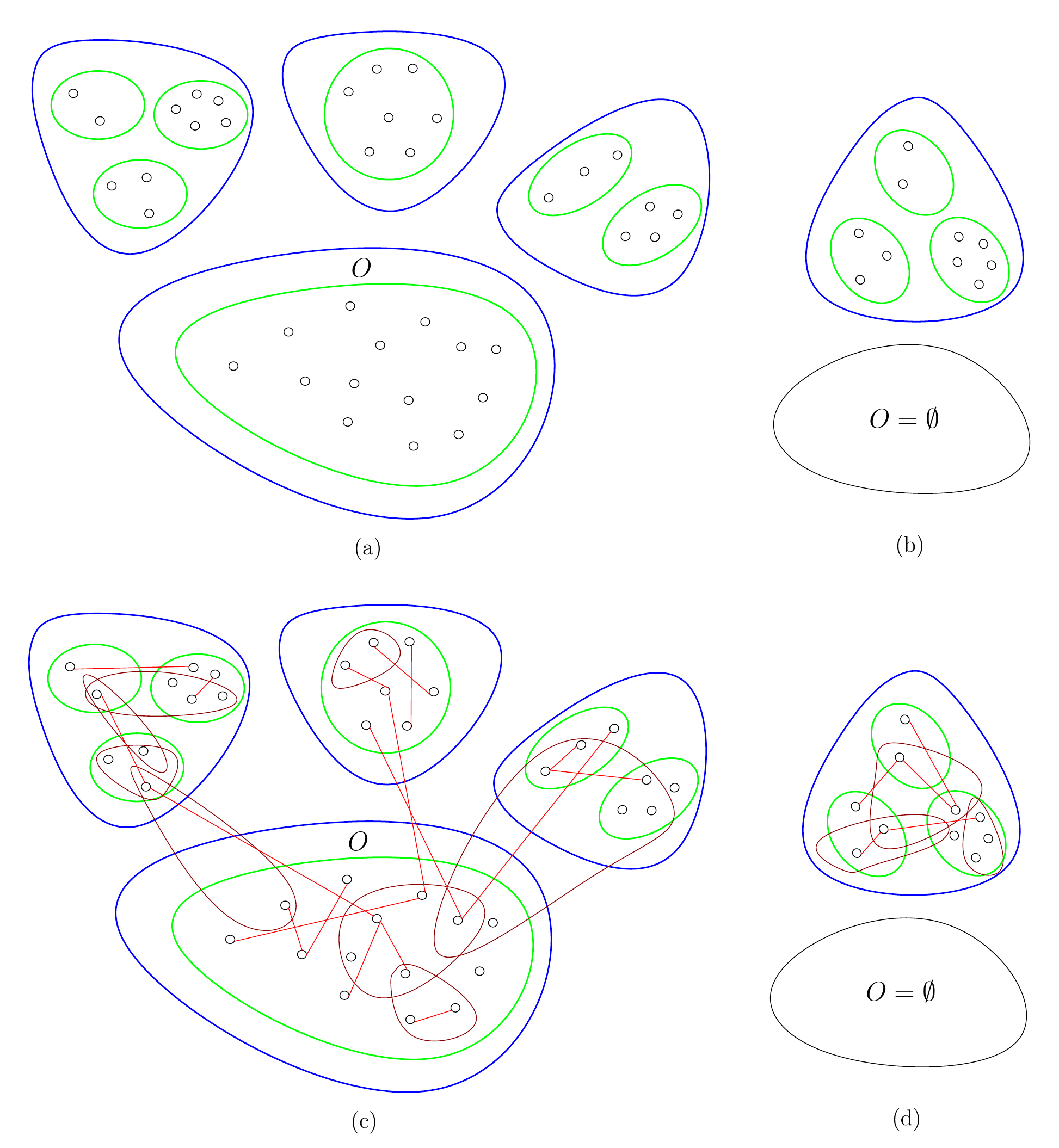}
    \caption{Examples of nice decomposition ($\tilde{P}'_{\chi(t)},\tilde{Q}_{\chi(t)},O)$ of $\chi(t)$. (a) $O\neq \emptyset$ (b) $O=\emptyset$.  The blue parts depict the parts in $\tilde{P}'_{\chi(t)}$ and the green parts depict those in $\tilde{Q}_{\chi(t)}$. In (c), (d) examples of possible edges and adhesions among the vertices of $\chi(t)$ in (a) and (b) respectively are depicted. Note that there are no adhesions and edges among different parts in  $\tilde{P}'_{\chi(t)}\backslash O$. }
    \label{fig:NiceDecomposition}
\end{figure}
\begin{definition}
\label{nicedecompositiondefinition}
Let $\tilde{P}'_{\chi(t)}$ be a partition of $\chi(t)$, $\tilde{Q}_{\chi(t)}$ be a refinement of $\tilde{P}'_{\chi(t)}$ and $O$ be a part in $\tilde{P}'_{\chi(t)}$ or an empty set. The triple $(\tilde{P}'_{\chi(t)},\tilde{Q}_{\chi(t)},O)$ is called a nice decomposition of $\chi(t)$ if it satisfies the following properties:
\begin{enumerate}\setlength\itemsep{-.7mm}
\item[(i)] If $O\neq \emptyset$, then the projection of $\tilde{Q}_{\chi(t)}$ on $O$ is $O$, that is $O$ is not refined by $\tilde{Q}_{\chi(t)}$ and if $O=\emptyset$, then $\tilde{P}'_{\chi(t)}$ has exactly one part.
\item[(ii)] The size of the projection of $\tilde{Q}_{\chi(t)}$ on each part $P \in \tilde{P}'_{\chi(t)}\backslash O$ has at most $2k-1$ parts. 
\item[(iii)] There are no cross edges between the parts in the partition $\tilde{P}'_{\chi(t)}\backslash O$. 
\item[(iv)] There are no adhesions in the set $\{A_{t'}: t'\in \chd(t)$ or $t'=t\}$ having vertices from more than one part in $\tilde{P}'_{\chi(t)}\backslash O$.
\end{enumerate} 
We refer to $O$ as the center of the nice decomposition.
\end{definition}
The structure of a nice decomposition of $\chi(t)$ is depicted by examples in Figure~\ref{fig:NiceDecomposition}.

In order to prove Lemma~\ref{DPSubLemma}, we first state and prove some results related to nice decompositions of $\chi(t)$ that will help us with the proof. Firstly, we develop an algorithm that computes a {\em small} family of nice decompositions of $\chi(t)$ containing a special nice decomposition in desirable cases. Then, we design another algorithm that computes a value greater than $f_t(\tilde{P}_{A_t},i)$ when input any nice decomposition and computes $f_t(\tilde{P}_{A_t},i)$ exactly if the input nice decomposition is special. 
\begin{lemma}
\label{NiceDecompositionPreprocessingLemma}
There exists an algorithm that takes as input $(\tau,\chi)$, a node $t \in V(\tau)$, $\sf{proj}(T,\chi(t))$, a set of edges $C'\subseteq E^{\chi(t)}_T$ and returns a set $\mathcal{D}$ of nice decompositions of $\chi(t)$ of size $\cO(s^{\cO(k)}\log n)$ in time $\cO(s^{\cO(k)}n^{\cO(1)})$. Furthermore, if 
$f_t(\tilde{P}_{A_t},i)\neq \infty$, $C\subseteq C'$, and  $S_{E^{\chi(t)}_T}\cap C' = \emptyset$, then set $\mathcal{D}$ contains a nice decomposition $(\tilde{P}'_{\chi(t)},\tilde{Q}_{\chi(t)},O)$ such that $\tilde{Q}_{\chi(t)}$ is a refinement of $\tilde{P}_{\chi(t)}$.
\end{lemma}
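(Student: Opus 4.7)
The plan is to enumerate candidates for $\tilde{Q}_{\chi(t)}$ based on the partition $\tilde{R}$ of $V^{\chi(t)}_T$ obtained by removing $C'$ from ${\sf proj}(T,\chi(t))$. The structural insight for the special case is that, with $C \subseteq C'$ and $C' \cap S_{E^{\chi(t)}_T} = \emptyset$, every edge of $C$ (in particular every edge crossing the boundary of $S_{V^{\chi(t)}_T}$) is removed in forming $\tilde{R}$, while no edge strictly interior to $S_{V^{\chi(t)}_T}$ beyond those in $C$ is cut. Hence the parts of $\tilde{R}$ that meet $S_{V^{\chi(t)}_T}$ coincide with the at most $2k-1$ parts of $\tilde{P}'_{V^{\chi(t)}_T}$ contained in $S_{V^{\chi(t)}_T}$, and by Lemma~\ref{size_comp} each such part has size at most $\cO(s^5 + k)$. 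The combinatorial object I need to guess is therefore a set $\mathcal{R}^*$ of at most $2k-1$ ``small'' parts of $\tilde{R}$.

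For each candidate $\mathcal{R}^*$ I assemble a nice decomposition as follows: set $O = \chi(t) \setminus \bigcup_{R \in \mathcal{R}^*}(R \cap \chi(t))$; let $\tilde{Q}_{\chi(t)}$ consist of the non-empty projections $R \cap \chi(t)$ for $R \in \mathcal{R}^*$ together with $O$ as its $O$-part (when non-empty); and form $\tilde{P}'_{\chi(t)}$ by taking, on the non-$O$ parts of $\tilde{Q}_{\chi(t)}$, the transitive closure of the relation ``joined by an edge of $G[\chi(t)]$ or sharing a vertex with some common adhesion $A_{t'}$, $t' \in \chd(t) \cup \{t\}$'', merging related parts but leaving $O$ untouched. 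Axioms (iii) and (iv) of Definition~\ref{nicedecompositiondefinition} hold by the closure, and (i), (ii) hold because $\tilde{Q}_{\chi(t)}$ contains at most $2k-1$ non-$O$ parts in total. In the special case, when $\mathcal{R}^*$ is exactly the set of true $S$-parts of $\tilde{R}$, each projection is contained in a small part of $\tilde{P}_{\chi(t)}$ and $O = \chi(t) \setminus S_{\chi(t)}$ equals the big part of $\tilde{P}_{\chi(t)}$, so $\tilde{Q}_{\chi(t)}$ refines $\tilde{P}_{\chi(t)}$.

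To enumerate such $\mathcal{R}^*$ within the $s^{\cO(k)} \log n$ budget I invoke Lemma~\ref{splittersetlemma} on $S = V^{\chi(t)}_T$ with $s_1 = 2k-1$ and a modest $s_2 = \cO(k)$, producing a family $\mathcal{V}$ of subsets of the required cardinality. For each $Z \in \mathcal{V}$ I set $\mathcal{R}^* = \{R \in \tilde{R} : R \cap Z \neq \emptyset\}$ and discard unless $|\mathcal{R}^*| \leq 2k-1$ and each $R \in \mathcal{R}^*$ has size $\leq \cO(s^5 + k)$. The main obstacle is showing that some $Z$ in the family yields $\mathcal{R}^*$ exactly equal to the set of true $S$-parts: Lemma~\ref{splittersetlemma} only guarantees $Z \supseteq X_1$ (with $X_1$ a set of representatives, one per true $S$-part) and $Z \cap X_2 = \emptyset$, so $Z$ could touch spurious non-$S$ parts and push $|\mathcal{R}^*|$ above $2k-1$. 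The resolution is to populate the $\cO(k)$ ``exclusion slots'' in the splitter with one witness vertex per potentially dangerous non-$S$ part, where such witnesses can be located using the at most $2k-2$ endpoints of $C$-edges in ${\sf proj}(T,\chi(t))$ together with the bounded branching of $\tilde{R}$ at those endpoints. The size filter then guarantees no malformed triples enter $\mathcal{D}$, while the good $Z$ always survives and yields a valid nice decomposition. The total running time is $s^{\cO(k)} n^{\cO(1)}$ since $|\mathcal{V}| = s^{\cO(k)} \log n$ and the per-candidate assembly is polynomial.
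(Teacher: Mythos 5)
There is a genuine gap, and it is exactly at the point you flag yourself: the enumeration of the candidate family. You apply Lemma~\ref{splittersetlemma} to the \emph{vertex} set $V^{\chi(t)}_T$ with an exclusion budget $s_2=\cO(k)$ and then need some surviving set $Z$ whose touched parts $\mathcal{R}^*=\{R\in\tilde{R}: R\cap Z\neq\emptyset\}$ are (essentially) exactly the at most $2k-1$ parts of $\tilde{R}$ meeting $S_{V^{\chi(t)}_T}$. The splitter gives no control whatsoever over which vertices outside $X_1\cup X_2$ a set $Z$ contains: its members are unions of hash classes and will in general hit many vertices spread over many parts of $\tilde{R}$. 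Since $\tilde{R}$ can have up to $\Theta(|\chi(t)|)$ parts not meeting $S_{V^{\chi(t)}_T}$ (recall $C'$ comes from a splitter family itself and need not have size $2k-2$), preventing $Z$ from touching them would require excluding \emph{all} their vertices, a set far larger than any $\cO(k)$ or even $\mathrm{poly}(s,k)$ budget; excluding ``one witness vertex per dangerous part'' does nothing, because $Z$ may contain other vertices of the same part. So your size filter $|\mathcal{R}^*|\le 2k-1$ kills every candidate in the worst case, and the ``special'' guarantee of the lemma is not established.

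The paper's proof avoids exactly this trap in two ways that are missing from your proposal. First, the splitter is applied to the set of \emph{parts} of $\tilde{R}$, not to vertices, and the construction never tries to identify the true parts exactly: every part of the selected set $X$ is kept, spurious parts included. An auxiliary graph $G'$ is built on these parts (edges for cross edges of $G$ or shared adhesions $A_{t'}$, $t'\in\chd(t)\cup\{t\}$), connected components with more than $2k-1$ parts are merged into the center $O$, and small components become the non-center parts of $\tilde{P}'_{\chi(t)}$; this yields a valid nice decomposition for \emph{every} $X$, so no filter is needed, and refinement of $\tilde{P}_{\chi(t)}$ only requires that the true parts survive the merge. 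Second, for that survival one only needs the splitter to exclude the parts \emph{adjacent in $G'$} to the true parts, and the paper proves the key counting bound that each true part has at most $s^2+2s+k$ such neighbours (via compactness of $(\tau,\chi)$: an adhesion meeting two different parts forces a distinct cut edge of the solution below it, and there are at most $s$ cut edges), giving an exclusion set of size at most $(2k-1)(s^2+2s+k)$ that fits the splitter budget. This neighbour-counting argument, which is the technical heart of the lemma, has no counterpart in your proposal; without it (or the part-level, tolerate-spurious-parts construction it enables) the claimed family $\mathcal{D}$ cannot be shown to contain the required special nice decomposition. Separately, and more minor: your construction does not treat the case $O=\emptyset$ (condition (i) then demands that $\tilde{P}'_{\chi(t)}$ have a single part, which the paper handles by a separate branch for $|\chi(t)|=\cO((s+1)^5)$), and you never argue $O\neq\emptyset$ in the large-bag case, which the paper derives from compactness and connectivity of $G_t$.
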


\begin{proof}
Given $G$, $k$, $(\tau,\chi)$, $t\in V(\tau)$, $\mathrm{proj}(T,\chi(t))$, and $C'\subseteq E_{T}^{\chi(t)}$ we design an algorithm that returns a set of nice decompositions by carrying out the following steps :
\begin{enumerate}\setlength\itemsep{-.7mm}
\item[(i)] Set $\tilde{R}$ to be the partition of $V^{\chi(t)}_T$ obtained by removing the edges in $C'$ from $proj(T,\chi(t))$.
\item[(ii)] If $\chi(t)$ is of size greater than $\cO((s+1)^5)$, go to step (iii). Else, let $\tilde{P}'_{\chi(t)}$ be the partition of $\chi(t)$ having all vertices in $\chi(t)$ in one part and let $\tilde{Q}_{\chi(t)}$ be the projection of $\tilde{R}$ on $\chi(t)$. If the number of parts in $\tilde{R}$ is less than or equal to $2k-1$, then return the set $\mathcal{D}$ of nice decompositions as $\{(\tilde{P}'_{\chi(t)},\tilde{Q}_{\chi(t)},\emptyset)\}$, else return $\mathcal{D}=
\emptyset$. 
\item[(iii)] Let $S$ be the set of all parts in $\tilde{R}$. 
 Use Lemma~\ref{splittersetlemma} on $S$, $s1=2k-1$ and $s_2=(2k-1)(s^2+2s+k)$ and obtain a set $\mathcal{S}$ of subsets of $S$.
\item[(iv)] Construct a set $\mathcal{D}$ of nice decompositions using the following procedure: Initialize $\mathcal{D}=\emptyset$ and for each $X\in \mathcal{S}$, repeat the following steps: 
\begin{enumerate}\setlength\itemsep{-.7mm}
\item Initialize partition $\tilde{Q}=\tilde{R}$.
\item Combine all parts in $\tilde{Q}$ that are not in $X$. Let this part be $Q_1$.
\item Form a graph $G'$ with the parts in $\tilde{Q}$ except $Q_1$ as vertices. Add an edge between two parts in $G'$ if they have a cross edge between them or both parts have non empty intersection with some adhesion $A_{t'}$, $t'\in \chd(t)\cup t$. 
\item Initialize $\tilde{Q}'=\tilde{Q}$, combine all parts in connected components in $G'$ having more than $2k-1$ parts with $Q_1$ in $\tilde{Q}'$. Call this part $O'$. 
\item Initialize $\tilde{P}'=\tilde{Q}'$. For each connected component $Y$ in $G'$ having at most $2k-1$ parts, combine all parts in $Y$ in $\tilde{P}'$.  
\item Let $\tilde{Q}_\chi(t)$ be the projection of $\tilde{Q}'$ on $\chi(t)$ and let $\tilde{P}'_{\chi(t)}$ be the projection of $\tilde{P}'$ on $\chi(t)$. Let $O$ be the projection of $O'$ on $\chi(t)$.
\item Add the pair $(\tilde{P}'_{\chi(t)},\tilde{Q}_{\chi(t)},O)$ to $\mathcal{D}$ if $O\neq \emptyset$.
\end{enumerate}
\item[(v)] Return $\mathcal{D}$.
\end{enumerate}

\begin{claim}
\label{claim:nicedecompositionsetprop}
Each triple $(\tilde{P}'_{\chi(t)},\tilde{Q}_{\chi(t)},O) \in \mathcal{D}$ is a nice decomposition of $\chi(t)$. Further if $f_t(\tilde{P}_{A_t},i)\neq \infty$, $C\subseteq C'$, and $S_{E^{\chi(t)}_T}\cap C'=\emptyset$, then $\mathcal{D}$ contains a nice decomposition $(\tilde{P}'_{\chi(t)},\tilde{Q}_{\chi(t)},O)$ such that $\tilde{Q}_{\chi(t)}
$ is a refinement of $\tilde{P}_{\chi(t)}$.
\end{claim}
\begin{proof}
In the proof, will be using the notations summarized in Figure~\ref{fig:notations}. 
Firstly, we need to prove that every triple $(\tilde{P}'_{\chi(t)},\tilde{Q}_{\chi(t)},O)$ in $\mathcal{D}$ is a nice decomposition of $\chi(t)$. For this we need to show that $\tilde{Q}_{\chi(t)}$ is a refinement of $\tilde{P}'_{\chi(t)}$, $O$ is either a part in $\tilde{P}'_{\chi(t)}$ or an empty set and that the triple satisfies all the properties stated in Definition~\ref{nicedecompositiondefinition}. Then, to complete the proof we need to show that if $C\subseteq C'$, and  $S_{E^{\chi(t)}_T}\cap C'=\emptyset$, then $\mathcal{D}$ contains a triple $(\tilde{P}'_{\chi(t)},\tilde{Q}_{\chi(t)},O)$ such that $\tilde{Q}_{\chi(t)}$ is a refinement of $\tilde{P}_{\chi(t)}$. We divide the proof into two cases, one where the size of $\chi(t)$ is of $\cO((s+1)^5)$ and other where it is not. 

In the first case, $\mathcal{D}$ is either an empty set or has only one nice decomposition $\{(\tilde{P}'_{\chi(t)},\tilde{Q}_{\chi(t)},O)\}$, where, $\tilde{P}'_{\chi(t)}$ is the partition of $\chi(t)$ having all vertices in $\chi(t)$ in one part, $\tilde{Q}_{\chi(t)}$ is the projection of $\tilde{R}$ on $\chi(t)$ and $O=\emptyset$. If $\mathcal{D} \neq \emptyset$, then it is easy to see that $\tilde{Q}_{\chi(t)}$ is a refinement of $\tilde{P}'_{\chi(t)}$ and that all the four properties are satisfied since there is only one part in $\tilde{P}'_{\chi(t)}$. Thus, $\mathcal{D}$ is a set having zero or one nice decomposition of $\chi(t)$.
In this case, since the size of $\chi(t)$ is of $O((s+1)^5)$, it follows that by definition if $f_t(\tilde{P}_{A_t},i)\neq \infty$, then $S_{E^{\chi(t)}_T} = E^{\chi(t)}_T\backslash C$.  Thus, if $C\subseteq C'$, and $S_{E^{\chi(t)}_T}\cap C'=\emptyset$, then $C=C'$. Therefore, $\tilde{P}'_{\chi(t)}$ is the same as $\tilde{P}_{\chi(t)}$ and since $\tilde{Q}_{\chi(t)}$ is a refinement of $\tilde{P}'_{\chi(t)}$, it follows that it is a refinement of $\tilde{P}_{\chi(t)}$ also. This completes the proof for the first case.

We now argue for the case when the size of $\chi(t)$ is not of $\cO(s+1)^5$. By construction, every triple $(\tilde{P}'_{\chi(t)},\tilde{Q}_{\chi(t)},O)$ in $\mathcal{D}$ corresponds to some $X\in \mathcal{S}$ and has $O\neq \emptyset$. 

By construction, from step (iv), $\tilde{Q}'$ is a refinement of $\tilde{P}'$ since $\tilde{P}'$ is obtained by combining components in $\tilde{Q}'$. Since $\tilde{Q}_{\chi(t)}$ and $\tilde{P}'_{\chi(t)}$ are just projections of these partitions on $\chi(t)$, $\tilde{Q}_{\chi(t)}$ is a refinement of $\tilde{P}'_{\chi(t)}$. Since $O'$ is a part in $\tilde{P}'$ and $\tilde{Q}'$ and $O\neq \emptyset$, it easy to see that $O$ is part in $\tilde{P}'_{\chi(t)}$ and is not refined by $\tilde{Q}_{\chi(t)}$ as it is just the projection of $O'$ on $\chi(t)$. 
In step (d), we obtain $\tilde{Q}'$ from $\tilde{Q}$ by combining all parts in connected components of $G'$ that have more than $2k-1$ parts in them with $Q_1$. In step (e) we construct $\tilde{P}'$ from $\tilde{Q}'$ by combining the parts in each connected component $Y$ in $G'$ having at most $2k-1$ parts into one component in $\tilde{P}'$. Thus, the projection of $\tilde{Q}'$ on each part of $\tilde{P}'$ has at most $2k-1$ parts and hence the projection of $\tilde{Q}_{\chi(t)}$ on each part of $\tilde{P}'_{\chi(t)}$ has at most $2k-1$ parts.  
%
Since in $G'$ two parts from $\tilde{Q}\backslash Q_1$ are adjacent only if there is some cross edge between them or both have non empty intersection with some adhesion $A_{t'}$, $t'\in \chd(t)\cup t$, no parts in $\tilde{P}'\backslash O'$ and thus in $\tilde{P}'_{\chi(t)}\backslash O$ have any cross edge between them or have common adhesions. Therefore all the properties in Definition~\ref{nicedecompositiondefinition} are satisfied by $\tilde{P}'_{\chi(t)}$, $\tilde{Q}_{\chi(t)}$ and $O$ and thus $(\tilde{P}'_{\chi(t)},\tilde{Q}_{\chi(t)},O)$ is a nice decomposition of $\chi(t)$.

In this case, if $f_t(\tilde{P}_{A_t},i)\neq \infty$, $C\subseteq C'$, and $S_{E^{\chi(t)}_T}\cap C'=\emptyset$ then we need to show that there is a nice decomposition $(\tilde{P}'_{\chi(t)},\tilde{Q}_{\chi(t)},O)$ in $\mathcal{D}$ corresponding to some $X\in \mathcal{S}$ such that $\tilde{Q}_{\chi(t)}$ is a refinement of $\tilde{P}_{\chi(t)}$ and $O\neq \emptyset$. For this we first prove the following claim that talks about some properties of $\tilde{R}$  which will be helpful for identifying the required $X$.


\begin{claim}
\label{claim:preprocessingProp}
If $C\subseteq C'$, and $S_{E^{\chi(t)}_T}\cap C'=\emptyset$, then $\tilde{R}$ will satisfy the following properties. 
\begin{enumerate}\setlength\itemsep{-5pt}
\item[(P1)] $\tilde{R}$ is a refinement of $\tilde{P}'_{V^{\chi(t)}_T}$. 
\item[(P2)] Projection of $\tilde{R}$ on $S_{V^{\chi(t)}_T}$ will be the same as that of $\tilde{P}'_{V^{\chi(t)}_T}$ on $S_{V^{\chi(t)}_T}$. 
\item[(P3)]  Each part $R$ in $\tilde{R}$ has all vertices from either $S_{V^{\chi(t)}_T}$ or from $V^{\chi(t)}_T\backslash S_{V^{\chi(t)}_T}$, but not both.
\end{enumerate}
\end{claim}
\begin{proof}
Since $C\subseteq C'$ and $S_{E^{\chi(t)}_T}\cap C'=\emptyset$, $\tilde{R}$ is a refinement of $\tilde{P}'_{V^{\chi(t)}_T}$ and the projection of $\tilde{R}$ on $S_{V^{\chi(t)}_T}$ will be the same as that of $\tilde{P}'_{V^{\chi(t)}_T}$ on $S_{V^{\chi(t)}_T}$. Recall that the partition $\tilde{P}'_{V^{\chi(t)}_T}$ is the partition of $V^{\chi(t)}_T$ obtained by removing $C$ from ${\sf proj}(T,\chi(t))$. Since, each part in $\tilde{P}'_{V^{\chi(t)}_T}$ has vertices from either $S_{V^{\chi(t)}_T}$ or from $V^{\chi(t)}_T\backslash S_{V^{\chi(t)}_T}$, but not both, the same holds for the parts in $\tilde{R}$.
\end{proof}

Let $S'$ be the parts in $\tilde{R}$ having vertices from $S_{V^{\chi(t)}_T }$. By $(P2)$ and $(P3)$, $S'$ is a subset of $S$ which are the parts in $\tilde{R}$ and $S'$ is of size at most $2k-1$. Also, let $N$ be the parts in $\tilde{R}$ which are not in $S'$ and have cross edges to some part in $S'$ or have non empty intersection with an adhesion $A_{t'}$, $t'\in \chd(t)\cup t$ that also has non empty intersection with some part in $S'$. That is $N$ are the parts in $S\backslash S'$ that are adjacent to the parts in $S'$ in $G'$. We now prove that every part $P$ in $S'$ has at most $s^2+2s+k$ neighbours in $G'$. 

Each adhesion $A_{t'},t'\in \chd(t)$ that intersects different parts in $S$ can be associated with atleast one unique cross edge in $\tilde{P}(\tilde{P}_{A_{t}},i)$ from $E(G_{t'})\backslash E(G[A_{t'}])$ since $(\tau,\chi)$ is a compact tree decomposition and thus the graph obtained by removing the edges between any two vertices in $A_{t'}$ from $G_{t'}$ is connected. 
Since the number of vertices in an adhesion is at most $s$, the number of parts that $P$ can share an adhesion with is at most $s^2$. The number of cross edges in $\tilde{P}(\tilde{P}_{A_{t}},i)$ are atmost $s$, thus the number of parts that $P$ can share an edge in $E(G_t)$ with are at most $s$. In addition, the adhesion $A_{t}$ cannot be associated with a unique cross edges as it is the bag $t$ for which we are computing the state. Thus, $P$ can be adjacent to the parts that have vertices from $A_t$ and there are at most $s$ of these parts. Finally, since the small part that $P$ was contained in $\tilde{P}_{V^{\chi(t)}_T}$ is broken down into at most $k$ parts in $\tilde{P}'_{V^{\chi(t)}_T}$, at most $k$ edges can be saved so $P$ can have at most $k$ extra neighbours. Thus, $P$ has at most $s^2+2s+k$ neighbours in $G'$. Therefore, since $|S'|\leq 2k-1$, $N\leq (2k-1)(s^2+2s+k)$.

%

Thus from Lemma~\ref{splittersetlemma}, there exists $X\in \mathcal{S}$ such that $S'\subseteq X$ and $S'\cap N=\emptyset$. In $\tilde{Q}$, all parts that are not in $X$ are contained in $Q_1$ because of step (b). Therefore, all parts in $N$ are contained in $Q_1$ and the connected components in $G'$ either have all it's parts from $S'$ or no parts from $S'$. As $|S'|\leq 2k-1$, all connected components in $\tilde{Q}$ having more than $2k-1$ components contain no parts from $S'$. Thus $\tilde{Q}'$ has all the parts from $S'$ in it, and therefore $\tilde{Q}_{\chi(t)}$ 
is a refinement of $\tilde{P}_{\chi(t)}$. 

To complete the proof, we need to prove that $O\neq \emptyset$. Observe that each part in $\tilde{P}'_{\chi(t)}$ has all vertices from $S_{\chi(t)}$ or all from $\chi(t)\backslash S_{\chi(t)}$. Since we are in the case when the size of $\chi(t)$ is not of $\cO((s+1)^5)$, $\chi(t)\backslash S_{\chi(t)}\neq \emptyset$. Thus the number of parts in $\tilde{P}'_{\chi(t)}$ is more than one. If $O=\emptyset$, then since there are no edges or common adhesions between different parts in $\tilde{P}'_{\chi(t)}$, it means that $G_t$ is disconnected. But because $(\tau,\chi)$ is a compact tree decomposition, $G_t$ is connected, and therefore $O$ cannot be an empty set. This completes the proof of the claim.
\end{proof}
\begin{claim}
\label{claim:preprocessingSize}
$\mathcal{D}$ is of size $\cO(s^{O(k)}\log n)$ and is computed in time $\cO(s^{\cO(k)}n^{\cO(1)})$.
\end{claim}
\begin{proof}
In the case when the size of $\chi(t)$ is of $\cO((s+1)^5)$, $\mathcal{D}$ is of size at most one and is computed in polynomial time. In the other case, $\mathcal{S}$ is of size $\cO(s^{\cO(k)}\log n)$ from Lemma~\ref{splittersetlemma}. Since for each $X\in \mathcal{S}$ we add at most one nice decomposition to set $\mathcal{D}$ in step (vi), $\mathcal{D}$ is of size $\cO(ks)^{\cO(k)}\log n)$. Time taken to compute $\mathcal{S}$ in step (v) is $\cO(s^{\cO(k)}n^{\cO(1)})$ from Lemma~\ref{splittersetlemma}. All other steps take polynomial time, in particular, for each set $X\in \mathcal{S}$ it takes polynomial time to compute $\tilde{P}'_{\chi(t)}$, $\tilde{Q}_{\chi(t)}$ and $O$ and add it to $\mathcal{D}$. Thus, the time taken to compute $\mathcal{D}$ in this case is $\cO(s^{\cO(k)}n^{\cO(1)})$. 
\end{proof}
Claims~\ref{claim:nicedecompositionsetprop} and \ref{claim:preprocessingSize} complete the proof of Lemma~\ref{NiceDecompositionPreprocessingLemma}.
\end{proof}

With this, we start the second part of our section, an algorithm that takes as input a nice decomposition of $\chi(t)$ along with other inputs and outputs a value greater than or equal to $f_t(\tilde{P}_{A_t},i)$. Further if the partition $\tilde{Q}_{\chi(t)}$ in the nice decomposition satisfies a special property, then the algorithm will output $f_t(\tilde{P}_{A_t},i)$. We now state this as a Lemma. 
\begin{lemma}
\label{FinalKnapsackStyleDP}
There exists an algorithm that takes as input $(\tau,\chi)$, a node $t \in V(\tau)$, a $T$-feasible partition $\tilde{P}_{A_t}$ of $A_t$, a positive integer $i\leq k$, a nice decomposition $(\tilde{P}'_{\chi(t)},\tilde{Q}_{\chi(t)},O)$ of $\chi(t)$, together with the value of $f_{t'}(\tilde{P}_{A_{t'}}', i')$ for every child $t'$ of $t$, $T$-feasible partition $\tilde{P}_{A_{t'}}'$ of $A_{t'}$, and positive integer $i' \leq i$, and returns a positive integer $v$ such that $f_t(\tilde{P}_{A_t},i)\leq v$ or $v=\infty $ in time $k^{\cO(k)}n^{\cO(1)}$. Furthermore, if $\tilde{Q}_{\chi(t)}$ is a refinement of $\tilde{P}_{\chi(t)}$, then $v\leq f_t(\tilde{P}_{A_t},i)$.
\end{lemma}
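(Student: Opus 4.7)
The plan is to exploit the key structural property of a nice decomposition: the petals (non-center parts $P_j$ of $\tilde{P}'_{\chi(t)}$) are mutually independent in the sense that property (iii) forbids cross edges between them and property (iv) forbids adhesions that span two petals. This independence lets me process each petal locally and then merge the contributions via a classical knapsack DP over cost and part count.

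First, I would handle each petal $P_j$ separately. Since $\tilde{Q}_{\chi(t)}|_{P_j}$ has at most $2k-1$ parts (property (ii)), there are at most $k^{\cO(k)}$ coarsenings of this local partition, and for each such coarsening I also need a flag marking which class, if any, is merged into the class that will contain the center $O$. For each choice of local coarsening and flag, the projection of the resulting partition onto each adhesion $A_{t'}$ of a petal-$j$ child (children $t'$ with $A_{t'}\cap P_j\neq \emptyset$, which by (iv) satisfy $A_{t'}\subseteq O\cup P_j$) is determined. I then look up $f_{t'}(\tilde{P}_{A_{t'}}',i')$ for the induced $\tilde{P}_{A_{t'}}'$ and the relevant values of $i'$, and process the petal's children via a small knapsack tracking the cost and the number of internal parts added. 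This produces, for each petal, a table $g_j(\pi_j,\textrm{flag},n_j,c_j)$ recording, for each local profile, the minimum cost $c_j$ achievable while contributing $n_j$ new parts.

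Second, the center $O$, when non-empty, is processed by the same technique, but trivially: since $\tilde{Q}_{\chi(t)}|_{O}=\{O\}$ by property (i), only the coarsening ``all of $O$ in one class'' is possible, and the center children ($A_{t'}\subseteq O$) are aggregated via a single knapsack over $\tilde{P}_{A_{t'}}'$ and $i'$ to get $g_O(n_O,c_O)$. Third, I combine all the per-petal and the center tables via a global knapsack whose state tracks the running total cost and total number of parts; this step also has to enforce that the projection of the combined coarsening onto $A_t$ equals the prescribed $\tilde{P}_{A_t}$, which by (iv) affects $O$ and at most one petal, and therefore simply restricts the admissible coarsenings of that petal. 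Because any coarsening of $\tilde{Q}_{\chi(t)}$ arises by removing a subset of the edges of $\mathrm{proj}(T,\chi(t))$ that separate $\tilde{Q}_{\chi(t)}$, every partition considered remains $T$-feasible; combined with the $T$-feasibility guarantee from the child DP values, the output partition is $T$-feasible on $V(G_t)$.

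For correctness, $v\ge f_t(\tilde{P}_{A_t},i)$ holds because every configuration enumerated corresponds to a concrete $T$-feasible $i$-partition of $V(G_t)$ with projection $\tilde{P}_{A_t}$ on $A_t$ and cost equal to the reported value. For the converse, when $\tilde{Q}_{\chi(t)}$ refines $\tilde{P}_{\chi(t)}$ (the projection on $\chi(t)$ of the optimal realizer of $f_t(\tilde{P}_{A_t},i)$), the coarsening that yields $\tilde{P}_{\chi(t)}$ is one of the enumerated options, the induced $\tilde{P}_{A_{t'}}'$ agree with the optimal partition's projections on every $A_{t'}$, and the sum of the corresponding child values is at most the optimal cost, so $v\le f_t(\tilde{P}_{A_t},i)$. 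The running time is dominated by the per-petal enumeration ($k^{\cO(k)}$ profiles), the within-petal child knapsack ($k^{\cO(k)}\cdot s\cdot n^{\cO(1)}$ per petal), and the global knapsack of dimension $(k,s)$ over at most $n$ petals, giving total time $k^{\cO(k)}n^{\cO(1)}$. The main obstacle I anticipate is the bookkeeping for edges and adhesions that straddle $O$ and a petal: I must ensure that the projection of the coarsening onto such an adhesion is consistent both with the petal's local profile and with the decision of whether the relevant class merges with $O$'s class, and that edges in $G[A_{t'}]$ and edges between $O$ and $P_j$ are counted exactly once across the child contributions and the within-$\chi(t)$ accounting.
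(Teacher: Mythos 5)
Your proposal takes essentially the same route as the paper's proof: for each petal you enumerate the at most $k^{\cO(k)}$ coarsenings of $\tilde{Q}_{\chi(t)}$ restricted to $O\cup P_l$ (the paper's family $\mathcal{R}(l,j)$), run a knapsack over the children whose adhesions lie inside $O\cup P_l$ using the supplied values $f_{t'}$ (with the same corrections for edges inside adhesions and for parts shared across the adhesion), and then merge the petals and the center by an outer knapsack over (cost, number of parts) glued along the class containing $O$ --- your per-petal tables plus global knapsack are just an unrolled form of the paper's recurrence for $g(l,j)$ via $h$ and $h'$, and the two correctness directions are argued the same way. The only (harmless) overstatement is your claim that the assembled partition of $V(G_t)$ is itself $T$-feasible: gluing projections of possibly different $(2k-2)$-admissible partitions of $V(T)$ need not yield such a projection, but this property is not needed and the paper's proof does not maintain it either.
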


To prove Lemma~\ref{FinalKnapsackStyleDP}, we design a dynamic programming algorithm on a given nice decomposition $(\tilde{P}'_{\chi(t)},\tilde{Q}_{\chi(t)},O)$ of $\chi(t)$. Let $p$ denote the number of parts in $\tilde{P}'_{\chi(t)}\backslash O$, it is easy to see that by the definition of a nice decomposition, if $O=\emptyset$, then $p=1$ and if not then $p=|\tilde{P}'_{\chi(t)}|-1$. We arbitrarily order all the parts in $\tilde{P}'_{\chi(t)}$ except $O$ and denote by $P_l$ the $l^{th}$ part in this order, where $l\geq 1$ and denote by $P_{\leq l}=\underset{x\leq l}\bigcup P_x$, the union of all parts $P_x$, $x\leq l$.  

Given a non negative integer $l$ that is less than or equal to $p$, we define the set $\mathcal{A}(l)$ to be the set of all adhesions in the set $\{A_{t'}:t'\in \chd(t)\}$ that have vertices only from $P_l\cup O$ and have non empty intersection with $P_l$. 
Also, we denote by $\mathcal{A}_{\leq}(l)=\underset{0\leq l'\leq l}\bigcup \mathcal{A}(l')$, the union of all sets $\mathcal{A}(l')$ where $0\leq l'\leq l$. Further, we define the graph $G(l)=G[O\cup P_l] \cup \underset{A_{t'}\in \mathcal{A}(l)}\bigcup G_{t'}$, to be the subgraph of $G$ induced by all the vertices in $O\cup P_{l} \cup \underset{A_{t'}\in \mathcal{A}(l)}\bigcup V(G_{t'})$ and define the graph $G_{\leq}(l)=G[O\cup P_{\leq l}] \cup \underset{A_{t'}\in \mathcal{A}_{\leq}(l)}\bigcup G_{t'}$, to be the subgraph of $G$ induced by all the vertices in $O\cup P_{\leq l }\cup \underset{A_{t'}\in \mathcal{A}_{\leq}(l)}\bigcup V(G_{t'})$.

We define two function $h,g: \{0,\ldots,p\} \times \{1,\ldots,i\}\longrightarrow \{0,\ldots,s\}\cup \{\infty\}$ that our algorithm will compute. The domain of $h$ and $g$ consists of all pairs $(l,j)$ where $l$ is a non negative integer less than or equal to $p$ and $j$ is a positive integer less than or equal to the input integer $i$. On input $(l,j)$, $h$ returns the smallest possible weight of a $j$-partition $\tilde{P}$ of $G(l)$ such that the projection of $\tilde{P}$ on $O\cup P_{l}$ is refined by the projection of $\tilde{Q}_{\chi(t)}$ on $O\cup P_l$ and if $A_t \subseteq O\cup P_l$ then $\tilde{P}_{A_t}$ is the projection of $\tilde{P}$ on $A_t$ and $g$ returns the smallest possible weight of a $j$-partition $\tilde{P}$ of $G_{\leq}(l)$ such that the projection of $\tilde{P}$ on $O\cup P_{\leq l}$ is refined by the projection of $\tilde{Q}_{\chi(t)}$ on $O\cup P_{\leq l}$ and if $A_t \subseteq O\cup P_{\leq l}$ then $\tilde{P}_{A_t}$ is the projection of $\tilde{P}$ on $A_t$. However, if this weight is greater than $s$, or no such partition exists then both $h(l,j)$ and $g(l,j)$ return $\infty$. 

The main step of an algorithm for Lemma~\ref{FinalKnapsackStyleDP} is an algorithm that computes $g(l,j)$ for every pair $(l,j)\in \{0,\ldots,p\} \times \{1,\ldots,i\}$ assuming that the value of $g(l',j')$ has been computed for every pair $(l',j')\in \{0,\ldots,l-1\} \times \{1,\ldots,i\}$. We now state that this step can be carried out.
\begin{lemma}
\label{KnapsackDPstateLemma}
There exists an algorithm that takes all the inputs of Lemma~\ref{FinalKnapsackStyleDP}, along with a non negative integer $l\leq p$, a positive integer $j\leq i$ and if $l\geq 1$ the value of $g(l-1,j')$ for every positive integer $j'\leq j$ and returns a positive integer $v = g(l,j)$ in time $k^{\cO(k)}n^{\cO(1)}$.
\end{lemma}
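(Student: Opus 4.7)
The plan is to design the algorithm as a two-level dynamic program that sweeps through the ordered parts $P_1, \ldots, P_p$ of $\tilde{P}'_{\chi(t)} \setminus O$. The \emph{single observation} that makes everything simple is that by part (i) of Definition~\ref{nicedecompositiondefinition} the projection of $\tilde{Q}_{\chi(t)}$ onto $O$ is the single block $O$, and the refinement constraint in the definitions of $g$ and $h$ says that the projection of $\tilde{Q}_{\chi(t)}$ refines that of $\tilde{P}$. Hence every feasible $\tilde{P}$ places all of $O$ into a single part. This kills the usual combinatorial blow-up of tracking the partition that a solution induces on the separator $O$: there is effectively only one ``state'' on $O$.

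First I would compute, for every $j_2 \in \{1, \ldots, i\}$, the auxiliary value $h(l, j_2)$, defined analogously to $g$ but over the slice $G(l)$ in place of $G_{\leq}(l)$. By properties (i)--(ii) of Definition~\ref{nicedecompositiondefinition}, the projection of $\tilde{Q}_{\chi(t)}$ on $O \cup P_l$ has at most $2k$ parts, so the possible projections that $\tilde{P}$ can induce on $O \cup P_l$ (being coarsenings of this) form a collection of size at most $B(2k) = k^{\cO(k)}$. I would enumerate these coarsenings $\pi$, discarding any whose projection onto some $A_{t'} \in \mathcal{A}(l)$ fails to lie in the precomputed family $\mathcal{F}^{A_{t'}}_T$ of $T$-feasible partitions, and, if $A_t \subseteq O \cup P_l$, any disagreeing with the input $\tilde{P}_{A_t}$. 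For each surviving $\pi$ the weight contribution inside $G[O \cup P_l]$ is immediate, and the contribution from children $t'$ with $A_{t'} \in \mathcal{A}(l)$ decomposes across children because the subtree interiors $\gamma(t') \setminus A_{t'}$ are pairwise disjoint; a standard knapsack sub-DP over these children, with item sizes $i_{t'} - a_{t'}$ and item values $f_{t'}(\tilde{P}_{A_{t'}}', i_{t'})$, then optimally fills a budget of $j_2 - |\pi|$ extra parts in polynomial time per coarsening.

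For the outer step I would use the recurrence
\begin{equation*}
g(l, j) \;=\; \min_{\substack{j_1, j_2 \geq 1 \\ j_1 + j_2 = j + 1}} \bigl( g(l-1, j_1) + h(l, j_2) \bigr), \qquad l \geq 1,
\end{equation*}
with $g(0, j)$ handled by the same inner routine applied to $G[O]$ together with adhesion-subtrees fully contained in $O$ (and an easy special case when $O = \emptyset$, where $p = 1$ forces $g(1, j) = h(1, j)$ and the ``$-1$'' in the part count vanishes). Soundness of the recurrence follows because properties (iii) and (iv) of a nice decomposition force $V(G_{\leq}(l-1)) \cap V(G(l)) = O$ with no edge of $G_{\leq}(l)$ crossing from $V(G_{\leq}(l-1)) \setminus O$ to $V(G(l)) \setminus O$. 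Thus, gluing the (unique) $O$-part of any $\tilde{P}^1$ witnessing $g(l-1, j_1)$ with the $O$-part of any $\tilde{P}^2$ witnessing $h(l, j_2)$ produces a $(j_1 + j_2 - 1)$-partition of $G_{\leq}(l)$ whose weight is the sum of the two (edges inside $G[O]$ sit in a single part, contributing zero on both sides), while restricting an optimal $g(l, j)$-witness back down yields valid candidates on both sides whose part counts sum to $j + 1$.

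The hardest part will be carefully verifying that the refinement constraint on $O \cup P_{\leq l}$ and the $T$-feasibility of the projections on adhesions really survive the join; this uses the fact that the projection of $\tilde{Q}_{\chi(t)}$ on $O \cup P_{\leq l}$ is the disjoint union of its projection on $O \cup P_{\leq l-1}$ and its projection on $P_l$, so refinement on both sides implies refinement of the union, and $T$-feasibility of the projections on each $A_{t'}$ is a local property already enforced inside $h$. Counting time, the outer recurrence minimizes over $\cO(k)$ pairs $(j_1, j_2)$, each requiring a single $h(l, j_2)$ call, and the inner DP over coarsenings costs $k^{\cO(k)} n^{\cO(1)}$, giving the advertised $k^{\cO(k)} n^{\cO(1)}$ total.
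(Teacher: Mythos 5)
Your proposal follows essentially the same route as the paper's proof: the outer recurrence $g(l,j)=\min_{j_1+j_2=j+1}\bigl(g(l-1,j_1)+h(l,j_2)\bigr)$, justified by the fact that $O$ must lie in a single part, is exactly the paper's equation for $g$, and your inner computation of $h(l,\cdot)$ --- enumerating the $k^{\cO(k)}$ coarsenings of the projection of $\tilde{Q}_{\chi(t)}$ on $O\cup P_l$ that are consistent with $\tilde{P}_{A_t}$, then running a knapsack-style DP over the children attached at the adhesions in $\mathcal{A}(l)$ using the supplied values $f_{t'}$ with the "extra parts" bookkeeping --- matches the paper's enumeration of the family $\mathcal{R}(l,j)$ followed by its DP over $\mathcal{A}(l)$. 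The one accounting detail you omit is that each adhesion $A_{t'}\in\mathcal{A}(l)$ is contained in $O\cup P_l$, so edges with both endpoints in $A_{t'}$ appear both in $G[O\cup P_l]$ and in $G_{t'}$; the paper handles this by removing intra-adhesion edges from the child graphs and subtracting $w(\tilde{R}_{A_{t'}})$ from the child value (its equation (5)), and your item values $f_{t'}(\pi|_{A_{t'}},\cdot)$ need the same subtraction to avoid double counting crossing edges inside adhesions. With that correction your argument and the $k^{\cO(k)}n^{\cO(1)}$ running-time bound go through as in the paper.
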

Observe that $g(p,i)$ is the smallest possible weight of an $i$-partition $\tilde{P}$ of $G_t$ such that $\tilde{Q}_{\chi(t)}$ is a refinement of the projection of $\tilde{P}$ on $\chi(t)$ and the projection of $\tilde{P}$ on $A_t$ is $\tilde{P}_{A_t}$. By definition, $f_t(\tilde{P}_{A_t},i)$ is the weight of the smallest possible $i$-partition whose projection on $A_t$ is $\tilde{P}_{A_t}$, thus it follows that $g(p,i)\geq f_t(\tilde{P}_{A_t},i)$. If $f_t(\tilde{P}_{A_t},i)\neq \infty$ and $\tilde{Q}_{\chi(t)}$ is a refinement of $\tilde{P}_{\chi(t)}$ then $g(p,i)\leq f_t(\tilde{P}_{A_t},i)$ since by definition, $\tilde{P}(\tilde{P}_{A_t},i)$ is a $i$-partition of $G_t$ that has weight $f_t(\tilde{P}_{A_t},i)$, whose projection on $\chi(t)$ is $\tilde{P}_{\chi(t)}$, and whose projection on $A_t$ is $\tilde{P}_{A_t}$. We state these relations between $g(p,i)$ and  $f_t(\tilde{P}_{A_t},i)$ in the following observation.
\begin{observation}
\label{knapsackstateobservation}
$g(p,i)$ is always greater than or equal to $f_t(\tilde{P}_{A_t},i)$ and if $f_t(\tilde{P}_{A_t},i)\neq\infty$ and $\tilde{Q}_{\chi(t)}$ is a refinement of $\tilde{P}_{\chi(t)}$ then $g(p,i)$ is less than or equal to $f_t(\tilde{P}_{A_t},i)$.
\end{observation}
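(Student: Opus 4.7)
The plan is to unpack the definitions of $g(p,i)$ and $f_t(\tilde{P}_{A_t},i)$ and compare the families of partitions each minimizes over. First I note that, by property (iv) of the nice decomposition, every child adhesion of $t$ lies inside $O\cup P_l$ for some $l\in\{1,\ldots,p\}$, so $G_{\leq}(p)=G_t$ and $O\cup P_{\leq p}=\chi(t)$. Consequently, $g(p,i)$ is precisely the smallest weight, capped at $s$, of an $i$-partition $\tilde{P}$ of $V(G_t)$ whose projection onto $A_t$ equals $\tilde{P}_{A_t}$ and whose projection onto $\chi(t)$ is refined by $\tilde{Q}_{\chi(t)}$.

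For the inequality $g(p,i)\geq f_t(\tilde{P}_{A_t},i)$, the case $g(p,i)=\infty$ is immediate. When $g(p,i)$ is finite I take a witness $\tilde{P}$ achieving it and argue it is a valid candidate in the minimization defining $f_t(\tilde{P}_{A_t},i)$: it is already an $i$-partition of $V(G_t)$ of weight at most $s$ whose projection onto $A_t$ is $\tilde{P}_{A_t}$, and the only remaining requirement is $T$-feasibility. For this I appeal to the construction of $\tilde{Q}_{\chi(t)}$ in Lemma~\ref{NiceDecompositionPreprocessingLemma}, which arises from a partition of $V^{\chi(t)}_T$ produced by edge deletions in ${\sf proj}(T,\chi(t))$, and to Lemma~\ref{sizeoffeasibleset}, which equates $T$-feasibility on $\chi(t)$ with ${\sf proj}(T,\chi(t))$-feasibility. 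Combining these with the separation properties of the compact tree decomposition $(\tau,\chi)$ at $t$ lets me lift a witnessing admissible partition of $V^{\chi(t)}_T$ to a $(2k-2)$-admissible partition of all of $V(T)$ that projects onto $\tilde{P}$, thereby certifying $T$-feasibility.

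For the reverse inequality, suppose $f_t(\tilde{P}_{A_t},i)<\infty$ and that $\tilde{Q}_{\chi(t)}$ is a refinement of $\tilde{P}_{\chi(t)}$. Let $\tilde{P}(\tilde{P}_{A_t},i)$ be the witness realizing $f_t(\tilde{P}_{A_t},i)$: by definition it is a $T$-feasible $i$-partition of $V(G_t)$ of weight at most $s$, projects onto $\tilde{P}_{A_t}$ on $A_t$, and projects onto $\tilde{P}_{\chi(t)}$ on $\chi(t)$. The hypothesis that $\tilde{Q}_{\chi(t)}$ is a refinement of $\tilde{P}_{\chi(t)}$ is exactly the remaining constraint imposed by $g(p,i)$, so $\tilde{P}(\tilde{P}_{A_t},i)$ is a feasible competitor for $g(p,i)$, yielding $g(p,i)\leq f_t(\tilde{P}_{A_t},i)$.

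I expect the $T$-feasibility argument in the first direction to be the main obstacle, as it requires carefully tracking how $\tilde{Q}_{\chi(t)}$ was derived from a cut of ${\sf proj}(T,\chi(t))$ and then using the tree decomposition separation at $t$ to extend an admissible partition beyond $\chi(t)$ to the rest of $V(T)$ without creating new $E(T)$-crossings. The reverse direction is, by contrast, a one-line verification once the definitions of $g$ and $f_t$ are unpacked.
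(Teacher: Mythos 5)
Your second inequality ($g(p,i)\leq f_t(\tilde{P}_{A_t},i)$ when $\tilde{Q}_{\chi(t)}$ refines $\tilde{P}_{\chi(t)}$) is exactly the paper's argument: the witness $\tilde{P}(\tilde{P}_{A_t},i)$ realising $f_t(\tilde{P}_{A_t},i)$ satisfies every constraint in the minimisation defining $g(p,i)$, so this direction is indeed a one-line verification. Your preliminary remark that $G_{\leq}(p)=G_t$ and $O\cup P_{\leq p}=\chi(t)$ (via property (iv) of a nice decomposition) is also correct and is used implicitly by the paper.

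The gap is in your first inequality. You read $f_t$ literally, so its feasible region contains only $T$-feasible partitions, and you therefore try to show that a witness $\tilde{P}$ for $g(p,i)$ is $T$-feasible by lifting an admissible partition of $V^{\chi(t)}_T$ to a $(2k-2)$-admissible partition of $V(T)$. This lifting cannot succeed in general: the constraints defining $g(p,i)$ (projection onto $\chi(t)$ refined by $\tilde{Q}_{\chi(t)}$, projection onto $A_t$ equal to $\tilde{P}_{A_t}$) say nothing about how $\tilde{P}$ behaves on $V(G_t)\setminus\chi(t)$. A minimiser of $g(p,i)$ may partition the interior $\alpha(t')$ of some child $t'$ so that far more than $2k-2$ edges of $T$ lying inside $G_{t'}$ cross $\tilde{P}$; those crossings persist in \emph{every} extension of $\tilde{P}$ to $V(T)$, so such a $\tilde{P}$ is simply not $T$-feasible, and no argument about ${\sf proj}(T,\chi(t))$ or the separation at $t$ can remove crossings that live entirely below the adhesions. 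The paper does not attempt this: for the direction $g(p,i)\geq f_t(\tilde{P}_{A_t},i)$ it treats $f_t$ as the minimum over \emph{all} $i$-partitions of $V(G_t)$ whose projection onto $A_t$ is $\tilde{P}_{A_t}$ (this is how $f_t$ is phrased in the sentence immediately preceding the observation), so the inequality is an immediate containment of feasible regions. The $T$-feasibility clause in the definition of $f_t$ is exploited only in the other direction, where the optimal witness is $T$-feasible by definition. What the algorithm actually needs from this direction is only that any finite value of $g(p,i)$ is realised by a genuine $i$-partition of $V(G_t)$ of weight at most $s$ with the prescribed projection onto $A_t$ --- which your witness already provides without any feasibility lifting.
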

We are now ready to prove Lemma~\ref{FinalKnapsackStyleDP} using Lemma~\ref{KnapsackDPstateLemma} and Observation~\ref{knapsackstateobservation}.
\begin{proof}[Proof of Lemma~\ref{FinalKnapsackStyleDP} assuming Lemma~\ref{KnapsackDPstateLemma}]
For proving Lemma~\ref{FinalKnapsackStyleDP} we propose the following algorithm. The algorithm will take inputs as specified in the Lemma, compute the function $g$ using Lemma~\ref{KnapsackDPstateLemma} and finally returns an integer $v=g(p,i)$. From Observation~\ref{knapsackstateobservation}, it follows that the algorithm outputs a value $v\geq f_t(\tilde{P}_{A_t},i)$. Further if $f_t(\tilde{P}_{A_t},i)\neq \infty$ and $\tilde{Q}_{\chi(t)}$ is a refinement of $\tilde{P}_{\chi(t)}$ then $v\leq f_t(\tilde{P}_{A_t},i)$. 
Since the domain of $g$ has $p\cdot i$ values, $p\leq n$ and $i\leq k$, from Lemma~\ref{KnapsackDPstateLemma} it is easy to see that the algorithm runs in time $k^{\cO(k)} n^{\cO(1)}$. 
\end{proof}
Since, $G_{\leq }(l)=G_{\leq}(l-1) \cup G(l)$ and all common edges of $G_{\leq}(l-1)$ and $G(l)$ are in $G(0)$, it follows that for any pair $(l,j)\in \{0,\ldots,p\} \times \{1,\ldots,i\}$ the following equation holds if $O\neq \emptyset$.
\begin{align}
g(l,j)=\begin{cases}
\underset{1\leq j'\leq j}\min g(l-1,j')+h(l,j-j'+1) & \mbox{if } l\geq 1\\
h(l,j) & \mbox{if } l=0\\
\end{cases}
\end{align}

If $O=\emptyset$, then there is only one part in $\tilde{P}'_{\chi(t)}$ of the nice decomposition and thus we can compute $g(1,j)=h(1,j)$, for all $j\leq i$ directly.

We are now ready to prove Lemma~\ref{KnapsackDPstateLemma}, the proof will have an algorithm that computes $h(l,j')$ efficiently for all $j'\leq j$.
\begin{proof}[Proof of Lemma~\ref{KnapsackDPstateLemma}]
Let $\mathcal{R}(l,j)$ be the set of partitions $\tilde{R}$ of $O\cup P_l$ having at most $j$ components such that the projection of $\tilde{Q}_{\chi(t)}$ on $O\cup P_l$ is a refinement of $\tilde{R}$ and if $A_t \subseteq  P_{\leq l}$, then $\tilde{P}_{A_t}$ is the projection of $\tilde{R}$ on $A_t$.

We define the function $h':\{1,\ldots,j\}\times \mathcal{R}(l,j)\longrightarrow \{0,\ldots,s\}\cup \{\infty\}$ that our algorithm will compute. On input $(j',\tilde{R})$, where $1\leq j'\leq j$ and $\tilde{R}\in \mathcal{R}(l,j)$, $h'$ returns the smallest possible weight of a $j'$-partition $\tilde{P}$ of $G(i)$ such that the projection of $\tilde{P}$ on $O\cup P_l$ is $\tilde{R}$. 

Clearly, for all positive integers $j'\leq j$, $h(l,j')$ satisfies the following equation.
\begin{equation}
h(l,j')=\underset{\tilde{R} \in \mathcal{R}(l,j)}\min h'(j',\tilde{R})
\end{equation}
Thus, to complete the proof, it is sufficient to compute $h'(j',\tilde{R})$ for all $j'\leq j$ given $\tilde{R}$. For this, we design a knapsack style dynamic programming algorithm on the set $\mathcal{A}(l)$ of adhesions associated with $P_l$. Arbitrarily order the adhesions in the set $\mathcal{A}(l)$ and let $a$ be a positive integer less than or equal to $|\mathcal{A}(l)|$, let $t_a$ denote the child of $t$ such that $A_{t_a}$ is the $a^{th}$ element in $\mathcal{A}(l)$. Let $G(l,a)$ denote the graph obtained by removing the edges among the vertices in $A_{t_a}$ from the graph $G[O\cup P_l] \cup G_{t_a}$ and let $G(l,0)$ denote the graph $G[O\cup P_l]$. Also, let $G_{\leq}(l,a)=\underset{0\leq a'\leq a}\bigcup G(l,a')$. 
We define two function $\zeta,\nu: \{0,\ldots,|\mathcal{A}(l)|\} \times \{1,\ldots,j\}\longrightarrow \{0,\ldots,s\}\cup \{\infty\}$ that our algorithm will compute. The domain of $\zeta$ and $\nu$ consists of all pairs $(a,r)$ where $a$ is a non negative integer less than or equal to $|\mathcal{A}(l)|$ and $r$ is a positive integer less than or equal to the input integer $j$. On input $(a,r)$, $\zeta$ returns the smallest possible weight of a $r$-partition $\tilde{P}$ of $G(l,a)$ such that the projection of $\tilde{P}$ on $O\cup P_{l}$ is $\tilde{R}$ and $\nu$ returns the smallest possible weight of a $r$-partition $\tilde{P}$ of $G_{\leq}(l,a)$ such that the projection of $\tilde{P}$ on $O\cup P_l$ is $\tilde{R}$. However, if this weight is greater than $s$, or no such partition exists then both $\zeta(a,r)$ and $\nu(a,r)$ return $\infty$.

It is easy to see that $h'(j',\tilde{R})$ satisfies the following equation just by the definition of $\nu$.
\begin{equation}
h'(j',\tilde{R}) = \nu(|\mathcal{A}(l)|,j')
\end{equation} 
 
Since $G_{\leq}(l,a)=G_{\leq}(l,a-1)\cup G(l,a)$ and there are no common edges between $G_{\leq}(l,a-1)$ and $G(l,a)$, for any pair $(a,r)\in \{0,\ldots,|\mathcal{A}(l)|\} \times \{1,\ldots,j\}$, the following equation holds, where $\tilde{R}_{A_{t_a}}$ denotes the projection of $\tilde{R}$ on the adhesion $A_{t_a}$.
\begin{align}
\nu(a,r)= \begin{cases}
\underset{0\leq r'\leq r}\min \nu(a-1,r') + \zeta(a,r-r'+|\tilde{R}_{A_{t_a}}|) & \mbox{if } a\geq 1\\
w(\tilde{R}) & \mbox{if } a=0 \mbox{ and }|\tilde{R}|=r\\
\infty & \mbox{otherwise}\\
\end{cases}
\end{align} 
Observe that $\zeta(a,r-r'+|\tilde{R}_{A_{t_a}}|)$ can be computed by the following equation just by the definition of $\zeta$ and $f$,
\begin{equation}
\zeta(a,r-r'+|\tilde{R}_{A_{t_a}}|)=f_{t_a}(\tilde{R}_{A_{t_a}},r-r'+|\tilde{R}_{A_{t_a}}|)-w(\tilde{R}_{A_{t_a}})
\end{equation}
Combining equations~$(3)$, $(4)$ and $(5)$, it is easy to see that given $j'\leq j$ and $\tilde{R}\in \mathcal{R}(l,j)$, $h'(j',\tilde{R})$ can be computed in $\cO(nk)$ time. The size of $\mathcal{R}(l,j)$ is at most $j^{2k}$ since from properties (i) and (ii) of Definition~\ref{nicedecompositiondefinition}, it follows that the projection of $\tilde{Q}_{\chi(t)}$ on $P_l\cup O$ has at most $2k$ parts. Thus, from equation~$(2)$, it follows that $h(l,j')$ can be computed in time $k^{\cO(k)}n^{\cO(1)}$. Combining this with equation $(1)$, it is clear that $g(l,j)$ can be computed in time $k^{\cO(k)}n^{\cO(1)}$. This completes the proof.
\end{proof}
We now are ready to complete the proof of Lemma~\ref{DPSubLemma} for which we obtained results throughout this section.

\begin{proof}[Proof of Lemma~\ref{DPSubLemma}]
For the proof, we propose the following algorithm. Firstly, the algorithm will take inputs as specified in Lemma~\ref{DPSubLemma} and obtain a set of nice decompositions $\mathcal{D}$ from Lemma~\ref{NiceDecompositionPreprocessingLemma}. Then for each nice decomposition $D\in \mathcal{D}$, the algorithm obtains a value $v_D$ from Lemma~\ref{FinalKnapsackStyleDP}. Finally it returns the positive integer $v=\underset{D\in \mathcal{D}}{\min}\hspace{2mm} v_D$. If $\mathcal{D}=\emptyset$, then return $v=\infty$. Combining Lemma~\ref{NiceDecompositionPreprocessingLemma} and Lemma~\ref{FinalKnapsackStyleDP}, it is easy to see that the algorithm will return $v\geq f_t(\tilde{P}_{A_t},i)$. Further if $f_t(\tilde{P}_{A_t},i)\neq \infty$, $C\subseteq C'$, and $S_{E^{\chi(t)}_T}\cap C' = \emptyset$, then it will return $v= f_t(\tilde{P}_{A_t},i)$. The time taken by the algorithm is $\cO((s)^{\cO(k)}n^{\cO(1)})$ since $\mathcal{D}$ has at most $\cO((s)^{\cO(k)}\log n)$ sets and each set can be computed in $k^{\cO(k)}n^{\cO(1)}$ time from Lemma~\ref{NiceDecompositionPreprocessingLemma} and Lemma~\ref{FinalKnapsackStyleDP}. 
This proves Lemma~\ref{DPSubLemma}
\end{proof} 

\section{Combining all the Pieces: Proof of Theorem~\ref{thm:mainFPTAS} }
\label{section:combining}
In this section we conjure all the pieces we obtained so far and give a proof of our main result (Theorem~\ref{thm:mainFPTAS}).
\begin{proof}[{Proof of Theorem~\ref{thm:mainFPTAS}}]
Let $G, k, \epsilon$ be the input to \mkk. Also let $|V(G)|=n$. 
 We will output a partition $\tilde{P}$ with weight $v$  such that $v\leq (1+ \epsilon) \opt(G,k)$ with probability at least $\prup$. We first check if $\cc(G)\geq k$. If this is true then we know that $G$ has optimum $k$-cut of value $0$ and we can return $G$ itself. If $\epsilon <\frac{1}{n}$ then we run the  best 
known exact algorithm on $(G,k)$ running in time $n^{\cO(k)}$ and return the exact answer~\cite{KargerS96,Thorup08,DBLP:journals/corr/abs-1906-00417}. Clearly, 
$n^{\cO(k)}$ is $(k/\epsilon)^{\cO(k)} n^{\cO(1)}$ in this case.  So from now onwards we assume that 
$\cc(G)\leq k$ and $\epsilon \geq \frac{1}{n}$. Also, we set $\epsilon' = \frac{\epsilon}{10}$.

Now we apply a standard rounding procedure (similar to the well-known Knapsack PTAS \cite{DBLP:books/daglib/0015106}) that  reduces the problem in a $(1+\epsilon')$-approximation preserving manner to an unweighted multi-graph with at most $m^2/\epsilon$ edges. This implies that now the graph has at most $n^5$ edges (counting multiplicities). Let this graph be $G^\star$. 
Next we apply Lemma~\ref{RandLemma1} and obtain a subgraph $G_1$ of $G^\star$ with $V(G')=V(G^\star)$, such that 
$q=|E(G^\star)|-|E(G_1)|\leq 2\epsilon'\cdot \opt(G^\star,k)$. Further,  if $\cc(G_1)<k$,  then
 each non-trivial $2$-cut  of $G_1$ has weight at least $\frac{\epsilon' \cdot  \opt(G^\star,k)}{k-1}$. If $\cc(G_1)=k$, then we return the connected components as a partition  $\tilde{P}$. The cost of returned solution is 
 
 $$v \leq 2\epsilon'\cdot \opt(G^\star,k) \leq  2\epsilon'\cdot  (1+\epsilon')\opt(G,k) \leq (1+\epsilon) \opt(G,k).$$

Thus, we assume that $\cc(G_1)<k$ and each non-trivial $2$-cut  of $G_1$ has weight at least $\frac{\epsilon' \cdot  \opt(G^\star,k)}{k-1}$. 
 Now we apply the sparsification procedure described in Lemma~\ref{RandLemma2} and obtain a subgraph  $G_2$ with $V(G_2)=V(G_1)$, and a real number $r$  
 such that with probability at least $\prup$, for all $k$-cuts $\tilde{P}$ in $G_1$, 
 $(1-\epsilon')\cdot w(G_1,\tilde{P})\leq w(G_2,\tilde{P})\cdot r \leq (1+\epsilon')\cdot w(G_1,\tilde{P})$. Here, $r=1/p$, where $p=\frac{100 \cdot \log n}{\epsilon'^2\cdot \opt(G_1,2)}$
 However, in $G_2$ we know that 
\begin{eqnarray*}
\opt(G_2,k) & \leq & (1+\epsilon') \cdot p \cdot \opt(G_1,k) \\
& = &  \frac{(1+\epsilon')100 \log n}{\opt(G_1,2) \cdot \epsilon'^2} \cdot \opt(G_1,k) \\
& \leq &  \frac{(1+\epsilon')100 \log n}{\opt(G_1,2) \cdot \epsilon'^2} \cdot \opt(G^\star,k)\\
& \leq &  \frac{(1+\epsilon') \cdot k \cdot 100 \log n}{\epsilon'^3}\\
& \leq & \frac{\beta \cdot k \cdot 100 \log n}{\epsilon^3}.
\end{eqnarray*}
Here, $\beta$ is a fixed constant. Further,  in the second last transition we used the assumption that $\opt(G_1,2) \geq \epsilon' \cdot \opt(G^\star,k)/k-1$. Now we apply the Theorem~\ref{thm:paraAlg} on $G_2$ and for each 
$s\in \left\{1,\ldots,  \frac{\beta \cdot k \cdot 100 \log n}{\epsilon^3} \right\}$ and solve the problem exactly in time $s^{\cO(k)}n^{\cO(1)}$. 
Thus the running time of the algorithm is upper bounded by 
\begin{eqnarray*}
\left(\frac{\beta \cdot k \cdot 100 \log n}{\epsilon^3}\right)^{\cO(k)}n^{\cO(1)}& = & (k/\epsilon)^{\cO(k)} (\log n)^{\cO(k)}n^{\cO(1)}
 \leq  (k/\epsilon)^{\cO(k)} (k^{\cO(k)}+n) n^{\cO(1)}\\
& \leq &2^{\cO(k \log (\frac{k}{\epsilon}) ) }}n^{\cO(1). 
\end{eqnarray*}
This completes the running time analysis. All that remains is to show that what we get is an $(1+\epsilon)$-approximation algorithm. Let $s$ be the minimum value for which the algorithm described in Theorem~\ref{thm:paraAlg} returns yes. This implies that $s=\opt(G_2,k)$. Let $\tilde{P}$ be the corresponding partition. We return $s\cdot r+q$ as value of the cut and 
$\tilde{P}$ as a solution. Here, the value is the sum of edges deleted when we obtained $G_1$ from $G^\star$ and the value returned by Theorem~\ref{thm:paraAlg}  when ran on $(G_2,k,s)$. Now we have that 

\begin{eqnarray*}
\opt(G_2,k) \cdot r +q& \leq &   (1+\epsilon')\cdot \opt(G_1,k) + 2\epsilon'\cdot \opt(G^\star,k) \\
 &\leq & (1+\epsilon')\cdot \opt(G^\star,k) + 2\epsilon'\cdot \opt(G^\star,k) \\
  &\leq & (1+3\epsilon' )\cdot \opt(G^\star,k)  \\
  & \leq & (1+3\epsilon') \cdot (1+\epsilon') \cdot  \opt(G,k) \\
  & \leq & (1+\epsilon)\cdot \opt(G,k) 
\end{eqnarray*}

The correctness of the algorithm follows from Lemmas~\ref{RandLemma1},~\ref{RandLemma2} and Theorem~\ref{thm:paraAlg}. This concludes the proof. 
\end{proof}
\section{Conclusion}
\label{section:conclusion}
In this paper we gave a parameterized approximation algorithm with best possible approximation guarantee, and best possible running time dependence on said guarantee (upto \ETH and constants in the exponent) for \mkk. 
In particular, for every 
$\epsilon > 0$, the algorithm computes a $(1 +\epsilon)$-approximate solution in time $(k/\epsilon)^{\cO(k)}n^{\cO(1)}$. Along the way we also obtained a new exact algorithm with running time $s^{\cO(k)}n^{\cO(1)}$  on 
unweighted (multi-) graphs, where $s$ denotes the number of edges in a minimum $k$-cut. For an even more complete understanding of the parameterized apperoximation complexity of \mkk one could explore the possibility of lossy kernels of polynomial size in $s$ or $k$, and ratios between $2-\epsilon$ and $1+\epsilon$. Finally, an intriguing open problem about the parameterized complexity of \mkk is whether the problem admits an algorithm with running time $2^{\cO(s)}n^{\cO(1)}$.



\bibliographystyle{alpha}
\bibliography{IEEEabrv,lean}
\end{document}